%% file: main.tex
\documentclass[runningheads]{llncs}
\ifdefined\fullversion
\else
\def\fullversion{0}    
\fi

 \def\anonymous{0}      

\input{tex_files/ZZ_header}
\input{tex_files/ZZfont_header}

\title{
Regular Syndrome Decoding Is Easier Than Previously Estimated 
}
\titlerunning{
RSD Is Easier Than Previously Estimated
}

\ifnum\anonymous=0
\author{
  Jiseung Kim\inst{1} \and
  Minkyu Kim\inst{2} \and
  Changmin Lee\inst{3} \and
  Hochang Lee\inst{2}
}
\institute{Jeonbuk National University \\
  \href{mailto:jiseungkim@jbnu.ac.kr}{jiseungkim@jbnu.ac.kr} \and
  The Affiliated Institute of ETRI \\
  \href{mailto:mkkim@nsr.re.kr}{mkkim@nsr.re.kr}, 
  \href{mailto:mkkim@nsr.re.kr}{lhc254@nsr.re.kr} \and
  Korea University \\
  \href{mailto:changminlee@korea.ac.kr}{changminlee@korea.ac.kr}
}
\else
\author{} 
\institute{}
\fi

\begin{document}
  \maketitle

\begin{abstract}
The Regular Syndrome Decoding (RSD) problem is a structured variant of classical Syndrome Decoding and underlies a growing number of efficient cryptographic constructions, including zero-knowledge protocols, pseudorandom correlation generators, and MPC-in-the-Head signatures. Although RSD is generally believed to be computationally hard, recent works have shown that algebraic attacks can outperform information-set decoding on important RSD parameter regimes.

Building on this line of work, we revisit the concrete hardness of RSD and show that many widely used parameter sets are substantially easier to solve than previously estimated. 
This improvement stems from a refined hybrid-algebraic attack that exposes and removes
a previously overlooked source of concrete complexity: the joint blockwise placement of zero-coordinate guesses and syndrome-based variable
eliminations. Rather than treating the two guessing stages and linear
elimination as largely separate choices, our method coordinates them
across all regular blocks so that the guessing penalty is balanced while
sufficient equations are retained for linearization. This gives a systematic
block-balancing principle for the entire guess–reduce–solve pipeline. 

We reassess several RSD-based applications, including Ferret (CCS~2020), Wolverine (IEEE S\&P~2021), and committed VOLE (CCS~2025). Our estimates place at least one parameter set per application below its 128-bit target, with losses approaching 67~bits. On PCG benchmarks, we improve previous best attack costs by up to 29.77 and 25.91~bits over non-binary and binary fields, respectively. Moreover, for the revised 128-bit PCG parameter sets of Liu et al. (EUROCRYPT~2024) evaluated here, estimated attack costs are all below $2^{111}$ field operations.
These results substantially revise the concrete-security landscape of RSD and indicate that several existing parameter choices overestimate the hardness of their underlying RSD instances.

\keywords{regular syndrome decoding
$\,\cdot\,$ algebraic attack
$\,\cdot\,$ linearization.}
\end{abstract}

\input{tex_files/body}

\ifnum\fullversion=0
    \bibliographystyle{bib_files/splncs04}
\else
    \bibliographystyle{bib_files/alpha-short}
\fi
\bibliography{bib_files/abbrev3,bib_files/add}

\appendix
\renewcommand{\thesection}{Supplementary \Alph{section}} 
\renewcommand{\thefigure}{\thesection.\arabic{figure}}   
\setcounter{figure}{0}
\newpage
\input{tex_files/Appendix}
\end{document}

%% file: tex_files/ZZ_header.tex
\usepackage[utf8]{inputenc}
\usepackage[T1]{fontenc}
\usepackage{hyperref}
\usepackage{verbatim}
\usepackage{tikz}
\usetikzlibrary{positioning, arrows.meta, calc, fit, backgrounds, decorations.pathreplacing}

\usepackage{makecell}

\definecolor{coralfill}{HTML}{FAECE7}
\definecolor{coralstroke}{HTML}{993C1D}
\definecolor{tealfill}{HTML}{E1F5EE}
\definecolor{tealstroke}{HTML}{0F6E56}
\definecolor{grayfill}{HTML}{F1EFE8}
\definecolor{graystroke}{HTML}{5F5E5A}
\definecolor{purplefill}{HTML}{EEEDFE}
\definecolor{purplestroke}{HTML}{534AB7}

\tikzset{
  block/.style={rounded corners=4pt, minimum height=1cm, draw, thick, font=\small},
  coral/.style={block, fill=coralfill, draw=coralstroke},
  teal/.style={block, fill=tealfill, draw=tealstroke},
  gray/.style={block, fill=grayfill, draw=graystroke},
  purple/.style={block, fill=purplefill, draw=purplestroke},
  arr/.style={-{Stealth[length=5pt]}, thick, gray!70!black},
  dashedbox/.style={rounded corners=4pt, draw=gray!50, dashed, thick},
}

\usepackage{xspace}
\usepackage{amsmath} 
\usepackage{amssymb}
\usepackage{mathtools}
\usepackage{pifont}
\usepackage{etoolbox}
\usepackage[normalem]{ulem}
\usepackage{booktabs}
\usepackage{array}
\usepackage[capitalise,noabbrev]{cleveref}
\usepackage{cite}
\usepackage{multibib}
\usepackage{url}
\usepackage{algorithm}
\usepackage{algpseudocode}
\usepackage{paralist}
\usepackage{mathrsfs}
\usepackage{relsize}
\usepackage{stmaryrd}
\usepackage[textsize=small]{todonotes}
\usepackage{multirow}
\usepackage[lambda,n,operators]{cryptocode}

\usepackage{graphicx}
\usepackage{kotex}

\newtoggle{notes}
\toggletrue{notes} 

\newcommand{\fix}{\textup{pre}}

\newcommand{\Z}{\mathbb{Z}}
\newcommand{\F}{\mathbb{F}}
\newcommand{\N}{\mathbb{N}}

\makeatletter
\DeclareFontFamily{OMX}{MnSymbolE}{}
\DeclareSymbolFont{MnLargeSymbols}{OMX}{MnSymbolE}{m}{n}
\SetSymbolFont{MnLargeSymbols}{bold}{OMX}{MnSymbolE}{b}{n}
\DeclareFontShape{OMX}{MnSymbolE}{m}{n}{
    <-6>  MnSymbolE5
   <6-7>  MnSymbolE6
   <7-8>  MnSymbolE7
   <8-9>  MnSymbolE8
   <9-10> MnSymbolE9
  <10-12> MnSymbolE10
  <12->   MnSymbolE12
}{}
\DeclareFontShape{OMX}{MnSymbolE}{b}{n}{
    <-6>  MnSymbolE-Bold5
   <6-7>  MnSymbolE-Bold6
   <7-8>  MnSymbolE-Bold7
   <8-9>  MnSymbolE-Bold8
   <9-10> MnSymbolE-Bold9
  <10-12> MnSymbolE-Bold10
  <12->   MnSymbolE-Bold12
}{}

\let\llangle\@undefined
\let\rrangle\@undefined
\DeclareMathDelimiter{\llangle}{\mathopen}%
                     {MnLargeSymbols}{'164}{MnLargeSymbols}{'164}
\DeclareMathDelimiter{\rrangle}{\mathclose}%
                     {MnLargeSymbols}{'171}{MnLargeSymbols}{'171}
\makeatother


%% file: tex_files/ZZfont_header.tex
\newcommand{\txP}{\textup{P}}
\newcommand{\txT}{\textup{T}}

\newcommand{\calF}{\mathcal F}

\newcommand{\calO}{\mathcal O}

\newcommand{\bfA}{\mathbf A}

\newcommand{\bfD}{\mathbf D}
\newcommand{\bfE}{\mathbf E}

\newcommand{\bfH}{\mathbf H}
\newcommand{\bfI}{\mathbf I}

\newcommand{\bfL}{\mathbf L}
\newcommand{\bfM}{\mathbf M}

\newcommand{\bfO}{\mathbf O}

\newcommand{\bfQ}{\mathbf Q}

\newcommand{\bfa}{\mathbf a}
\newcommand{\bfb}{\mathbf b}
\newcommand{\bfc}{\mathbf c}
\newcommand{\bfd}{\mathbf d}
\newcommand{\bfe}{\mathbf e}

\newcommand{\bfs}{\mathbf s}
\newcommand{\bft}{\mathbf t}

\newcommand{\bfx}{\mathbf x}

\newcommand{\bfz}{\mathbf z}

\newcommand{\sfB}{\mathsf B}

\newcommand{\sft}{\mathsf t}

%% file: tex_files/body.tex

\section{Introduction}\label{sec:introduction}

The regular syndrome decoding (RSD) problem is a structured variant of the classical syndrome decoding (SD) problem and has emerged as an increasingly important hardness assumption in modern cryptography. Its appeal comes from the additional structure imposed on the error vector: rather than allowing the nonzero coordinates to occur at arbitrary positions, RSD requires them to be distributed regularly across prescribed blocks. This structure can be exploited by protocol designers to obtain substantially more efficient constructions than those based on an unstructured decoding problem, while retaining a decoding-based security assumption.

As a result, RSD and closely related regular-noise assumptions now appear in a broad range of cryptographic primitives and protocols. They underlie pseudorandom correlation generators (PCGs), including constructions for efficiently generating correlated randomness and multiparty Beaver triples~\cite{CCS:BCGI18,C:KPRR25,AC:MMST25}; correlated oblivious transfer and VOLE-based protocols, including Ferret and committed VOLE~\cite{CCS:YWLZW20,CCS:SLYYWW25}; zero-knowledge protocols such as Wolverine~\cite{SP:WYKW21}; and secure multiparty computation frameworks such as TinyKeys~\cite{C:HOSS18,JC:HOSS22}. RSD has also become particularly relevant to post-quantum signatures, where it is used in several constructions based on the MPC-in-the-Head paradigm~\cite{EC:CarCouJou23,NISTPQC-ADD-R2:SDitH24,PKC:CLYYYZ24,AC:BCCGJ24,AC:OuyTanXu24}.Thus, the concrete hardness of RSD is no longer relevant to a single cryptographic construction: security estimates for RSD directly affect the parameter selection and security claims of a diverse collection of
cryptographic systems.

This growing use of RSD makes its concrete security particularly important. A moderate improvement in the best known algorithm for solving RSD can translate directly into a loss of security for several seemingly unrelated protocols that instantiate the same underlying hardness assumption. Consequently, understanding the precise cost of solving concrete RSD instances, rather than considering only its asymptotic hardness, is essential for assessing the security margins of existing constructions.

In the standard syndrome decoding problem, one is given a linear code defined by a parity-check matrix $\bfH \in \F_q^{(n-k) \times n}$ and a syndrome vector $\bfs \in \F_q^{n-k}$. The goal is to find an error vector $\bfx \in \F_q^n$ such that $\bfH \bfx = \bfs$ and $\bfx$ has Hamming weight $h$, i.e., exactly $h$ of its coordinates are nonzero.
The RSD problem imposes an additional regularity constraint on the support of $\bfx$. The coordinate set ${1,\ldots,n}$ is partitioned into $h$ disjoint blocks of size $\beta := n/h \in \Z$, and exactly one coordinate in each block is required to be nonzero. Thus, whereas standard syndrome decoding allows the support to be chosen arbitrarily among all weight-$h$ vectors, RSD restricts the support to one position per block. This additional combinatorial structure is the defining feature that distinguishes RSD from ordinary syndrome decoding and is the main source of the specialized algebraic techniques considered in this work.


\newpage
\subsection{Algebraic attacks on the RSD problem}

Information-set decoding (ISD) and its variants are the standard combinatorial tools to solve syndrome decoding problems~\cite{IRE:Pra62,CTA:Ste89,AC:MayMeuTho11,EC:BJMM12}. For RSD, however, algebraic attacks can become competitive and can even outperform ISD in some low code rate parameter regimes because the regularity constraints admit a compact polynomial representation. In particular, if two coordinates belong to the same block, regularity implies quadratic relations of the form $x_i x_j=0$~\cite{EC:BriOyg23}. Together with the linear syndrome equation, these relations define a multivariate polynomial system whose solution reveals the regular error vector.

The direct algebraic attack quickly becomes expensive as the number of variables grows. Consequently, the most effective attacks use a hybrid strategy: they first guess part of the support information, reduce the number of variables with the syndrome equations, and then solve the resulting polynomial system by algebraic techniques such as linearization or Gr\"obner-basis methods. The central concrete trade-off is therefore not only how many coordinates are guessed, but also where those guesses and eliminations occur across the regular blocks.

Wang et al.~\cite{AC:WWYLYZ25} recently gave the strongest hybrid algebraic attack for many RSD parameters considered in this work. Their method follows a guess--reduce--solve paradigm: it first guesses error-free positions, uses Gaussian elimination on the syndrome equation to eliminate variables, and then applies a second hybrid algebraic step to the reduced quadratic system. The second guessing stage, however, is concentrated in a restricted set of blocks, while the variables eliminated by the syndrome equations are also allocated unevenly. As we show below, this blockwise allocation creates an avoidable loss in guessing probability and leaves substantial room to reduce the concrete attack complexity.

\subsection{Our contribution}

Our main contribution is a stronger hybrid-algebraic algorithm for concrete RSD instances. The key novelty is a new way to \emph{coordinate the existing primitives}: zero guessing, syndrome-based elimination, preprocessing, and the final algebraic solve are treated as coupled blockwise operations rather than optimized in isolation. This viewpoint reveals that a substantial part of the complexity of prior attacks comes from an avoidable imbalance in where these operations are placed across the regular blocks.

\subsubsection{The source of the previous complexity loss}
A hybrid algebraic attack benefits from guessing error-free coordinates because the resulting polynomial system contains fewer variables. In RSD, however, each zero guess also removes quadratic equations of the form $x_i x_j=0$ from the same block.  Consequently, the cost of the attack is governed by a trade-off between the probability of a correct guess and the number of equations that remain available after the guess.

This trade-off depends strongly on how the guesses are distributed among the regular blocks.  For example, guessing two zero coordinates in one block succeeds with probability $1-2/\beta$ and removes $2\beta-3$ quadratic equations, whereas guessing one zero coordinate in each of two distinct blocks succeeds with probability $(1-1/\beta)^2>1-2/\beta$ while removing $2(\beta-1)$ equations. Thus, concentrating guesses may preserve slightly more equations, but it can incur a substantially larger guessing penalty.

Re-examining the attack of Wang et al.~\cite{AC:WWYLYZ25},
we identify exactly this effect at scale: the second zero-guessing stage and the syndrome-based eliminations occupy an unbalanced subset of blocks. Because these operations compete for the same blockwise coordinates, their interaction amplifies the guessing penalty. This is a structural inefficiency of the attack schedule, rather than a limitation of linearization itself.

\subsubsection{A globally balanced guess--reduce--solve technique}
Our main algorithmic contribution is a globally balanced organization of the entire guess--reduce--solve pipeline. Previous approaches optimize or place the individual stages largely in isolation, which can lead to a substantial loss in the overall guessing probability. In contrast, our algorithm jointly coordinates all operations that determine the final polynomial system: the first zero guesses, the variables eliminated using the syndrome equations, the variables involved in preprocessing, and the second zero guesses. These operations are distributed across the regular blocks as evenly as the discrete parameters permit.

The key point is that the gain does not come from simply increasing or decreasing the number of guessed coordinates, but from changing their interaction with syndrome elimination at the block level. By balancing the syndrome eliminations, our method preserves more uniformly sized candidate sets for the subsequent zero-guessing stage. The AM--GM principle underlying our probability analysis then shows that this joint balancing improves the overall success probability while retaining sufficiently many equations for linearization. Thus, the proposed strategy removes a source of concrete complexity that remains even when the individual ingredients of the hybrid attack are already well optimized.

We complement this algorithmic improvement with two refinements that further sharpen the concrete complexity. First, we revisit the preprocessing cost and identify operations within the preprocessing procedure that are essentially free. Explicitly accounting for these operations yields a tighter estimate of the actual preprocessing cost, without changing the underlying preprocessing algorithm. Second, we exploit the overdetermined structure of the final random linear system more precisely, allowing incorrect guesses to be rejected with fewer excess equations.

These contributions improve the attack at complementary levels. The global blockwise balancing substantially increases the probability of reaching the correct final system, while the refined complexity analysis reduces the estimated algebraic cost of processing each resulting system. Their combination leads to the significant concrete complexity reductions reported in Table~\ref{tbl:SummaryResult} and Section~\ref{sec:application}.

The source of these improvements is already visible in a representative benchmark instance. Consider the RSD instance with $n=2^{20}$, $k=32771$, and $h=2467$ over $\mathbb{F}_{2^{128}}$. As reported in Table~\ref{tbl:ComparingResult1}, the previous best hybrid algebraic attack has an estimated complexity of $2^{114.27}$ field operations, whereas our attack requires only $2^{84.50}$. 
Notably, the two attacks guess almost the same number of zero coordinates: the previous attack uses 11,819 zero guesses, as reported in Table~5 of \cite{AC:WWYLYZ25}, while our optimized attack uses 11,826, as reported in Table~\ref{tbl:SummaryResult}.
Thus, the improvement comes primarily from how these guesses are distributed across the regular blocks, rather than from the number of guesses itself. For the corresponding zero-guessing distributions, the probability that all guesses are correct is approximately $2^{-62.18}$ for the previous attack and $2^{-40.37}$ for our balanced strategy; the general success-probability analysis is given in Section~\ref{ssec:algorithmNonbinary}. This difference alone accounts for about $21.8$ of the $29.8$ bits of complexity reduction. The remaining gain comes from the reduced algebraic cost resulting from balanced variable elimination, refined preprocessing, and a tighter treatment of the final overdetermined system.

\subsubsection{Concrete security consequences}
We reassess RSD instances used as concrete hardness assumptions in seven cryptographic applications. In every application, either all or a subset of the parameter sets targeting 128-bit security fall below their claimed level under our estimates, with security losses reaching up to 66 bits. For the refined 128-bit parameter sets of~\cite{EC:LWYY24}, none achieves the claimed security level: the estimated losses range from 27 to 47 bits over non-binary fields and from 17 to 48 bits over binary fields.

The impact of our refinements is also evident on benchmark RSD instances. For parameter sets from~\cite{CCS:BCGI18,EC:LWYY24} that have repeatedly served as benchmarks in subsequent cryptanalytic work~\cite{EC:BriOyg23,EPRINT:KimLee24,C:EssSan24,AC:WWYLYZ25}, our estimates lower the previous best attack complexity by up to 29 bits over non-binary fields and 25 bits over binary fields. The refined attack gives the lowest complexity for all listed benchmark instances except two binary cases. In particular, although R-ISD previously outperformed algebraic attacks on 10 of the 14 binary benchmark parameters, the refined algebraic attack is best on 12 of the 14 parameters considered here.

Table~\ref{tbl:SummaryResult} summarizes these concrete complexity estimates together with the corresponding optimized algorithmic parameters. Section~\ref{sec:application} gives the full comparison and discusses the implications for the affected constructions.


These results shift the emphasis from proposing another RSD solver to revising the current understanding of the concrete hardness of RSD. In the parameter regimes considered here, the gap between previously reported attacks and the refined analysis is large enough to invalidate all or part of the claimed secure parameter sets in each application we examine.

\begin{table}[htbp]
	\centering
    \scriptsize
    \caption{
    Summary for complexity estimates for the various parameter sets and corresponding algorithmic parameters.
    The algorithmic parameters are specified in Section~\ref{sec:algorithmNonbinary}, while the symbols $\wr$, $\dagger$, and $\ddagger$ are defined in Section~\ref{sec:application}.
    }
    \label{tbl:SummaryResult}
        \begin{tabular}{
                >{\centering}p{1.5cm} |
                >{\centering}p{1.1cm} >{\centering}p{1.1cm} >{\centering}p{1.1cm} |
                >{\centering}p{1.3cm} | 
                >{\centering}p{1.1cm} >{\centering}p{3.1cm}
            }
            \hline
            Source 
            & $n$ & $k$ & $h$
            & $q$
            & our & $(\eta_{1},\eta,\eta_{\fix})$\tabularnewline\hline\hline
            \multirow{8}{*}{\makecell{Stationary \\ SD~\cite{C:KPRR25}}}
            & $2^{16}$ & $2^{15}$ & 69  & \multirow{4}{*}{$2^{128}$} & 97.90 & (11011,28235,4477) \tabularnewline
            & $2^{14}$ & $2^{13}$ & 74  & & 95.57 & (3948,7092,1069) \tabularnewline
            & $2^{12}$ & $2^{11}$ & 90  & & 101.90 & (1301,1768,233) \tabularnewline
            & $2^{10}$ & $2^{9}$  & 107 & & 104.13 & (339,404,46)    \tabularnewline\cline{2-7}
            & $2^{16}$ & $2^{15}$ & 69  & \multirow{4}{*}{2} & 97.58 & (13687,28156,4487) \tabularnewline
            & $2^{14}$ & $2^{13}$ & 74  & & 94.52 & (4398,7007,1080) \tabularnewline
            & $2^{12}$ & $2^{11}$ & 90  & & 96.46 & (1277,1667,244) \tabularnewline
            & $2^{10}$ & $2^{9}$  & 107 & & 76.40$^{\dagger}$ & (229,286,57) \tabularnewline\hline
            \multirow{9}{*}{\makecell{PCGs \\ Beaver \\ \cite{AC:MMST25} }}
            & $2^{20}$ & $2^{18}$ &       47 & \multirow{9}{*}{2} &  73.97 & (262086,262086,0) \tabularnewline
            & $2^{21}$ & $2^{18}$ &       31 & &  64.01 & (262107,262107,0) \tabularnewline
            & $2^{21}$ & $2^{19}$ &       45 & &  76.07 & (524221,524221,0) \tabularnewline
            & $2^{22}$ & $2^{19}$ &       30 & &  66.76 & (524250,524250,0) \tabularnewline
            & $2^{22}$ & $2^{20}$ &       44 & &  78.58 & (1048523,1048523,0) \tabularnewline
            & $2^{23}$ & $2^{21}$ &       42 & &  80.67 & (2097073,2097073,0) \tabularnewline
            & $2^{24}$ & $2^{22}$ &       41 & &  83.05 & (4194243,4194243,0) \tabularnewline
            & $2^{25}$ & $2^{23}$ &       39 & &  85.02 & (8388563,8388563,0) \tabularnewline
            & $2^{26}$ & $2^{24}$ &       37 & &  86.99 & (16777173,16777173,0) \tabularnewline\hline
            \multirow{5}{*}{\makecell{ Committed \\ VOLE \\ \cite{CCS:SLYYWW25,eprint:SLYYWW25}}}
            & 1456 & 760 & 91 & \multirow{5}{*}{$2$} 
            & 90.89$^{\dagger}$ & (484,585,83) \tabularnewline
            & 1472 & 762 & 92 &
            & 90.70$^{\dagger}$ & (483,585,84) \tabularnewline
            & 1488 & 765 & 93 &
            & 90.66$^{\dagger}$ & (483,586,85) \tabularnewline
            & 1272 & 433 & 159 & 
            & 61.90$^{\dagger}$ & (134,198,75) \tabularnewline
            & 1296 & 434 & 162 &
            & 61.02$^{\dagger}$ & (128,194,77) \tabularnewline\hline
            \multirow{2}{*}{\makecell{Ferret \\ \cite{CCS:YWLZW20}}}
            & 609728 & 36288 & 1269 & \multirow{2}{*}{$2$} 
            & 98.38 & (5296,17832,16578) \tabularnewline
            & 10805248 & 589760 & 1319 &
            & 110.23 & (16035,299181,289259) \tabularnewline\hline
            \multirow{2}{*}{\makecell{Wolverine \\ \cite{SP:WYKW21}}}
            & 642048 & 19870 & 2508 & \multirow{2}{*}{$2^{61}\!\!-\!\!1$}
            & 82.78 & (230,7221,12648) \tabularnewline
            & 10805248 & 589760 & 1319 &
            & 110.23 & (6958,300538,289221) \tabularnewline\hline 
            \multirow{1}{*}{\makecell{{\tiny TinyKeys}\cite{JC:HOSS22}}}
            & 1280 & 860 & 80 & 2
            & 118.60$^{\ddagger}$ & (664,726,53)\tabularnewline\hline
            \multirow{1}{*}{\makecell{{\tiny MPCitH}~\cite{AC:BCCGJ24}}}
            & 1736 & 776 & 217 & 2
            & 125.63$^{\dagger}$ & (434,484,74)\tabularnewline\hline
        \end{tabular}

        \begin{tabular}{
                >{\centering}p{1.5cm} |
				>{\centering}p{0.7cm} >{\centering}p{0.7cm} >{\centering}p{0.7cm} | 
                >{\centering}p{1.0cm} >{\centering}p{2.5cm} |
                >{\centering}p{1.0cm} >{\centering}p{2.5cm}
			}
            \multicolumn{8}{c}{\vspace{-3pt}}\\ 
            \hline
            \multirow{2}{*}{Source} 
            & \multirow{2}{*}{$n$}
            & \multirow{2}{*}{$k$}
            & \multirow{2}{*}{$h$}
            & \multicolumn{2}{c|}{$q=2^{128}$} 
            & \multicolumn{2}{c}{$q=2$}\\\cline{5-8}
            & & & 
            & our & $(\eta_{1},\eta,\eta_{\fix})$
            & our & $(\eta_{1},\eta,\eta_{\fix})$
            \tabularnewline\hline\hline
            \multirow{14}{*}{\makecell{PCGs \\ \cite{CCS:BCGI18,EC:LWYY24}}}
            & $2^{22}$ & 64770 & 4788 &  86.70 & (0,6904,57849)  &  85.07 & (0,20803,39162) \tabularnewline
            & $2^{20}$ & 32771 & 2467 &  84.50 & (0,11826,20843)  &  79.53 & (0,9307,20895)  \tabularnewline
            & $2^{18}$ & 15336 & 1312 &  93.45 & (2211,7293,6986)&  83.73 & (1220,5944,7023)\tabularnewline
            & $2^{16}$ & 7391  & 667  & 110.45 & (2752,4894,2326)&  99.78 & (2278,4200,2353)\tabularnewline
            & $2^{14}$ & 3482  & 338  & 115.96 & (1746,2592,729)  & 103.64 & (1510,2234,749)  \tabularnewline
            & $2^{12}$ & 1589  & 172  & 120.21 & (984,1251,197)   & 104.25 & (836,1065,211)   \tabularnewline
            & $2^{10}$ & 652   & 106  & 150.32$^{\wr}$ & (509,550,31)     & 113.86$^{\ddagger}$ & (396,434,41)     \tabularnewline\cline{2-8}
            & $2^{22}$ & 64770 & 2735 &  86.51 & (0,0,0)         &  83.10 & (0,35404,25081) \tabularnewline
            & $2^{20}$ & 32771 & 1419 &  80.86 & (0,5403,26014)  &  77.74 & (0,15701,14296) \tabularnewline
            & $2^{18}$ & 15336 & 760  &  74.68 & (0,5359,9272)   &  71.64 & (0,6600,7271)   \tabularnewline
            & $2^{16}$ & 7391  & 389  &  73.97 & (154,4114,3092) &  70.35 & (221,3704,3113) \tabularnewline
            & $2^{14}$ & 3482  & 198  &  77.83 & (790,2357,976)  &  73.56 & (811,2143,992)  \tabularnewline
            & $2^{12}$ & 1589  & 98   &  80.06 & (702,1237,273)   &  74.75 & (680,1127,285)   \tabularnewline
            & $2^{10}$ & 652   & 57   &  89.31 & (450,547,49)     &  78.66$^{\dagger}$ & (400,482,57)     \tabularnewline\hline
		\end{tabular}
        
        \begin{tabular}{
                    >{\centering}p{1.5cm} |
    				>{\centering}p{0.7cm} >{\centering}p{0.7cm} |
                    >{\centering}p{0.7cm} >{\centering}p{1.0cm} >{\centering}p{2.1cm} |
                    >{\centering}p{0.7cm} >{\centering}p{1.0cm} >{\centering}p{2.1cm}
    			}
            \multicolumn{9}{c}{\vspace{-3pt}}\\ 
            \hline
            \multirow{2}{*}{Source} 
            & \multirow{2}{*}{$n$}
            & \multirow{2}{*}{$h$}
            & \multicolumn{3}{c|}{$q=2^{128}$} 
            & \multicolumn{3}{c}{$q=2$}\\\cline{4-9}
            & & 
            & $k$ & our & $(\eta_{1},\eta,\eta_{\fix})$
            & $k$ & our & $(\eta_{1},\eta,\eta_{\fix})$
            \tabularnewline\hline\hline
            \multirow{6}{*}{\makecell{PCGs \\ \cite{EC:LWYY24}}}
            & $2^{22}$ & 4788 
            & 75396 & 87.93 & (0,15018,60361)
            & 80074 & 87.45 & (0,21484,53785)\tabularnewline
            & $2^{20}$ & 2467
            & 30978 & 80.90 & (0,9996,20880)
            & 33354 & 79.88 & (0,9902,20883)\tabularnewline
            & $2^{18}$ & 1312 
            & 14796 & 89.32 & (1643,6738,7001)
            & 15996 & 88.75 & (1910,6622,7005)\tabularnewline
            & $2^{16}$ & 667 
            & 6091 & 89.28 & (1346,3542,2378)
            & 7560 & 102.54 & (2461,4376,2346)\tabularnewline
            & $2^{14}$ & 338 
            & 2909 & 94.93 & (1096,1986,762)
            & 3655 & 109.99 & (1704,2417,739)\tabularnewline
            & $2^{12}$ & 172 
            & 1377 & 100.36 & (726,1022,214)
            & 1657 & 110.53 & (916,1138,206)\tabularnewline\hline
        \end{tabular}
\end{table}

\subsection{Related work}

Briaud and \O ygarden~\cite{EC:BriOyg23} initiated the modern study of algebraic attacks on RSD. Their analysis characterizes the witness degree of the corresponding polynomial systems and shows that algebraic techniques can outperform previously known approaches on several concrete parameters. Their attack already uses blockwise balancing for an initial round of zero guesses, but does not combine this balancing with syndrome-based variable elimination.

Subsequent work further developed hybrid algebraic techniques. AGB2.0~\cite{EPRINT:LWYYY25} employs two rounds of guessing and distributes zero guesses across blocks, while Kim and Lee~\cite{EPRINT:KimLee24} combine two guessing stages with algebraic reduction. These approaches improve the concrete cost in different parameter regimes, but either retain the full syndrome system or separate the two guessing stages in a way that limits the reuse of preprocessing work.

Wang et al.~\cite{AC:WWYLYZ25} provide the strongest previous algebraic estimates for many of the instances studied here by combining zero guessing, syndrome-based variable elimination, preprocessing, and a final hybrid algebraic solve. Our analysis starts from this framework and isolates a remaining source of concrete inefficiency: the relevant operations are not jointly balanced across the regular blocks. Correcting this imbalance, together with two smaller refinements in preprocessing and analysis of the final overdetermined system, yields the complexity reductions reported in this paper.

Combinatorial attacks remain an important point of comparison. Among the ISD-style algorithms relevant to our parameter regime, the regular-decoding techniques of Esser and Santini~\cite{C:EssSan24} provide the strongest competing estimates and are therefore included in our concrete comparisons.

\section{Preliminaries}\label{sec:preliminaries}

\subsection{Notation}\label{ssec:notation}

For prime power $q \in \N$, the finite field with $q$ elements is denoted by $\F_q$. For $a \in \N$, let us define $[a]$ as the set $\{1,\ldots,a\}$. For $i \in \Z$ and $h \in \N$, $[i]_h$ (resp. $i \bmod h$) denote the unique integers in $[h]$ (resp. $\{0,1,\ldots,h\!-\!1\}$) which is congruent to $i$ modulo $h$.

Matrices and vectors are denoted by boldface uppercase and lowercase letters, respectively. All vectors are represented as column vectors. Let $\bfA^\sft$ denote the transpose of $\bfA$. In particular, $\bfa^\sft$ is a row vector. We will often write $\bfx = (\bfx_1,\ldots,\bfx_l)$ to indicate that $\bfx$ is a vertical concatenation of $l$ column vectors $\bfx_i$. We also  write $\bfA = [\bfA_1 \vert \cdots \vert \bfA_l]$
to denote that $\bfA$ is a horizontal concatenation of sub-matrices $\bfA_i \in \F_q^{r \times n_i}$ where $n = \sum_{i \in [l]} n_i$. In particular, $\bfA = [\bfa_1 \vert \cdots \vert \bfa_n] \in \F_q^{r \times n}$, which means that $\bfA$ is a horizontal concatenation of column vectors $\bfa_i \in \F_q^r$. For a vector $\bfx = (x_1,\ldots,x_n)$, a matrix $\bfA = [\bfa_1 \vert \cdots \vert \bfa_n]$, and a set $I = \{i_1 < \cdots < i_t\} \subseteq [n]$, we define $\bfx_I = (x_{i_1},\ldots,x_{i_t})$
and $\bfA_I = [\bfa_{i_1} \vert \cdots \vert \bfa_{i_t}]$. Let $\abs{\bfx}$ denote the Hamming weight of $\bfx$, i.e., the number of nonzero coordinates of $\bfx$.

\subsection{The regular syndrome decoding problem}\label{ssec:rsd}

The main interest of this paper is to solve a variant of the syndrome decoding problem defined as follows.
\begin{definition}[Computational RSD problem]\label{def:rsd}
Given $\bfH \in \F_q^{(n-k) \times n}, \bfs \in \F_q^{n-k}$, 
the regular syndrome decoding (RSD) problem, denoted as $\operatorname{RSD}(n,k,h)$,
asks to find $\bfx = (\bfx_1,\ldots,\bfx_h) \in \F_q^n$ 
such that 
$\bfH\bfx = \bfs$ and $\abs{\bfx_i} = 1$ for $i \in [h]$
where $\bfx_i \in \F_q^{n/h}$.
\end{definition}
The case where $q = 2$ is referred to as binary $\operatorname{RSD}(n,k,h)$ to distinguish it from other cases.

The plain syndrome decoding (SD) problem, denoted $\operatorname{SD}(n,k,h)$, asks to find $\bfx \in \F_q^n$ such that $\bfH \bfx = \bfs$ and $\abs{\bfx}=h$. The SD problem is known as NP-complete~\cite{BMT78}. There is a provable security reduction from the SD problem to the RSD problem~\cite[Theorem~1]{C:FenJouRiv22}: given an SD instance $(\bfH, \bft)$, applying a random permutation in the columns of $\bfH$, one can obtain a RSD instance with probability ${n/h \choose 1}^h / {n \choose h}$. 
Therefore, if the error rate $rho:=h/n$ is sufficiently small,
the difficulty of the RSD problem can be guaranteed by the difficulty of the SD problem. However, if the code rate is also small at the same time, it becomes vulnerable to attacks specific to the RSD problem, and this is the parameter regime that we have in mind.

\subsection{Linear algebra over finite fields}\label{ssec:linearalgebra}

\subsubsection{Gaussian elimination}\label{sssec:gauss}

Gaussian elimination is the primary operation used in the algebraic attacks against the RSD problem, and the complexity of algebraic attacks varies significantly depending on the algorithm used to perform Gaussian elimination.

A $m \times N$ matrix reduction takes $\calO(m \!\cdot\! N \!\cdot\! \min\{m,N\})$ operations using the standard Gaussian elimination to reach reduced row echelon form. For more advanced methods like Strassen's algorithm, the complexity becomes $\calO(m^{c_{\textup{mm}}})$ for square matrices of size $m$, and when applied to non-square matrices, it becomes $\calO(\max\{m,N\}^{c_{\textup{mm}}})$, where $c_{\textup{mm}} \approx 2.8$. We do not consider algorithms with a smaller exponent (e.g. $c_{\textup{mm}} = 2.3713$ in \cite{SODA:ADWXXZ25}) since they perform less well due to their huge hidden constants. This is consistent with the choices made in previous RSD-related
works~\cite{CCS:BCGI18,EPRINT:KimLee24,AC:WWYLYZ25}.

If $m$ is sufficiently larger than the size of the finite field $q$, then the striped matrix elimination~\cite{AAM:AHM07} can be a better choice, and it has a complexity of $\calO(m^2/\log_q m)$ row operations, and $\calO(m^2 N / \log_q m)$ in total,
which is asymptotically close to being optimal. In particular, the striped matrix elimination is useful for performing Gaussian elimination over the binary field $\F_2$.

We shall use the notation $\textup{C}_{\textup{ge}}(n,M)$ to denote the minimum complexity of performing Gaussian elimination over $m\times N$ matrix over $\F_q$, $q \neq 2$, i.e. 
\begin{align*}
\textup{C}_{\textup{ge}}(m,N) 
= \min \big\{
mN\min(m,N),\,
\max(m,N)^{c_{\textup{mm}}}.
\big\}
\end{align*}
For the binary field case, it is represented by 
\begin{align*}
\textup{C}_{\textup{ge}}^{\textup{bin}}(m,N)
= \min \big\{
mN\min(m,N),\,
\max(m,N)^{c_{\textup{mm}}},\,
m^2N/\log_2 m
\big\}. 
\end{align*}

\subsubsection{Invertibility of random matrices}\label{sssec:invertibility}

Let $\operatorname{GL}(m,q)$ be a set of invertible $m \times m$ matrices over $\F_q$.
Then the number of elements in $\operatorname{GL}(m,q)$ is computed as $\abs{\operatorname{GL}(m,q)}
= \textstyle\prod_{i \in [m]} \left( q^m - q^{i-1} \right)$. Thus, the probability that a random $m \times m$ matrix is invertible is 
\begin{align*}
p(m,q) 
= \abs{\operatorname{GL}(m,q)} / q^{m^2}
= \textstyle\prod_{i \in [m]} \left( 1 - q^{i-m-1} \right).
\end{align*}
For a fixed $q$ and increasing $m$, the probability $p(m,q)$ converges to $p(\infty,q) := \prod_{i=1}^{\infty} \left( 1 - q^{-i} \right)$ and it can be expanded as 
\begin{align*}
p(\infty,q)
= \textstyle\sum_{i = -\infty}^{\infty}
(-1)^i q^{-\frac{i(3i-1)}{2}}
\end{align*}
by the Euler's Pentagonal theorem. As $q$ increases, $p(\infty,q)$ will increase monotonically to $1$. For example, $p(\infty,2) \approx 0.2889$, $p(\infty,3) \approx 0.5605$, 
and $p(\infty,13) \approx 0.9170$.

\subsubsection{Overdetermined linear system}\label{ssec:overdetermined}

We examine how much overdetermined a random system of linear equations would be to have no solution. Note that the following is a more refined analysis of the result presented in \cite{SOSA:BDT22} over general finite fields.

Assume that $m \ge N$. The image of a matrix $\bfM \in \F_q^{m \times N}$ contains a \emph{random} vector $\bfd \in \F_q^m$ with probability $q^{\operatorname{rank}(\bfM)- m}$. Since there are $q^{N-\operatorname{rank}(\bfM)}$ solutions when the linear system $\bfM\bfz = \bfd$ is consistent, the expected number of solutions $\alpha$ is bounded as
\begin{align*}
\alpha \le q^{\operatorname{rank}(\bfM)- m} 
\cdot q^{N-\operatorname{rank}(\bfM)} = q^{N-m}.
\end{align*}
On the other hand, the number of $m \times N$ matrices over $\F_q$ of rank $N$ is 
$\prod_{i \in [N]} \left( q^m - q^{i-1} \right)$, and so the probability that a random $m \times N$ matrix is full-rank $\epsilon$ is
\begin{align*}
\epsilon
= \textstyle \prod_{i \in [N]} \left( q^m - q^{i-1} \right) / q^{mN}
= \prod_{i \in [N]} \left( 1 - q^{i-m-1} \right)
\ge (1-q^{N-m})^N
\end{align*}
which converges to $1 - Nq^{-(m-N)} + \calO(N^2 q^{-2(m-N)})$ as $N \rightarrow \infty$. Note that $Nq^{-(m-N)} \ll 1$ in our parameter regime.

To summarize, we will set $m - N \ge \xi(N)$ so that a random matrix $\bfM$ is full-rank with overwhelming probability and the expected number of solutions of the linear system $\bfM \bfz = \bfd$ for a random $\bfb$ is upper-bounded by $q^{-\xi(N)}$. In particular, if the matrix is full-rank, then there is at most one solution. For example, when $q$ is exponentially large such as $q = 2^{128}$, we take $\xi(N) = c_{\textup{excess}}$ for some constant $c_{\textup{excess}}$.

\section{Refined hybrid algebraic attack over non-binary fields}\label{sec:algorithmNonbinary}

We first describe the refinements at a high level and explain why they reduce the concrete cost of the existing hybrid algebraic framework.

\subsection{Overview}\label{ssec:overview}

The refined attack addresses the regular syndrome decoding problem by transforming it into a quadratic multivariate system and subsequently applying a linearization technique. For the parameter regimes of interest, the resulting system is typically underdetermined for linearization: the number of monomials substantially exceeds the number of available equations. The hybrid strategy addresses this gap by guessing selected coordinates to be zero, thereby reducing the number of variables before linearization.

Let $\beta = \frac{n}{h} \in \mathbb{Z}$ and $\omega = \frac{n-k}{h} \in \mathbb{Z}$. Let $(\bfH,\bfs)$ be an instance of the $\textup{RSD}(n,k,h)$ problem. It can be translated into a quadratic multivariate system by exploiting the regularity condition. This initial quadratic system $\calF$ consists of $n$ variables and $n-k + h \binom{\beta}{2}$ equations as
\begin{align*}
    \bfH\bfx=\bfs,
    \qquad
    x_{i_1}x_{i_2}=0
\end{align*}
where $i_{1} \neq i_{2}$ belong to the same block.

The refined attack starts with a first zero-guessing step, in which $\mu_{1}$ variables in each block are guessed to be zero. The syndrome equation $\bfH\bfx = \bfs$ is used to eliminate $\omega$ variables from each block, for a total of $n-k$ variables. Next, $\nu$ variables in each block are selected as irrelevant variables for the second zero-guessing step. We then compute linear combinations of given $h\binom{\beta-\mu_{1}}{2}$ polynomials in which all monomials contain at most one irrelevant variable; it can be done by reducing the original polynomials in reduced row echelon form with an appropriate monomial ordering. Note that this type of preprocessing also has been used in previous algebraic attacks~\cite{NuTMic:JV17,SOSA:BDT22,AC:WWYLYZ25,EPRINT:LWYYY25}. The resulting system has $h\binom{\beta-\mu_{1}}{2} - hv(hv+1)/2$ equations of the form
\begin{align*}\textstyle
\sum_{i \in S_1}\! a_i\, x_i
+ \sum_{i \in S_1, j \in S_2}\! b_{i,j}\, x_i x_j
+ \sum_{j \in S_2}\! c_j\, x_j
+ \sum_{j_1, j_2 \in S_2}\! d_{j_1,j_2}\, x_{j_1} x_{j_2}
= e
\end{align*}
where $S_1$ is the set of positions of irrelevant variables, and $S_2$ is the remainder.

Finally, the attack performs a second zero-guessing step, in which additional $\mu_{2}$ variables in each block are guessed to be zero. This produces a final system consisting of $k-h(\mu_{1}+\mu_{2})$ variables and $h\binom{\beta-\mu_{1}-\mu_{2}}{2} - hv(hv+1)/2$ equations of the form
\begin{align*}\textstyle
\sum_{i \in S_1}\! a_i\, x_i
+ \sum_{i \in S_1, j \in S_2'}\! b_{i,j}\, x_i x_j
+ \sum_{j \in S_2'}\! c_j\, x_j
+ \sum_{j_1, j_2 \in S_2'}\! d_{j_1,j_2}\, x_{j_1} x_{j_2}
= e
\end{align*}
where $S_2'$ is a subset of $S_2$ of size $\abs{S_2}-h\mu_2$. If the final system has at least as many equations as monomials, then it can be solved by linearization. Any candidate solution obtained in this way is then verified against the original RSD instance.

The reduction in concrete complexity comes from two key observations. On the one hand, after the first zero-guessing step, the variables eliminated by the linear equations are distributed as evenly as possible across the blocks, rather than being concentrated in only a few of them. This balanced allocation increases the success probability of the second zero-guessing step and, by the AM--GM inequality, is favorable for minimizing the overall guessing cost. Figure~\ref{fig:blockVisualization} provides a visual comparison of the differences from the previous methods. 
On the other hand, we refine the complexity accounting of the preprocessing stage. In particular, we identify operations within the preprocessing procedure that can be performed at essentially negligible cost and explicitly account for this fact in our complexity estimate. This does not modify the underlying preprocessing algorithm, but yields a tighter estimate of its actual algebraic cost. Together with the balanced blockwise allocation, this refined accounting contributes to the improved concrete complexity obtained by our attack.

\begin{figure}[t]
    \centering
    \includegraphics[width=0.55\textwidth]{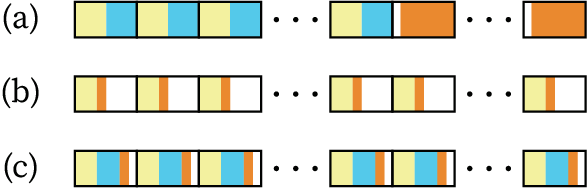}
    \caption{
        Visualization of the blockwise distribution of the first zero guesses (yellow), the variables eliminated by the linear equations (blue), and the second zero guesses (orange) in~\cite{AC:WWYLYZ25},~\cite{EPRINT:LWYYY25}, and our balanced refinement.
        In~(a), the two zero-guessing stages are separated and the linear eliminations are then filled from the front.
        In~(b), the zero guesses are distributed almost evenly without accounting for the linear eliminations.
        In~(c), all components are balanced across the blocks.
    }
    \label{fig:blockVisualization}
\end{figure}

We next extend the balanced construction to the case where the number of components is not a multiple of the number of blocks.
This general case is discussed in the following section.

\subsection{Balanced strategy for guessing and eliminating variables}
\label{ssec:blockNonbinary}

At a high level, we distribute the operations that remove variables as evenly as possible across the regular blocks. In particular, the zero guesses and the variables eliminated through the syndrome equations are assigned so that the number of removed variables differs by at most one between blocks. As a result, the remaining candidate sets stay nearly balanced across blocks, which improves the aggregate guessing probability without sacrificing the number of equations available for the subsequent algebraic solve.

Let $\beta = n / h \in \Z$. We use the following parameters to describe the blockwise allocation of guessed, eliminated, and retained variables. Let $\eta$ be the number of variables that are guessed to be $0$ in the first and second guessing stage. Let $\eta_1$ be the number of variables that are guessed to be $0$ in the first guessing stage. Let $\eta_2 = \eta - \eta_1$ be the number of variables that are guessed to be $0$ in the second guessing stage. Let $\eta_{\fix}$ be the number of variables that are free from being guessed as $0$ in the first and second guessing stage. 

Let $\bfx = (x_1,\ldots,x_n)$. We then call $x_i$ the variable at position $i$. The set $[n]$ is partitioned as $[n] = \dot{\bigcup}_{i \in [h]} \sfB_i$ where $\sfB_i = \{(i-1)\beta+1, \ldots, i\beta\}$, and each block $\sfB_i$ is also partitioned as $\sfB_i = \dot{\bigcup}_{j \in [5]} \sfB_{i,j}$ where
\begin{itemize}
	\item
	$\sfB_{i,1}$: set of positions for the variables that are eliminated in the first guessing stage
	\item
	$\sfB_{i,2}$: set of positions for the variables that are eliminated by the given linear equations
	\item
	$\sfB_{i,3}$: set of positions for the variables that are not the subject of the second guessing stage
	\item
	$\sfB_{i,4}$: set of positions for the variables that are eliminated in the second guessing stage
	\item
	$\sfB_{i,5}$: set of positions for the remaining variables
\end{itemize}
Let us define $\bar{\sfB}_j  = \cup_{i \in [h]} \sfB_{i,j}$ for $j \in [5]$.

\subsubsection{Distributing linear equations.}
We describe a way to distribute variables removed by the linear system to each block.
When the underlying field is non-binary,
we have $n-k$ linear equations such as $\bfH \bfx = \bft$.
Let us define $\omega =  \floor{(n-k)/h}$ and $f_{\textup{lin}} = n-k \bmod h$. Then each of the first $f_{\textup{lin}}$ blocks has $\omega+1$ variables to be eliminated by the linear equations, and each of the remaining $h - f_{\textup{lin}}$ blocks has $\omega$ variables to be eliminated by the linear equations.
Thus, we have
\begin{align*}
\abs{\sfB_{i,2}}
= \left\{\begin{array}{l@{\quad\textup{for}\quad}l}
\omega+1 & i \in I_{\textup{lin}} :=  \big\{1,\ldots, f_{\textup{lin}} \big\} \\
\omega & i \in [h] \setminus I_{\textup{lin}}
\end{array}\right. .
\end{align*} 

\subsubsection{Guessing strategy}
We describe a way to distribute variables guessed as 0 to each block.
Let $\mu := \floor{\eta/h}$, and $f := \eta \bmod h$.

\vspace{1mm}
\noindent
{\it First stage of zero guessing.}
Let us define $\mu_1 = \floor{\eta_1 / h}$, $f_1 = \eta_1 \bmod h$, and $I_{1} = \big\{ [f_{\textup{lin}} \!+\! 1]_h, \ldots,  [f_{\textup{lin}} \!+\! f_1]_h \big\}$.
The block $\sfB_{i}$ for $i \in I_{1}$
has $\mu_1+1$ positions indicating variables guessed as $0$, 
and each of the remaining $h-f_1$ blocks has $\mu_1$ positions indicating variables guessed as $0$, i.e.,
\begin{align*}
\abs{\sfB_{i,1}} = \left\{\begin{array}{l@{\quad\textup{for}\quad}l}
\mu_1 + 1 & i \in I_{1} \\
\mu_1 & i \in [h] \setminus I_{1}
\end{array}\right. .
\end{align*}

\noindent
{\it Second stage of zero guessing.}
Let us define $\mu_2 = \floor{\eta_2 / h}$, and $f_2 = \eta_2 \bmod h$.
Each of the blocks indexed by numbers in 
$I_{2} := \big\{ [f_{\textup{lin}} \!+\! f_1 \!+\! 1]_h, \ldots,  [f_{\textup{lin}} \!+\! f_1 \!+\! f_2]_h \big\}$
has $\mu_2+1$ variables guessed as $0$, 
and each of the remaining $h-f_2$ blocks has $\mu_2$ variables guessed as $0$, i.e.,
\begin{align*}
\abs{\sfB_{i,4}} = \left\{\begin{array}{l@{\quad\textup{for}\quad}l}
\mu_2 + 1 & i \in I_{2} \\
\mu_2 & i \in [h] \setminus I_{2}
\end{array}\right. .
\end{align*}

\noindent
{\it Zero guessing in total.}
We claim that 
\begin{align*}
\abs{\sfB_{i,1}} + \abs{\sfB_{i,4}}
=  \left\{\begin{array}{l@{\quad\textup{for}\quad}l}
\mu + 1 & i \in I := \big\{ [f_{\textup{lin}} \!+\! 1]_h, \ldots, [f_{\textup{lin}} \!+\! f]_h \big\}  \\
\mu & [h] \setminus I
\end{array}\right. .
\end{align*}
If $0 \le f_1 + f_2 \le h$, then 
$I_{1} \cap I_{2} = \emptyset$ and
\begin{align*}
\big( \abs{\sfB_{i,1}}, \abs{\sfB_{i,4}} \big) 
= \left\{\begin{array}{l@{\quad\textup{for}\quad}l}
(\mu_1 + 1,  \mu_2) & i \in I_{1}  \\
(\mu_1, \mu_2 + 1) & i \in I_{2} \\
(\mu_1, \mu_2) & i \in [h] \setminus (I_{1} \cup I_{2})
\end{array}\right. .
\end{align*} 
In case of $f_1 + f_2 = 0$, $\abs{\sfB_{i,1}} = \mu_1$, $\abs{\sfB_{i,4}} = \mu_2$ for all $i \in [h]$, 
$f = 0$, $I = \emptyset$, and 
$\abs{\sfB_{i,1}} + \abs{\sfB_{i,4}} = \mu = \mu_1 + \mu_2
\quad\textup{for}\quad i \in [h]$.
In case of $0 < f_1 + f_2 < h$, the claim holds for $\mu = \mu_1 + \mu_2$, $f = f_1 + f_2$.
In case of $f_1 + f_2 = h$, $f = 0$, $I = \emptyset$ and
$\abs{\sfB_{i,1}} + \abs{\sfB_{i,4}}
=  \mu = \mu_1 + \mu_2 + 1 \quad\textup{for}\quad i \in [h]$.
If $f_1 + f_2 > h$, 
then $I_{1} \cap I_{2} \neq \emptyset$, $I_1 \cup I_2 = [h]$, and
\begin{align*}
\big( \abs{\sfB_{i,1}}, \abs{\sfB_{i,4}} \big) 
= \left\{\begin{array}{l@{\quad\textup{for}\quad}l}
(\mu_1+1,\mu_2+1) & i \in I_{1} \cap I_{2} \\
(\mu_1+1,\mu_2) & i \in I_{1} \setminus I_{2} \\
(\mu_1, \mu_2+1) & i \in I_{2} \setminus I_{1} 	
\end{array}\right. .
\end{align*}
In this case, the claim holds for $\mu = \mu_1 + \mu_2 +1$, $f = f_1 + f_2 - h$, $I = I_1 \cap I_2$.

\subsubsection{Distributing preprocessed variables}
Let us define $\nu = \floor{\eta_{\fix} / h}$ and $f_{\fix} = \eta_{\fix} \bmod h$.
Let $I_{\fix} = \big\{ [f_{\textup{lin}}+f+1]_h, \ldots, [f_{\textup{lin}}+f+f_{\fix}]_h \big\}$.
Each of the blocks whose position belongs to the set $I_{\fix}$
has $\nu+1$ variables free from being guessed as 0,
and each of the remaining $h - f_{\fix}$ blocks has $\nu$ variables free from being guessed as 0, i.e.,
\begin{align*}
\abs{\sfB_{i,3}} 
= \left\{\begin{array}{l@{\quad\textup{for}\quad}l}
\nu+1 & i \in I_{\fix} \\
\nu & i \in [h] \setminus I_{\fix}
\end{array}\right. .
\end{align*}

\subsubsection{Requirement for block size}
Clearly, there is one condition that the block size must satisfy: 
the sum of the cardinality of the first four sub-blocks must not exceed $\beta$,
i.e., $\sum_{j =1}^{4} \abs{\sfB_{i,j}} \le \beta$ for all $i \in [h]$.

Let $a_1 := \floor{(n-k + \eta)/h}$ and $b_1 := (n-k) + \eta \bmod h$.
By using similar arguments used above, we can show that
\begin{align*}
\abs{\sfB_{i,2}} + (\abs{\sfB_{i,1}} \!+\! \abs{\sfB_{i,4}})
= \left\{\begin{array}{l@{\quad\textup{for}\quad}l}
a_1 + 1 & i \in J_1 := \big\{1,\ldots,b_1\} \\
a_1 & i \in [h] \setminus J_1
\end{array}\right.
\end{align*}
where $a_1 \in \{\omega + \mu, \omega + \mu + 1\}$.
Moreover, letting $a_2 := \floor{(n-k + \eta + \eta_{\fix})/h}$,
$b_2 := b_1 + f_{\fix} \bmod h$,
we also can show that 
\begin{align*}
\abs{\sfB_{i,1}} + \abs{\sfB_{i,2}} + \abs{\sfB_{i,3}} + \abs{\sfB_{i,4}}   
= \left\{\begin{array}{l@{\quad\textup{for}\quad}l}
a_2 + 1 & i \in J_2 := \big\{1,\ldots,b_2\big\} \\
a_2 & i \in [h] \setminus J_2
\end{array}\right.
\end{align*}
where $a_2 \in \{ a_1 + \nu, a_1 + \nu + 1 \}$.
By combining these, we can reach at the inequality
$\sum_{j =1}^{4} \abs{\sfB_{i,j}} \le a_1 + \nu + 2 \le \omega + \mu + \nu + 3$.
More precisely, we have 
$\max_{i \in [h]} {\textstyle\sum_{j=1}^{4}} \abs{\sfB_{i,j}} 
= \omega + \mu + \nu + \ceil{\tfrac{f+f_{\textup{lin}}+f_{\fix}}{h}}
= \ceil{\frac{\eta+(n-k)+\eta_{\fix}}{h}}$.

\subsubsection{Related numbers}

We use the following quantities to summarize the block partitions and the resulting equation counts.
\begin{itemize} 
%
%

	\item    
	$\Delta_1 \hspace{.2mm}
	:= \sum_{i \in [h]} {\abs{\sfB_i \setminus (\sfB_{i,1} \cup \sfB_{i,2})} \choose 2}$ \\[1mm]
	\phantom{aa}\hspace{2.3mm}
	$=  \left\{\begin{array}{l@{\quad\quad\hspace{3mm}\textup{for}\quad}l}
	h {\beta-\omega-\mu_1-1 \choose 2} 
	+ (h \!-\! f_{\textup{lin}} \!-\! f_1)(\beta \!-\! \omega \!-\! \mu_1 \!-\! 1)
	& f_{\textup{lin}} + f_1 \le h \\
	h {\beta-\omega-\mu_1-1 \choose 2}
	- (f_{\textup{lin}} \!+\! f_1 \!-\! h)(\beta \!-\! \omega \!-\! \mu_1 \!-\! 2)
	&  f_{\textup{lin}}+f_1 > h
	\end{array}\right.$.
	\vspace{1mm}
	
	\item
	$\Delta_2 \hspace{.2mm}
	:= \sum_{i \in [h]} {\abs{\sfB_i \setminus (\sfB_{i,1} \cup \sfB_{i,2} \cup \sfB_{i,4})} \choose 2}$ \\[1mm]
	\phantom{aa}\hspace{2.3mm}
	$=  \left\{\begin{array}{l@{\qquad\qquad\hspace{.5mm}\textup{for}\quad}l}
	h {\beta-\omega-\mu-1 \choose 2} 
	+ (h \!-\! f_{\textup{lin}} \!-\! f)(\beta \!-\! \omega \!-\! \mu \!-\! 1)
	& f_{\textup{lin}} + f \le h \\
	h {\beta-\omega-\mu-1 \choose 2}
	- (f_{\textup{lin}} \!+\! f \!-\! h)(\beta \!-\! \omega \!-\! \mu \!-\! 2)
	&  f_{\textup{lin}}+f > h
	\end{array}\right.$.
	\vspace{1mm}

    \item
	$\zeta := \sum_{i \in [h]} {\abs{\sfB_{i,3}} \choose 2}
	= f_{\fix} {\nu+1 \choose 2} + (h-f_{\fix}){\nu \choose 2}$.
	\vspace{1mm}	
	
	\item
	$\delta_1 := \sum_{i \in [h]} {\abs{\sfB_i \setminus (\sfB_{i,1} \cup \sfB_{i,2})} \choose 2}
	- \sum_{i \in [h]} {\abs{\sfB_{i,3}} \choose 2}
	= \Delta_1 - \zeta$.
	\vspace{1mm}
    
	\item
	$\delta_2 := \sum_{i \in [h]} {\abs{\sfB_i \setminus (\sfB_{i,1} \cup \sfB_{i,2} \cup \sfB_{i,4})} \choose 2} -\sum_{i \in [h]} {\abs{\sfB_{i,3}} \choose 2}
	= \Delta_2 - \zeta$.
    \vspace{1mm}
    
    \item
    $\tau := \sum_{i \in [h]} \abs{\sfB_{i,2}} \!\cdot\! \abs{\sfB_{i,4}}$.
\end{itemize}

\subsection{Refined attack procedure}\label{ssec:algorithmNonbinary}

We start with the system $\calF$ of $m := n-k + h {\beta \choose 2}$ equations and $n$ variables,
derived from the given RSD problem.
Concretely, the system $\calF$ consists of the following 
$n-k$ linear equations and $h {\beta \choose 2}$ quadratic equations:
\begin{align} \nonumber
\bfH \bfx &= \bfs, \\ \label{eqn:quadratic}
x_{i_1} x_{i_2} &= 0
\quad\textup{for}\,\, i_1 \neq i_2 \in \sfB_i,\,\, i \in [h]. 
\end{align}
where $\bfH \in \F_q^{(n-k) \times n}$ and $\bfx = (x_1,\ldots,x_n)$.

By eliminating $n-k$ variables using the given linear equations,
we can obtain a system $\calF'$ of $h{\beta \choose 2}$ quadratic equations and $k$ variables. 
If $h{\beta \choose 2} < {k \choose 2} + 2k$, then $\calF'$ is underdetermined from the perspective of linearization attack.
In this case, we can make the system $\calF'$ solvable by guessing some variables as 0. 
Otherwise, we can solve the system using the linearization technique.
Even in this case, if the loss resulting from the probability of guessing is acceptable,
the RSD problem can be solved with less complexity (which rarely happens).

\subsubsection{Step~1. First stage of zero guessing}

The following steps are executed as a nested guessing procedure.
At the outer level, we repeatedly perform the first stage of zero guessing.
In each outer trial, we randomly select a new set
$\bar{\sfB}_1 \subseteq [n]$ of $\eta_1$ positions as described in
Section~\ref{ssec:blockNonbinary}, and set
$\bfx_{\bar{\sfB}_1}={\bf 0}$.
For this first-stage guess, we construct a quadratic system $\calF_1$
by eliminating $n-k$ variables using the given linear equations.
More precisely, for each choice of $\bar{\sfB}_1$, we obtain the
linear system
\begin{align}\label{eqn:lin1}
\bfH_{[n] \setminus \bar{\sfB}_1} \bfx_{[n] \setminus \bar{\sfB}_1} = \bfs.
\end{align}
Next, we randomly select a set $\bar{\sfB}_{2} \subseteq [n] \setminus \bar{\sfB}_{1}$ of $n-k$ positions as described in Sec~\ref{ssec:blockNonbinary}.
By performing Gaussian elimination, 
we transform $\bfH_{[n] \setminus \bar{\sfB}_1}$ into $\bfA$ 
such that the sub-matrix $\bfH_{\bar{\sfB}_2}$ of $\bfH_{[n] \setminus \bar{\sfB}_1}$ is reduced to $\bfI_{n-k}$,
i.e., $\bfQ\bfH_{\bar{\sfB}_2} = \bfI_{n-k}$
where $\bfQ$ denote an invertible matrix satisfying $\bfQ \bfH_{[n] \setminus \bar{\sfB}_1} = \bfA$.
Then, \eqref{eqn:lin1} is transformed into 
\begin{align}\label{eqn:lin2}
\bfA \bfx_{[n]\setminus \bar{\sfB}_1} = \bfQ \bfs.
\end{align}
Thus, writing 
\begin{align*}
\bfQ\bfH_{[n] \setminus (\bar{\sfB}_1 \cup \bar{\sfB}_2)} 
= \big[ \bfa_1 \vert \cdots \vert \bfa_{n-k} \big]^\sft
\,\,\,\textup{for some } \bfa_i \in \F_q^{k-\eta_1}
\textup{ and }
\bfQ\bfs = (t_1,\ldots,t_{n-k}), 
\end{align*}
from \eqref{eqn:lin2}, we have 
\begin{align}\label{eqn:linear}
x_i = t_i - \bfa_i^\sft \,\bfx_{[n] \setminus (\bar{\sfB}_1 \cup \bar{\sfB}_2)}
\quad\textup{for}\quad i \in \bar{\sfB}_2.
\end{align}

Substituting these $n-k$ linear equations into \eqref{eqn:quadratic},
we can obtain a system $\calF_1$ of $m_1$ quadratic equations and $n_1 := k - \eta_1$ variables as follows:
\begin{equation}\label{eqn:f1}
\begin{aligned}
x_{i_1} x_{i_2} &= 0
\textup{ for } 
i_1 \neq i_2 \in \sfB_i \!\setminus\! (\sfB_{i,1} \!\cup\! \sfB_{i,2}), \\
x_{i_1}
\!\big(\bfa_{j_2}^\sft \bfx_{[n] \setminus (\bar{\sfB}_1 \cup \bar{\sfB}_2)} - t_{j_2} \!\big) &= 0
\textup{ for } i_1 \in \sfB_i \!\!\setminus\!\! (\sfB_{i,1} \!\cup\! \sfB_{i,2}), 
j_2 \in \sfB_{i,2}, \\
\big(\bfa_{j_1}^\sft \bfx_{[n] \setminus (\bar{\sfB}_1 \cup \bar{\sfB}_2)} - t_{j_1} \!\big)
\big(\bfa_{j_2}^\sft \bfx_{[n] \setminus (\bar{\sfB}_1 \cup \bar{\sfB}_2)} - t_{j_2} \!\big) &= 0
\textup{ for } j_1 \neq j_2 \in \sfB_{i,2}
\end{aligned}
\end{equation}
where $i$ runs through the set $[h]$.
Then, the number of equations is 
\begin{align*}\textstyle
m_1 
=  \sum_{i \in [h]} {\abs{\sfB_i \setminus \sfB_{i,1}} \choose 2}
= f_1{\beta-\mu_1-1 \choose 2} + (h-f_1){\beta-\mu_1 \choose 2}.
\end{align*}
We call $x_{i_1}x_{i_2}$ (resp. $x_{i_1}x_{i_2} = 0$) \emph{free monomial} (resp. \emph{free equation})
since it can be eliminated from the system $\calF_1$ for almost free.
Note that there are 
$\Delta_1:= \textstyle\sum_{i \in [h]} 
{ \abs{\sfB_i \setminus (\sfB_{i,1} \cup \sfB_{i,2})} \choose 2 }$
free monomials.

Let $\bfz_1$ be a vector of length $N_1 := {n_1 \choose 2} + 2n_1$
such that each coordinate of $\bfz$ corresponds to the monomials of the variables of 
$\bfx_{[n] \setminus (\bar{\sfB}_1 \cup \bar{\sfB}_2)}$ with degree $\le 2$.
Then, under some appropriate ordering of monomials,
we can write the system $\calF_1$ as 
$\bfM_1 \bfz_1 = \bfd_1$ where
$\bfM_1 = \begin{bmatrix}
\bfI_{\Delta_1} & \bfO \\
\bfM_1' & \bfM_1''
\end{bmatrix} \in \F_q^{m_1 \times N_1}$, $\bfd_1 \in \F_q^{m_1}$.
After reducing all free monomials,
the system $\calF_1$ can be represented as
\begin{align*}
\hat{\bfM}_1 \bfz_1 = \hat{\bfd}_1,
\quad
\hat{\bfM}_1
= \begin{bmatrix}
\bfI_{\Delta_1} & \bfO \\
\bfO & \hat{\bfM}_1''
\end{bmatrix}.
\end{align*}

The system $\calF_1$ constructed above is associated with the current
first-stage guess $\bar{\sfB}_1$.
For this fixed first-stage guess, we next perform the preprocessing in
Step~2 and then try multiple second-stage guesses in Step~3.
If these trials do not yield a valid solution, we return to Step~1,
choose a new set $\bar{\sfB}_1$, and repeat the procedure.
Consequently, Step~1 forms the outer level of the guessing procedure,
whereas Step~3 forms an inner level whose trials reuse the preprocessing
obtained in Step~2.

\begin{remark}
In Step~1, we assume that $\bfH_{\bar{\sfB}_2}$ is invertible. This assumption does not pose a significant issue, since a random square matrix is invertible with constant probability. More precisely, as discussed in Section~\ref{sssec:invertibility}, the probability that a uniformly random $(n-k)\times(n-k)$ matrix over $\F_q$ is invertible is $ p(n-k,q)=\prod_{i\in[n-k]}\left(1-q^{\,i-n-k-1}\right)$.
This probability is close to $1$ when $q$ is sufficiently large; in particular, it is essentially $1$ for $q=2^{128}$. For $q=2$, it approaches approximately $0.2889$, which is also consistent with the experiments reported in \ref{app:assumptions}. Similar experimental observations are given in \cite[Section~C.1]{EPRINT:WWYLYZ25}. Since the invertibility probability is bounded below by a positive constant, conditioning on this event changes the expected running time by at most a constant factor.
Accordingly, as is standard in ISD-type algorithms~\cite{C:EssSan24,EC:LWYY24} and previous algebraic attacks~\cite{EPRINT:KimLee24,AC:WWYLYZ25}, we omit this constant factor from our complexity estimates.
\end{remark}

\subsubsection{Step~2. Preprocessing via Gaussian elimination}

For each first-stage trial in Step~1, we perform preprocessing once
on the resulting system $\calF_1$ before trying the corresponding
second-stage zero guesses.
This preprocessing is specific to the current first-stage guess
$\bar{\sfB}_1$, but its result is shared by all second-stage guesses
considered under this first-stage trial.
To this end, we randomly select a set
$\bar{\sfB}_3 \subseteq
[n] \setminus (\bar{\sfB}_1 \cup \bar{\sfB}_2)$
of $\eta_{\fix}$ positions as described in
Section~\ref{ssec:blockNonbinary}.
These positions correspond to variables that are excluded from the
variables considered for zero guessing in the second stage.



Let $\bfx_{\fix} := \bfx_{\bar{\sfB}_3} \in \F_q^{\eta_{\fix}}$
and $\bfx_{\textup{var}} := \bfx_{[n] \setminus (\bar{\sfB}_1 \cup \bar{\sfB}_2 \cup \bar{\sfB}_3)} 
\in \F_q^{k - \eta_1 - \eta_{\fix}}$.
Then, after performing Gaussian elimination to $\bfM_1$, 
the system $\calF_1$ can be represented by the following equations:
\begin{align*}
x_{i_1} x_{i_2} &= 0 
\quad\hspace{5.6mm}
\textup{for } i_1 \neq i_2 \in \sfB_i \!\setminus\! (\sfB_{i,1} \!\cup\! \sfB_{i,2}), \\
b_{i_1,i_2}\, x_{i_1} x_{i_2}
+ \bfc_{i_1,i_2}^\sft \bfx_{\fix} 
+ \bfx_{\fix}^\sft \bfD_{i_1,i_2} \bfx_{\textup{var}} & \\[-1.5mm]
+\, \bfx_{\textup{var}}^\sft \bfE_{i_1,i_2} \bfx_{\textup{var}}
+ \bfe_{i_1,i_2}^\sft \bfx_{\textup{var}}&= d_{i_1,i_2}'
\quad\textup{for } i_1 \in \sfB_{i,3}, i_2 \in \sfB_{j,3}, i \neq j, \\[.5mm]
b_{i_1,i_1}\, x_{i_1}^2 
+ \bfc_{i_1,i_1}^\sft \bfx_{\fix} 
+ \bfx_{\fix}^\sft \bfD_{i_1,i_1} \bfx_{\textup{var}} & \\[-1.5mm]
+\, \bfx_{\textup{var}}^\sft \bfE_{i_1,i_1} \bfx_{\textup{var}}
+ \bfe_{i_1,i_1}^\sft \bfx_{\textup{var}}&= d_{i_1}'
\quad\hspace{3.3mm}\textup{for } i_1 \in \sfB_{i,3}, \\[.5mm]
\bfc_{l}^\sft \, \bfx_{\fix}
+ \bfx_{\fix}^\sft \bfD_{l} \bfx_{\textup{var}}
+ \bfx_{\textup{var}}^\sft \bfE_{l} \bfx_{\textup{var}}
+ \bfe_{l}^\sft \bfx_{\textup{var}}&= d_{l}'
\quad\hspace{4.6mm}\textup{for } l \in [m_1']
\end{align*}
where $b_{i_1,i_2}, b_{i_1,i_2} \in \{0,1\}$, $\bfE_{i_1,i_2}, \bfE_{i_1,i_1}, \bfE_{l}$ are upper triangular,
and $i, j$ runs through $[h]$.
Here $m_1'$ denote the number of equations 
that are not free and contain no quadratic monomials of variables of $\bfx_{\fix}$.
Since there are at most $\gamma :=  {\eta_{\fix} \choose 2} + \eta_{\fix}$ equations 
that contain quadratic terms of variables of $\bfx_{\fix}$,
letting $\delta_1$ be the number of equations 
that are free but are not consists only of the variables of $\bfx_{\fix}$,
we have 
\begin{align*}
m_1' \ge m_1 - \gamma -\delta_1
\end{align*}
where $\delta_1 := \big\vert  
\big\{ x_{i_1}x_{i_2} : i_1 \neq i_2 \in \sfB_i \!\setminus\! (\sfB_{i,1} \!\cup\! \sfB_{i,2}) 
\textup{ for } i \in [h] \big\}
\!\setminus\! \big\{ x_{i_1}x_{i_2} : i_1 \neq i_2 \in \sfB_{i,3} 
\textup{ for } i \in [h] \big\} 
\big\vert$ and
the equality holds when $b_{i_1,i_2} = b_{i_1,i_1} = 1$ for all $i_1, i_2$.

Once this preprocessing has been completed, the resulting system is
reused for multiple trials of the second-stage zero guessing in Step~3.
Thus, for a fixed first-stage guess $\bar{\sfB}_1$, Step~2 is performed
once, while Step~3 may be performed multiple times with different choices
of $\bar{\sfB}_4$.
If none of these second-stage trials produces a valid solution, the
algorithm returns to Step~1 and starts a new outer trial with a different
choice of $\bar{\sfB}_1$.  The preprocessing in Step~2 is then recomputed
for the new system $\calF_1$.

\subsubsection{Step~3. Second stage of zero guessing}

For the fixed first-stage guess and the preprocessing obtained in
Steps~1 and~2, respectively, we repeatedly try second-stage zero guesses.
In each inner trial, we select a new admissible set
$\bar{\sfB}_4$ of $\eta_2$ positions and set
$\bfx_{\bar{\sfB}_4}={\bf 0}$.
The preprocessed system obtained in Step~2 is reused across these
second-stage trials.
We focus on solving the subsystem $\calF_1'$ of $\calF_1$
consisting only of the fourth type of equations:
\begin{align}\label{eqn:f1sub}
\bfc_{l}^\sft\, \bfx_{\fix}
+ \bfx_{\fix}^\sft \bfD_{l} \bfx_{\textup{var}}
+ \bfx_{\textup{var}}^\sft \bfE_{l} \bfx_{\textup{var}}
+ \bfe_{l}^\sft\, \bfx_{\textup{var}}&= d_{l}'
\quad\textup{for } l \in [m_1'].
\end{align}
If $m_1'$ is not sufficiently large,
we additionally guess some remaining variables as 0
so that the resulting subsystem is solvable via linearization technique.

In each second-stage trial, we randomly select a new set
$\bar{\sfB}_4$ of $\eta_2$ positions
as described in Section~\ref{ssec:blockNonbinary}, and set
$\bfx_{\bar{\sfB}_4} = {\bf 0}$.
Let $\bfx_{\textup{var}}' = \bfx_{\bar{\sfB}_5}$ where $\bar{\sfB}_5 = [n] \setminus (\cup_{i=1}^{4} \bar{\sfB}_i)$.
Then the new system $\calF_2$ obtained as a result of applying $\bfx_{\bar{\sfB}_4} = {\bf 0}$ to $\calF_1$
consists of the following equations:
\begin{align*}
x_{i_1} x_{i_2} &= 0 
\hspace{6.7mm}
\textup{for } i_1 \neq i_2 \!\in\! \sfB_i \!\setminus\! (\sfB_{i,1} \!\cup\! \sfB_{i,2} \!\cup\! \sfB_{i,4}), \\
b_{i_1,i_2} x_{i_1} x_{i_2}
+ \bfc_{i_1,i_2}^\sft \bfx_{\fix} 
+ \bfx_{\fix}^\sft \bfD_{i_1,i_2}' \bfx_{\textup{var}}' & \\[-1.2mm]
+\, (\bfx_{\textup{var}}')^\sft \bfE_{i_1,i_2}' \bfx_{\textup{var}}'
+ (\bfe_{i_1,i_2}')^\sft\, \bfx_{\textup{var}}' &= d_{i_1,i_2}'
\,\,\textup{for } i_1 \in \sfB_{i,3}, i_2 \in \sfB_{j,3}, i \neq j, \\[.5mm]
b_{i_1,i_1} x_{i_1}^2 
+ \bfc_{i_1,i_1}^\sft \bfx_{\fix} 
+ \bfx_{\fix}^\sft \bfD_{i_1,i_1}' \bfx_{\textup{var}}' & \\[-1.2mm]
+\, (\bfx_{\textup{var}}')^\sft \bfE_{i_1,i_1}' \bfx_{\textup{var}}'
+ (\bfe_{i_1,i_1}')^\sft\, \bfx_{\textup{var}}' &= d_{i_1}'
\hspace{4.6mm}\textup{for } i_1 \in \sfB_{i,3}, \\[.5mm]
\bfc_{l}^\sft\, \bfx_{\fix}
+ \bfx_{\fix}^\sft \bfD_{l}' \bfx_{\textup{var}}' \hspace{6em} & \\[-1.5mm]
+ (\bfx_{\textup{var}}')^\sft
\bfE_{l}' \bfx_{\textup{var}}'
+ (\bfe_{l}')^\sft\, \bfx_{\textup{var}}' &= d_{l}'
\hspace{5.9mm}\textup{for } l \in [m_2']
\end{align*} 
where $b_{i_1,i_2}, b_{i_1,i_1} \in \{0,1\}$,
$\bfE_{i_1,i_2}', \bfE_{i_1,i_1}', \bfE_{l}'$ are upper triangular,
and $i, j$ runs through $[h]$.
Then the number of equations in $\calF_2$ is
\begin{align*}
m_2 = \textstyle\sum_{i \in [h]} {\abs{\sfB_i \setminus (\sfB_{i,1} \cup \sfB_{i,4})} \choose 2}
= f {\beta-\mu-1 \choose 2} + (h-f) {\beta-\mu \choose 2}.
\end{align*}
and the number of free equations is
$\Delta_2 = \sum_{i \in [h]} {\abs{\sfB_i \setminus (\sfB_{i,1} \cup \sfB_{i,2} \cup \sfB_{i,4})} \choose 2}$.
For the same reason as before, we have
\begin{align*}
m_2' \ge m_2 - \gamma - \delta_2
\end{align*}
where $\delta_2 := \big\vert  
\big\{ x_{i_1}x_{i_2} : i_1 \neq i_2 \in \sfB_i \!\setminus\! (\sfB_{i,1} \!\cup\! \sfB_{i,2} \!\cup\! \sfB_{i,4}) 
\textup{ for } i \in [h] \big\}
\!\setminus\! \big\{ x_{i_1}x_{i_2} : i_1 \neq i_2 \in \sfB_{i,3} 
\textup{ for } i \in [h] \big\} 
\big\vert$ and
the equality holds when $b_{i_1,i_2} = b_{i_1,i_1} = 1$ for all $i_1, i_2$.

One thing to note is that $m_2' \le m_1'$.
This is because each equation $\bfc_{l}^\sft\, \bfx_{\fix}
+ \bfx_{\fix}^\sft \bfD_{l}' \bfx_{\textup{var}}'
+ (\bfx_{\textup{var}}')^\sft \bfE_{l}' \bfx_{\textup{var}}'
+ (\bfe_{l}')^\sft\, \bfx_{\textup{var}}' = d_{l}'$
in $\calF_2$ is the result of substituting $\bfx_{\bar{\sfB}_4} = { \bf 0}$ into \eqref{eqn:f1sub}
which a linear combination of equations in \eqref{eqn:f1},
where $x_{i_1} \big( \bfa_{j_2}^\sft \bfx_{[n] \setminus (\bar{\sfB}_1 \cup \bar{\sfB}_2)} - t_{i_2} \big) = 0$
becomes zero equation when $0$ is assigned to $x_{i_1}$ for $i_1 \in \bar{\sfB}_4$.
Thus, after assigning ${\bf 0}$ to $\bfx_{\bar{\sfB}_4}$, 
there are $m_2' = m_1' - \tau$ 
non-zero equations
where $\tau := \textstyle\sum_{i \in [h]} \abs{\sfB_{i,2}} \!\cdot\! \abs{\sfB_{i,4}}$.
It is also verified from the fact that 
$(m_1 - \gamma - \delta_1) - (m_2 - \gamma - \delta_2)
= (m_1 - \Delta_1) - (m_2 - \Delta_2)  
= \sum_{i \in [h]} \abs{\sfB_{i,2}} \cdot \abs{\sfB_{i,4}}$.

\subsubsection{Step~4. Solve a system of equations via linearization}

We now try to solve the subsystem $\calF_2'$ of $\calF_2$ 
consists of the following equations:
\begin{align*}
\bfc_{l}^\sft\, \bfx_{\fix}
+ \bfx_{\fix}^\sft \bfD_{l}' \bfx_{\textup{var}}'
+ (\bfx_{\textup{var}}')^\sft \bfE_{l}' \bfx_{\textup{var}}'
+ (\bfe_{l}')^\sft\, \bfx_{\textup{var}}' = d_{l}'
\quad\textup{for } l \in [m_2'].
\end{align*}
Since the coefficient $\bfc_l$ have already been reduced while performing Gaussian elimination in Step~2,
the system $\calF_2'$ can be represented by the following equations:
\begin{align*}
b_l  x_{l} 
+ \bfx_{\fix}^\sft \bfD_l' \bfx_{\textup{var}}' 
+ (\bfx_{\textup{var}}')^\sft \bfE_l' \bfx_{\textup{var}}'
+ (\bfe_l')^\sft \bfx_{\textup{var}}' 
&= 0
\quad\hspace{1mm}\textup{for } l \in \bar{\sfB}_3 \\
\bfx_{\fix}^\sft \bfD_l' \bfx_{\textup{var}}' 
+ (\bfx_{\textup{var}}')^\sft \bfE_l' \bfx_{\textup{var}}'
+ (\bfe_l')^\sft \bfx_{\textup{var}}' 
&= d_l'
\quad\textup{for } l \in \big[m_2'' \big]
\end{align*}
where $b_l \in \{0,1\}$ and $m_2'' \ge m_2' - \abs{\bar{\sfB}_3} = m_2' - \eta_{\fix}$.

We first solve the subsystem $\calF_2''$ of $\calF_2'$ defined by the second type of equations via the linearization technique.
That is, we write the system $\calF_2''$ as  
\begin{align*}
\bfM_2'' \bfz_2'' = \bfd_2'',
\quad \bfM_2'' \in \F_q^{m_2'' \times N_2''}
\end{align*}
where 
$\bfz_2'' = \big( (x_{i} x_{j})_{i \in \bar{\sfB}_3, j \in \bar{\sfB}_5}, 
(x_{j_1} x_{j_2})_{j_1 \le j_2 \in \bar{\sfB}_5}, (x_j)_{j \in \bar{\sfB}_5} \big)$
is a vector of length $N_2'' := N_2' - \eta_{\fix}$.
To filter out incorrect zero guesses, 
we require that 
\begin{align*}
m_2'' - N_2'' \ge \xi(N_2'')
\end{align*}
so that $\calF_2''$ has no solution with overwhelming probability
when the guessing $\bfx_{\bar{\sfB}_1 \cup \bar{\sfB}_4} = {\bf 0}$ is incorrect.
Once we find a solution $\bfz$ which includes $\bfx_{\textup{var}}'$, 
we compute $\bfx_{\fix}$ by substituting $\bfz$ into the remaining $l$ equations.  
In turn, by substituting $\bfx_{\fix}$, $\bfx_{\textup{var}}'$, 
and $\bfx_{\bar{\sfB}_1 \cup \bar{\sfB}_4} = {\bf 0}$ into \eqref{eqn:linear}, 
we find $\bfx_{\bar{\sfB}_2}$.

The obtained candidate is verified against the original RSD instance.
If it is valid, the algorithm terminates. Otherwise, we return to Step~3
and try another second-stage zero guess while reusing the preprocessing
computed in Step~2. After the prescribed second-stage trials associated
with the current first-stage guess have been exhausted without finding a
valid solution, we return to Step~1 and start a new outer trial.


\begin{remark}
In Step~4, we assume that an incorrect zero guess leads to an inconsistent system $\calF_{2}''$ with overwhelming probability.
The underlying intuition is that an incorrect guess removes the structure associated with the true error support, so that the resulting system behaves essentially like a random overdetermined system. As discussed in Section~\ref{ssec:overdetermined}, such a system is expected to have no solution except with negligible probability.
We further support this assumption with experiments, whose results are reported in \ref{app:assumptions}.
\end{remark}

\subsection{Complexity analysis}\label{ssec:complexityNonbinary}

The transformation
\begin{align*}
\calF
\longrightarrow \calF_1
\longrightarrow \calF_2
\longrightarrow \calF_2'
\longrightarrow \calF_2''
\end{align*}
describes a single path through the attack.
The complete attack repeatedly executes this procedure using a nested
guessing strategy.

At the outer level, Step~1 is repeated with different first-stage
zero guesses $\bar{\sfB}_1$.
For each such first-stage trial, Step~2 constructs preprocessing
specific to the resulting system $\calF_1$.
At the inner level, the preprocessing is then reused while Steps~3
and~4 are repeatedly executed with different second-stage zero guesses
$\bar{\sfB}_4$.
If no valid solution is obtained from the inner trials associated with
the current first-stage guess, the algorithm returns to Step~1 and
starts another outer trial.

This nested structure is important for the complexity analysis.
The first-stage guess is correct with probability
$\textup{p}_{\eta_1}$, so the costs of Steps~1 and~2 are incurred
an expected $\textup{p}_{\eta_1}^{-1}$ times.
On the other hand, a complete first- and second-stage zero guess is
correct with probability $\textup{p}_{\eta}$, so the cost associated
with each second-stage trial is incurred an expected
$\textup{p}_{\eta}^{-1}$ times.
This gives rise to the two different repetition factors appearing in
the total complexity below.

Recall that each system $\calF$ (resp. $\calF_1$, $\calF_2$, $\calF_2'$, $\calF_2''$) 
has $m$ (resp. $m_1$, $m_2$, $m_2'$, $m_2''$) equations, $n$ (resp. $n_1$, $n_2$, $n_2'$, $n_2''$) variables,
and $N$ (resp. $N_1$, $N_2$, $N_2'$, $N_2''$) monomials where 
\begin{equation*}
\begin{array}{c}
m_1 
\!=\! f_1 {\beta - \mu_1 -1 \choose 2} + (h\!-\!f_1){\beta-\mu_1 \choose 2},\quad 
n_1 = k - \eta_1,\quad
N_1 = \tfrac{n_1(n_1+3)}{2}, \\[1mm]
m_2 
\!=\! f{\beta-\mu-1 \choose 2} + (h\!-\!f){\beta - \mu \choose 2},\quad
n_2 = n_2' = n_2'' = k - \eta,\quad
N_2 = \frac{n_2(n_2+3)}{2}, \\[1mm]
m_2' = m_2 - \gamma - \delta_2,\quad
N_2' = N_2 - \gamma - \delta_2, \qquad
m_2'' = m_2' - \eta_{\fix},\quad 
N_2'' = N_2' - \eta_{\fix},
\end{array}
\end{equation*}
and $\gamma = \tfrac{\eta_{\fix}(\eta_{\fix}+1)}{2}$. 
\vspace{2mm}

We can now state the resulting concrete complexity bound.
\begin{theorem}\label{thm:main}
The refined attack with input parameters $(\eta_1, \eta, \eta_{\fix})$ 
satisfying 
\begin{align*}
m_2'' - N_2'' \ge \xi(N_2''),\quad
\eta+\eta_{\fix} \le k,\quad
\ceil{(n\!-\!k+\eta+\eta_{\fix})/h} \le \beta
\end{align*} 
can solve the $\operatorname{RSD}(n,k,h)$ problem in time complexity
\footnote{It is measured in terms of the number of $\F_q$-operations.}
\begin{align*}
\textup{p}_{\eta_1}^{-1} \cdot (\txT_1 + \txT_2) 
+ \textup{p}_{\eta}^{-1} \cdot \txT_3
\end{align*}
where $\textup{p}_{x} := \left(1-\frac{\floor{x/h}+1}{\beta}\right)^{x \bmod h}
\left(1-\frac{\floor{x/h}}{\beta}\right)^{h - (x \bmod h)}$
and
\begin{align*}
\txT_1 
&= \calO\big( 
\textup{C}_{\textup{ge}}(n\!-\!k, n\!-\!\eta_1) 
+ \Delta_1 + (m_1 \!-\! \Delta_1)(n_1 \!+\! 1)^2 
\big), \\
\txT_2 
&= \calO\big( 
\textup{C}_{\textup{ge}}(m_1\!-\!\Delta_1,N_1\!-\!\Delta_1) 
\big), \\
\txT_3 
&= \calO\big( 
m_2'(k\!-\!\eta_1\!+\!1)\eta_2
+ \textup{C}_{\textup{ge}}(m_2'', N_2'') 
\big).
\end{align*}
\end{theorem}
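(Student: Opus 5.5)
The plan is to prove the statement in three parts: correctness of a single path $\calF \to \calF_1 \to \calF_2 \to \calF_2' \to \calF_2''$, a probability computation for the two nested guessing levels, and a cost count per step. Throughout, the invertibility of $\bfH_{\bar{\sfB}_2}$ is assumed up to a constant factor (as in the Remark in Step~1). The filtering of wrong guesses is taken from Section~\ref{ssec:overdetermined} via the hypothesis $m_2''-N_2''\ge\xi(N_2'')$.

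\textbf{Correctness.} First I check that the construction is well defined. The condition $\ceil{(n-k+\eta+\eta_{\fix})/h}\le\beta$ equals $\max_i\sum_{j=1}^4\abs{\sfB_{i,j}}$ by the computation in Section~\ref{ssec:blockNonbinary}, so every block can host its sub-blocks. The condition $\eta+\eta_{\fix}\le k$ guarantees that $\bar{\sfB}_3,\bar{\sfB}_4$ fit inside the $k-\eta_1$ non-eliminated positions. Next, if the true solution $\bfx$ vanishes on $\bar{\sfB}_1\cup\bar{\sfB}_4$, then $\bfx$ satisfies $\calF_2''$, since every transformation is an invertible linear recombination of equations or a substitution of true values. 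Linearization therefore returns it, because a full-rank overdetermined system has at most one solution. The back-substitution into \eqref{eqn:linear} then recovers $\bfx$. Conversely, an incorrect guess yields no solution except with probability $q^{-\xi}$, by Section~\ref{ssec:overdetermined} and the Remark in Step~4.

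\textbf{Success probabilities.} The blocks are independent, and the nonzero position in each block is uniform. By the balanced allocation, block $i$ receives $\lfloor x/h\rfloor+1$ zero guesses for $x\bmod h$ blocks and $\lfloor x/h\rfloor$ for the rest. The probability that the single nonzero coordinate avoids them is $1-\abs{\text{guessed}}/\beta$, and the product gives $\textup{p}_x$. For $x=\eta$, I use the ``zero guessing in total'' claim of Section~\ref{ssec:blockNonbinary}: the combined counts $\abs{\sfB_{i,1}}+\abs{\sfB_{i,4}}$ take the values $\mu+1$ on exactly $f$ blocks and $\mu$ on the others, so the joint probability is $\textup{p}_\eta$. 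One subtlety needs care. Elimination positions $\bar{\sfB}_2$ do not change the probability, because we only guess zeros on $\bar{\sfB}_1,\bar{\sfB}_4$. The remaining variables (including $\bar{\sfB}_2$) are solved for, not guessed. For the expected repetition count, the outer loop is run $\textup{p}_{\eta_1}^{-1}$ times in expectation. The inner trials are rerandomized over $\bar{\sfB}_4$, and over the full procedure they reach a correct complete guess after $\textup{p}_\eta^{-1}$ inner trials. I would argue this by counting total inner trials as geometric with parameter $\textup{p}_\eta$, with each outer trial allotted about $\textup{p}_{\eta_1}/\textup{p}_\eta$ inner trials. This accounting step is where I expect the most friction, since it needs the heuristic that the inner trials under a correct outer guess succeed with the conditional probability $\textup{p}_\eta/\textup{p}_{\eta_1}$.

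\textbf{Cost per step.} $\txT_1$ consists of three parts. First, Gaussian elimination on $\bfH_{[n]\setminus\bar{\sfB}_1}$ costs $\textup{C}_{\textup{ge}}(n-k,n-\eta_1)$. Second, the $\Delta_1$ free equations $x_{i_1}x_{i_2}=0$ are written directly. Third, each of the $m_1-\Delta_1$ other equations is a product of at most two affine forms in $n_1$ variables, costing $(n_1+1)^2$ to expand. $\txT_2$ is the elimination of $\hat{\bfM}_1''$, whose size is $(m_1-\Delta_1)\times(N_1-\Delta_1)$ because the free monomials form an identity block and reduce for free. $\txT_3$ covers each inner trial. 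Substituting zeros for $\eta_2$ variables into $m_2'$ preprocessed equations costs at most $m_2'(k-\eta_1+1)\eta_2$: per equation and per zeroed variable, we delete a row or column of the quadratic form in the $k-\eta_1$ remaining variables plus the linear term. This is followed by $\textup{C}_{\textup{ge}}(m_2'',N_2'')$ for linearization. The final substitutions and verification are dominated by these costs. Combining the three parts yields the stated bound.
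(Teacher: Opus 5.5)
Your proposal is correct and matches the paper's proof in the cost accounting: $\txT_1$ as elimination on $\bfH_{[n]\setminus\bar{\sfB}_1}$, plus constant-time free equations, plus $(n_1+1)^2$ per product of affine forms; $\txT_2$ as elimination on the non-free block; and $\txT_3$ as zero-substitution followed by the final linearization. You also add a correctness discussion, which the paper leaves to the algorithm description and remarks.

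The real difference is how you obtain the second-stage probability.
\begin{itemize}
\item The paper computes the inner conditional success probability directly. It conditions on the event $A_i$ that the nonzero entry of block $i$ lies in $\sfB_{i,2}\cup\sfB_{i,3}$, where the second guess never lands. Averaging gives $\Pr[E_i]=1-\abs{\sfB_{i,4}}/(\beta-\abs{\sfB_{i,1}})$, and hence $\txP_2=\textup{p}_\eta/\textup{p}_{\eta_1}$.
\item You instead get the joint probability $\textup{p}_\eta$ at once, from the combined counts $\abs{\sfB_{i,1}}+\abs{\sfB_{i,4}}$ and the uniformity of the nonzero position. The joint event is contained in the first-stage event, so $\textup{p}_\eta/\textup{p}_{\eta_1}$ then follows from Bayes' rule. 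The step you flagged as a heuristic is therefore not an extra assumption, at least at the level of averaged per-trial probabilities.
\end{itemize}

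What the paper's $A_i$-computation makes visible, and your route hides, is this: for a \emph{fixed} outer trial, the inner success rate depends on where the nonzero entries fell relative to $\bar{\sfB}_2\cup\bar{\sfB}_3$. The quantity $\textup{p}_\eta/\textup{p}_{\eta_1}$ is only its average. Both arguments then use this average as the per-trial rate, so they rest on the same approximation.

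One small point is missing in your $\txT_3$. Charging only $m_2'$ equations, rather than $m_1'=m_2'+\tau$, needs the paper's remark that the $\tau$ equations vanishing under $\bfx_{\bar{\sfB}_4}=\mathbf{0}$ are discarded by position tracking at negligible cost.
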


\begin{proof}
We will calculate the complexity of each step in order.

In Step~1, the probability that all $\eta_1$ zero-guessing are correct is 
$\big(1 - \frac{\mu_1+1}{\beta}\big)^{f_1} \big(1 - \frac{\mu_1}{\beta}\big)^{h-f_1}$ 
where $\mu_1 = \floor{\eta_1/h}$ and $f_1 = \eta_1 \bmod h$.
The cost of performing Gaussian elimination 
on $\bfH_{[n] \setminus \bar{\sfB}_1} \in \F_q^{(n-k) \times (n-\eta_1)}$ 
is $\calO( \textup{C}_{\textup{ge}}(n\!-\!k, n\!-\!\eta_1))$.
Since each quadratic equation $x_{i_1}x_{i_2} = 0$ 
for $i_1 \neq i_2 \in \sfB_i \setminus (\sfB_{i,1}\cup\sfB_{i,2})$
can be constructed in a constant time
and the others can be constructed in time $\calO(n_1^2)$,
the time for constructing $m_1$ equations in $\calF_1$ is 
$\calO(\Delta_1 + (m_1 \!-\! \Delta_1)(n_1\!+\!1)^2)$. 
Next, since the computation of the matrix $\hat{\bfM}_1$, i.e., 
removing all free monomials from equations in $\calF_1$,
can be done while constructing equations, 
no additional cost is required.
These explain the term $\textup{p}_{\eta_1}^{-1} \cdot \txT_1$ in total complexity.
\vspace{1mm}

In Step~2, since only Gaussian elimination is performed 
to eliminate quadratic monomials $x_{i_1}x_{i_2}$ for $i_1, i_2 \in \bar{\sfB}_3$
and all free monomials already have been reduced, 
the complexity of this step becomes $\calO(\textup{C}_{\textup{ge}}(m_1\!-\!\Delta_1,N_1\!-\!\Delta_1))$,
which explains the term $\textup{p}_{\eta_1}^{-1} \txT_2$ in total complexity.
\vspace{1mm}

In Step~3, we first compute the probability $\textup{P}_2$ that all $\eta_2$ zero-guessing are correct 
given that all $\eta_1$ zero-guessing are correct. 
Let $E_i$ be the event that all $\abs{\sfB_{i,4}}$ zero-guessings are correct
given that all $\abs{\sfB_{i,1}}$ zero-guessings are correct over $\bfx_{\sfB_i}$.
Let $A_i$ be the event that $\bfx_{\sfB_{i,2} \cup \sfB_{i,3}}$ contains $1$ 
given that all $\abs{\sfB_{i,1}}$ zero-guessings are correct over $\bfx_{\sfB_i}$.
Then
\begin{align*}
\Pr[E_i] 
&= \Pr[A_i] \Pr[E_i \vert A_i] + \Pr[A_i^c] \Pr[E_i \vert A_i^c] \\
&= \tfrac{\abs{\sfB_{i,2}}+\abs{\sfB_{i,3}}}{\beta-\abs{\sfB_{i,1}}} \cdot 1
+ \Big( 1 - \tfrac{\abs{\sfB_{i,2}}+\abs{\sfB_{i,3}}}{\beta-\abs{\sfB_{i,1}}} \Big)
\Big( 1 - \tfrac{\abs{\sfB_{i,4}}}{\beta-\abs{\sfB_{i,1}}-\abs{\sfB_{i,2}}-\abs{\sfB_{i,3}}} \big) \\
&= 1 - \tfrac{\abs{\sfB_{i,4}}}{\beta - \abs{\sfB_{i,1}}}.
\end{align*}
Thus, we have
\begin{align*}
\txP_2 
&= \textstyle\prod_{i \in [h]} \Pr[E_i]
= \prod_{i \in [h]} \Big( 1 - \tfrac{\abs{\sfB_{i,4}}}{\beta - \abs{\sfB_{i,1}}} \Big) \\
&= \textstyle \prod_{i \in [h]} \Big( 1 - \tfrac{\abs{\sfB_{i,1}}+\abs{\sfB_{i,4}}}{\beta} \Big)
/ \prod_{i \in [h]} \Big( 1 - \tfrac{\abs{\sfB_{i,1}}}{\beta} \Big)
= \textup{p}_{\eta} / \textup{p}_{\eta_1}.
\end{align*}

The time for constructing $m_2'$ equations in $\calF_2'$ is
$\calO(m_2' (\eta_{\fix}\eta_2 + (k-\eta_1-\eta_{\fix})\eta_2) + \eta_2)
= \calO(m_2' (k-\eta_1+1) \eta_2 )$
since it consists only of operations to erase monomial terms related to $\eta_2$ guessed variables
and it requires constant-time to remove all zero or linear dependent equations,
which can be done by tracing positions that equations of the form 
$x_{i_1}(\bfa_{j_2}^\sft \bfx_{[n] \setminus \bar{\sfB}_1 \cup \bar{\sfB}_2} - t_{i_2})$ are placed
and removing all equations at these positions.\
Finally, the complexity of solving the linear system $\bfM_2'' \bfz_2'' = \bfd_2''$ 
where $\bfM_2'' \in \F_q^{m_2'' \times N_2''}$ 
is $\calO(\textup{C}_{\textup{ge}}(m_2'',N_2''))$. 
Thus, these explain the term $\textup{p}_\eta^{-1} \txT_3 
= \textup{p}_{\eta_1}^{-1} \cdot (p_{\eta}/p_{\eta_1})^{-1} \cdot \txT_3$
in the total complexity.
\qed
\end{proof}

\begin{corollary}\label{cor: nonamenow}
Take $\xi(N_2'') = c_{\textup{excess}} \in \N$ in Theorem~\ref{thm:main}.
Then, the refined attack with input parameters $(\eta_1, \eta, \eta_{\fix})$ 
satisfying 
\begin{align*}
m_2'' \ge N_2''+ c_{\textup{excess}},\quad
\eta+\eta_{\fix} \le k,\quad
\ceil{(n\!-\!k+\eta+\eta_{\fix})/h} \le \beta
\end{align*} 
can solve the $\operatorname{RSD}(n,k,h)$ problem in time complexity
\begin{align*}
\textup{p}_{\eta_1}^{-1} \cdot (\check{\txT}_1 + \check{\txT}_2) 
+ \textup{p}_{\eta}^{-1} \cdot \check{\txT}_3
\end{align*}
where 
\begin{align*}
\check{\txT}_1 
&= \calO\big( 
\textup{C}_{\textup{ge}}(n\!-\!k, n\!-\!\eta_1) 
+ \Delta_1 
+ (N_{2} \!+\! c_{\textup{excess}} \!+\! \tau \!-\! \Delta_{2})(n_1 \!+\! 1)^2 
\big), \\
\check{\txT}_2 
&= \calO\big( 
\textup{C}_{\textup{ge}}(N_{2} \!+\! c_{\textup{excess}} \!+\! \tau \!-\! \Delta_{2}, N_1\!-\!\Delta_1) 
\big), \\
\check{\txT}_3 
&= \calO\big( 
(N_2' \!+\! c_{\textup{excess}})
(k\!-\!\eta_1\!+\!1)\eta_2
+ \textup{C}_{\textup{ge}}(N_2''+c_{\textup{excess}}, N_2'') 
\big).
\end{align*}
\end{corollary}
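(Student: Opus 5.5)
The corollary follows from Theorem~\ref{thm:main} once we observe that the attack never needs more equations than the final linearization requires. Any surplus equations can be discarded at the outset, which shrinks every cost term. Concretely, we substitute $\xi(N_2'')=c_{\textup{excess}}$ into the hypotheses of Theorem~\ref{thm:main}, which gives exactly the stated conditions. The success probabilities $\textup{p}_{\eta_1}$ and $\textup{p}_\eta$ depend only on the blockwise allocation of guesses, not on how many equations we keep, so the repetition factors are unchanged. It remains to show that each $\txT_j$ can be replaced by $\check{\txT}_j$. We do this by running the same algorithm but constructing only as many non-free equations in $\calF_1$ as are needed downstream.

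The key step is the bookkeeping along the chain $\calF_1\to\calF_2\to\calF_2'\to\calF_2''$. Step~4 needs $m_2''=N_2''+c_{\textup{excess}}$ equations, and $m_2''=m_2'-\eta_{\fix}$, $N_2''=N_2'-\eta_{\fix}$, so $m_2'=N_2'+c_{\textup{excess}}$ suffice. By the identity $m_2'=m_1'-\tau$ derived before Step~4, this requires $m_1'=N_2'+c_{\textup{excess}}+\tau$ non-free equations after preprocessing. Adding back the $\gamma$ equations carrying quadratic $\bfx_{\fix}$ terms and the $\delta_1-\delta_2$ free-type monomials that are not in $\bar{\sfB}_3$, and using $N_2'=N_2-\gamma-\delta_2$ and $\delta_i=\Delta_i-\zeta$, the number of non-free equations of $\calF_1$ to construct comes out as $N_2+c_{\textup{excess}}+\tau-\Delta_2$. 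This count should be $\le m_1-\Delta_1$, which follows from the hypothesis $m_2''\ge N_2''+c_{\textup{excess}}$ after reversing the same chain.

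With this count in hand, the three cost terms follow directly:
\begin{itemize}
\item In $\txT_1$, the factor $(m_1-\Delta_1)(n_1+1)^2$ becomes $(N_2+c_{\textup{excess}}+\tau-\Delta_2)(n_1+1)^2$.
\item In $\txT_2$, the Gaussian elimination acts on an $(N_2+c_{\textup{excess}}+\tau-\Delta_2)\times(N_1-\Delta_1)$ matrix.
\item In $\txT_3$, we substitute $m_2'=N_2'+c_{\textup{excess}}$ and $m_2''=N_2''+c_{\textup{excess}}$.
\end{itemize}
Each replacement uses the monotonicity of $\textup{C}_{\textup{ge}}$ in its first argument.

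The main obstacle is justifying that an arbitrary subset of the non-free equations of the required size, chosen before preprocessing, still yields $m_2''$ linearly independent useful equations after the Step~2 elimination and the Step~3 substitution. The $\tau$ equations that vanish under $\bfx_{\bar{\sfB}_4}=\mathbf 0$ must be accounted for without knowing $\bar{\sfB}_4$ in advance. I would handle this by keeping all equations of the form $x_{i_1}(\bfa_{j_2}^\sft\bfx-t_{j_2})$ with $i_1\in\bar{\sfB}_4$-eligible positions (hence the $+\tau$), and by appealing to the same genericity heuristic as in the remark after Step~4. Under that heuristic, the retained equations behave as random ones, so that the $c_{\textup{excess}}$ margin of Section~\ref{ssec:overdetermined} still suffices.
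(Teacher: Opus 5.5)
Your proposal is correct and follows the paper's argument: like the paper, you propagate the row budget backward, $\check m_2''=N_2''+c_{\textup{excess}}$, $\check m_2'=\check m_2''+\eta_{\fix}$, $\check m_1'=\check m_2'+\tau$, $\check m_1=\check m_1'+\gamma+\delta_1$, and substitute these into the cost terms of Theorem~\ref{thm:main}; your feasibility check and your genericity remark go beyond what the paper writes. One bookkeeping fix: add $\gamma+\delta_1$ to $\check m_1'$ to get all of $\calF_1$ (free equations included), then subtract $\Delta_1=\delta_1+\zeta$ to obtain the non-free count $N_2+c_{\textup{excess}}+\tau-\Delta_2$, since adding ``$\gamma$ and $\delta_1-\delta_2$'' as you describe does not give that number.
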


\begin{proof}
The refinement is effective when the final system contains enough equations to solve $\calF_2''$, i.e., $m_2'' \ge N_2'' + c_{\textup{excess}}$.
To obtain $\check{m}_2'' := N_2'' + c_{\textup{excess}}$ equations at Step~4,
we use only as many equations as necessary from the previous steps. 
Thus, it is sufficient to use $\check{m}_1, \check{m}_2'$ equations in $\calF_1, \calF_2'$, respectively,
where
\begin{align*}
\check{m}_1 
&= \check{m}_1' + \gamma + \delta_1
=  N_{2} + c_{\textup{excess}} + \tau + \Delta_{1} - \Delta_{2}, \\
\check{m}_1' 
&= \check{m}_2' + \tau
= N_2' + c_{\textup{excess}} + \tau,\quad
\check{m}_2' 
= \check{m}_2'' + \eta_{\fix}
= N_{2}' + c_{\textup{excess}}.
\end{align*}
Thus, with these choices, the claimed complexity can be obtained. 
\qed
\end{proof}


\subsection{Asymptotic consequences}\label{ssec:asymptoticNonbinaryMain}
We briefly state the two main asymptotic consequences of the refined non-binary attack; the complete derivation, including the rounded blockwise guessing exponent, is given in \ref{ssec:asymptoticNonbianry}.
Let $R=k/n$, $\rho=h/n$, and $\beta=1/\rho$. First, without zero guessing, the refined attack is polynomial whenever
\begin{align}\label{eq:nonbinary-poly-main}
R^2\rho n+3R\rho\le 1-\rho.
\end{align}
Indeed, after eliminating the $n-k$ syndrome variables, the system contains $m_2=\frac{n}{2\rho}(1-\rho)$ quadratic equations and $N_2=\frac{R^2n^2+3Rn}{2}$ linearized monomials; the preprocessing removes the same number of equations and monomials, so $m_2\ge N_2$ suffices. Thus, up to lower-order terms, the polynomial-time region is $R^2\rho\le (1-\rho-o(1))/n$, or $R^2\rho\le n^{-1}$ in the usual coarse convention.

Second, if $k\le n-h$, taking $\eta_1=0$, $\eta=k$, and $\eta_{\rm pre}=0$ removes all variables after the balanced zero-guessing step. Each trial is then polynomial, and the total cost is
\begin{align}\label{eq:nonbinary-upper-main}
\mathrm{poly}(n)\cdot 2^{\rho n\Lambda_\beta(R)+O(1)},
\end{align}
where $\Lambda_\beta$ is the exact rounded blockwise exponent defined in \ref{ssec:asymptoticNonbianry}. Under the continuous approximation, this becomes
$\mathrm{poly}(n)\cdot 2^{\rho n\log_2(1/(1-R))}$.

\section{Extension of the refined attack to binary RSD}\label{sec:algorithmBinary}

The binary case follows the same refined framework as the non-binary case, with changes induced by the additional block-sum equations and the binary monomial structure.

\vspace{-2.5mm}
\subsection{Differences from the non-binary case}\label{ssec:difference}

For binary RSD over $\F_2$, regularity implies an additional linear relation in every block: the coordinates in each block sum to one.
Thus, in addition to the syndrome equations, we have $h$ linear equations of the form
$\sum_{j \in \sfB_i} x_j = 1$ for $i \in [h]$.
These additional equations change the number of variables that can be eliminated before the algebraic stage, but otherwise preserve the blockwise balancing strategy.

\vspace{-2.5mm}

\subsubsection{Distributing linear equations}
We describe how the variables eliminated by the linear equations are distributed across the blocks.
In the binary setting, besides the $n-k$ linear equations arising from $\mathbf{H}\mathbf{x}=\mathbf{t}$, there are $h$ additional block-wise linear equations $\sum_{j \in \sfB_{i}} x_{j} = 1$ where $i \in [h]$.
Thus, in contrast to the non-binary setting, the total number of linear equations is $n-k+h$.
Accordingly, the parameters used to distribute these equations are given by
\begin{align*}
    \omega = \left\lfloor \tfrac{n-k+h}{h} \right\rfloor, \qquad 
    f_{\rm lin} = (n-k+h) \bmod h.
\end{align*}
Apart from this change in the values of $\omega$ and $f_{\rm lin}$, the blockwise construction is identical to that in the non-binary case.
In particular, $\sfB_{i,2}$ is defined in the same way, and its cardinality is computed from the corresponding values of $\omega$ and $f_{\rm lin}$ by the same rule.

\subsubsection{Refined attack procedure}
The algorithm for solving the binary RSD problem follows the same overall procedure as in the non-binary case.
Once $\omega$ and $f_{\rm lin}$ are fixed as above, the partitions $\sfB_{i,1},\ \sfB_{i,2},\ \sfB_{i,3},\ \sfB_{i,4},\ \sfB_{i,5}$ are determined exactly as before, and the same argument gives the condition on the block size.

There are, however, two binary-specific differences.
First, the system contains the additional $h$ block-wise linear equations described above.
Second, the number of quadratic monomials is smaller, since over the binary field any quadratic monomial involving a single variable reduces to a linear monomial.
These changes affect the final constraint used to filter out incorrect zero guesses, and hence the function $\xi(N)$ over the binary field differs from its non-binary counterpart.
Further details are given in Corollary~\ref{ssec:complexityNonbinary}.
Since all other steps are unchanged from the non-binary algorithm, we now proceed to the complexity analysis.

	
	
	

\subsection{Complexity analysis}\label{ssec:complexityBinary}
Because the attack structure is unchanged, the complexity analysis mirrors the non-binary case; only the variable counts, monomial counts, and Gaussian-elimination costs differ.
As in the non-binary case, the attack successively transforms the target system $\calF$ into smaller systems:
\begin{align*}
    \calF 
    \longrightarrow \calF_1 
    \longrightarrow \calF_2 
    \longrightarrow \calF_2'
    \longrightarrow \calF_2''
\end{align*}
We recall that each system $\calF$ (resp. $\calF_1$, $\calF_2$, $\calF_2'$, $\calF_2''$) consists of $m$ (resp. $m_1$, $m_2$, $m_2'$, $m_2''$) equations, $n$ (resp. $n_1$, $n_2$, $n_2'$, $n_2''$) variables, and $N$ (resp. $N_1$, $N_2$, $N_2'$, $N_2''$) monomials.
In the binary case, \begin{align*}
    n_1 = k - h - \eta_1,\quad
    n_2 = n_2' = n_2'' = k - h - \eta,\quad
    N_1 = \tfrac{n_1(n_1+1)}{2},\quad
    N_2 = \tfrac{n_2(n_2+1)}{2},
\end{align*} 
and $\gamma = \binom{\eta_{\fix}}{2}$, whereas $m_{1}, m_{2}, m_{2}', m_{2}'', N_{2}', N_{2}''$ are defined exactly as in the non-binary setting.
\vspace{2mm}

The following theorem gives the binary counterpart of Theorem~\ref{thm:main}.
\begin{theorem}\label{thm:main_bin}
The refined attack with input parameters $(\eta_1, \eta, \eta_{\fix})$ 
satisfying 
\begin{align*}
m_2'' - N_2'' \ge \xi(N_2''),\quad
\eta+\eta_{\fix} \le k - h,\quad
\ceil{(n\!-\!k\!+\!h +\eta+\eta_{\fix})/h} \le \beta
\end{align*} 
can solve the binary $\operatorname{RSD}(n,k,h)$ problem in time complexity
\begin{align*}
\textup{p}_{\eta_1}^{-1} \cdot (\txT_1 + \txT_2) 
+ \textup{p}_{\eta}^{-1} \cdot \txT_3
\end{align*}
where $\textup{p}_{x} := \left(1-\frac{\floor{x/h}+1}{\beta}\right)^{x \bmod h}
\left(1-\frac{\floor{x/h}}{\beta}\right)^{h - (x \bmod h)}$
and
\begin{align*}
\txT_1 
&= \calO\big(
(n-k)(n-\eta_{1}) 
+ \textup{C}^{\textup{bin}}_{\textup{ge}}(n\!-\!k, n\!-\!\eta_1\!-\!h) 
+ \Delta_1 + (m_1 \!-\! \Delta_1)(n_1 \!+\! 1)^2 
\big), \\
\txT_2 
&= \calO\big( 
\textup{C}^{\textup{bin}}_{\textup{ge}}(m_1\!-\!\Delta_1,N_1\!-\!\Delta_1) 
\big), \\
\txT_3 
&= \calO\big( 
m_2'(k\!-\!h\!-\!\eta_1\!+\!1)\eta_2
+ \textup{C}^{\textup{bin}}_{\textup{ge}}(m_2'', N_2'') 
\big).
\end{align*}
\end{theorem}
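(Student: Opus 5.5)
The plan is to follow the structure of the proof of Theorem~\ref{thm:main} step by step. Only two places need new work: the cost of the linear stage, which now has $n-k+h$ equations, and the monomial counts induced by $x_i^2=x_i$ over $\F_2$. The probabilistic part should carry over verbatim. The distribution of $\sfB_{i,1}$ and $\sfB_{i,4}$ is the same balanced rule of Section~\ref{ssec:blockNonbinary}, so the conditional-probability computation $\Pr[E_i]=1-\abs{\sfB_{i,4}}/(\beta-\abs{\sfB_{i,1}})$ goes through unchanged. This gives $\txP_2=\textup{p}_\eta/\textup{p}_{\eta_1}$ again, and hence the two repetition factors $\textup{p}_{\eta_1}^{-1}$ for Steps~1--2 and $\textup{p}_\eta^{-1}$ for Steps~3--4. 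The constraint $\ceil{(n-k+h+\eta+\eta_{\fix})/h}\le\beta$ is exactly the block-size requirement of Section~\ref{ssec:blockNonbinary} with $\omega,f_{\textup{lin}}$ replaced by their binary values. The constraint $\eta+\eta_{\fix}\le k-h$ ensures that the sets $\bar\sfB_3,\bar\sfB_4$ fit inside the $k-h-\eta_1$ remaining variables.

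For $\txT_1$, I would first handle the $h$ block-sum equations. Each involves only the $\beta$ coordinates of its own block, so one variable per block can be eliminated directly. Substituting it into every syndrome row costs $\calO((n-k)(n-\eta_1))$ field operations, which is the extra additive term. The remaining $(n-k)\times(n-\eta_1-h)$ system is then put in reduced form on $\bar\sfB_2$ at cost $\textup{C}^{\textup{bin}}_{\textup{ge}}(n-k,n-\eta_1-h)$. I will use the striped/M4RI-type bound because $q=2$. Constructing $\calF_1$ is as before: free equations cost $\calO(1)$ each, i.e.\ $\Delta_1$ in total, and the other $m_1-\Delta_1$ products of affine forms in $n_1=k-h-\eta_1$ variables cost $\calO((n_1+1)^2)$ each. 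Here the products are reduced modulo $x_i^2=x_i$, which is only a relabelling of monomials and so costs nothing extra asymptotically.

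For $\txT_2$ and $\txT_3$, I would repeat the Step~2 argument: Gaussian elimination on the non-free part of $\hat\bfM_1$, a matrix of size $(m_1-\Delta_1)\times(N_1-\Delta_1)$, now with $N_1=n_1(n_1+1)/2$ because squares collapse to linear terms. The bound $m_2'\ge m_2-\gamma-\delta_2$ uses $\gamma=\binom{\eta_{\fix}}{2}$, since there are no square monomials among the $\bfx_{\fix}$ to eliminate. The key identity $m_2'=m_1'-\tau$ is purely combinatorial about which rows vanish when $\bfx_{\bar\sfB_4}=\mathbf 0$, so it holds unchanged. Step~3 erases terms involving the $\eta_2$ guessed variables from $m_2'$ equations, at cost $\calO(m_2'(k-h-\eta_1+1)\eta_2)$. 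Step~4 is a Gaussian elimination of size $m_2''\times N_2''$ at cost $\textup{C}^{\textup{bin}}_{\textup{ge}}(m_2'',N_2'')$. The condition $m_2''-N_2''\ge\xi(N_2'')$, with $\xi$ chosen via Section~\ref{ssec:overdetermined} for $q=2$, ensures that wrong guesses are rejected.

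The main obstacle I expect is the bookkeeping around the block-sum equations. It must be checked that eliminating one variable per block through $\sum_{j\in\sfB_i}x_j=1$ is compatible with the balanced assignment of $\sfB_{i,2}$, which now has $\omega=\floor{(n-k+h)/h}$ entries. It must also be checked that this elimination keeps the quadratic equations in the same product-of-affine-forms shape, so that the counts $m_1,\Delta_1,\delta_i,\tau$ keep their non-binary definitions. I would first verify that the combined $(n-k+h)\times(n-\eta_1)$ system is reduced on $\bar\sfB_2$ with these blockwise sizes. Invertibility on $\bar\sfB_2$ holds only with the constant probability $p(\infty,2)$, which is absorbed exactly as in the remark after Step~1. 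Once this is in place, summing the step costs with their repetition factors gives the stated bound.
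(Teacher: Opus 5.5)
Your proposal is correct and follows the paper's proof: the paper likewise reduces everything to the non-binary argument, and its only new step is to eliminate one variable per block via $\sum_{j\in\sfB_i\setminus\sfB_{i,1}}x_j=1$, at cost $(n-k)(\beta-\mu_1)h=\calO((n-k)(n-\eta_1))$, before running Gaussian elimination on the resulting $(n-k)\times(n-\eta_1-h)$ matrix. The compatibility concern you flag is immediate, because $\omega=\lfloor (n-k+h)/h\rfloor\ge 1$ leaves room in every $\sfB_{i,2}$ for the variable eliminated by the block-sum equation.
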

\begin{proof}
    The refined attack proceeds identically to the non-binary case, except for the elimination arising from the additional $h$ linear equations associated with the blocks $\sum_{j \in \sfB_i \setminus \sfB_{i,1}} x_j = 1$ where $i \in [h]$.
    After the first zero-guessing step with $\eta_1$ variables, we obtain $\bfH_{[n] \setminus \bar{\sfB}_1} \in \F_2^{(n-k) \times (n-\eta_1)}$.
    This matrix can be further reduced using the above $h$ linear equations, and the number of operations required for this step is bounded by $(n-k)(\beta-\mu_1)h$.
    Hence, the cost of obtaining $\bfH_{[n] \setminus \bar{\sfB}_1}' \in \F_2^{(n-k) \times (n-\eta_1-h)}$ is $\calO\big((n\!-\!k)(n\!-\!\eta_1)\big)$.
    Consequently, the cost of Gaussian elimination for $\bfH_{[n] \setminus \bar{\sfB}_1}'$ is $\calO\big(\textup{C}^{\textup{bin}}_{\textup{ge}}(n\!-\!k,\,n\!-\!\eta_1\!-\!h)\big)$.
    \qed
\end{proof}


\begin{corollary}\label{cor:binary}
Take $\xi(N_2'') = a N_2''$ in Theorem~\ref{thm:main_bin}.
Then, the refined attack with input parameters $(\eta_1, \eta, \eta_{\fix})$ 
satisfying 
\begin{align*}
    m_2'' \ge  (a + 1) N_2'',\quad
    \eta+\eta_{\fix} \le k - h,\quad
    \ceil{(n\!-\!k\!+\!h+\eta+\eta_{\fix})/h} \le \beta
\end{align*}
can solve the binary $\operatorname{RSD}(n,k,h)$ problem in time complexity
\begin{align*}
    \textup{p}_{\eta_1}^{-1} \cdot (\check{\txT}_1 + \check{\txT}_2) 
    + \textup{p}_{\eta}^{-1} \cdot \check{\txT}_3
\end{align*}
where 
\begin{align*}
    \check{\txT}_1 
    &= \calO\big(
    (n\!-\!k)(n\!-\!\eta_{1})
    \!+\! \textup{C}^{\textup{bin}}_{\textup{ge}}(n\!-\!k, n\!-\!\eta_1\!-\!h)
    \!+\! \Delta_1 
    \!+\! (N_{2} \!+\! a N_{2}'' \!+\! \tau \!-\! \Delta_{2})(n_1 \!+\! 1)^2 
    \big), \\
    \check{\txT}_2 
    &= \calO\big( 
    \textup{C}^{\textup{bin}}_{\textup{ge}}(N_{2} \!+\! a N_{2}'' \!+\! \tau \!-\! \Delta_{2}, 
    N_{1} - \Delta_{1}) 
    \big), \\
    \check{\txT}_3 
    &= \calO\big( 
    (N_{2}' + a N_{2}'')
    (k\!-\!\eta_1\!+\!1)\eta_2
    + \textup{C}^{\textup{bin}}_{\textup{ge}}((a+1)N_{2}'', N_2'') 
    \big).
\end{align*}
\end{corollary}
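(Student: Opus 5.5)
The plan is to derive Corollary~\ref{cor:binary} from Theorem~\ref{thm:main_bin}, in the same way Corollary~\ref{cor: nonamenow} was derived from Theorem~\ref{thm:main}. The only change is the choice $\xi(N_2'') = aN_2''$ in place of a constant excess. First I substitute this $\xi$ into the hypothesis $m_2''-N_2''\ge \xi(N_2'')$ of Theorem~\ref{thm:main_bin}. This gives exactly the stated condition $m_2''\ge (a+1)N_2''$. The other two constraints, $\eta+\eta_{\fix}\le k-h$ and the block-size ceiling, carry over verbatim. So the refined binary attack is correct under the corollary's hypotheses, and its cost is bounded by $\textup{p}_{\eta_1}^{-1}(\txT_1+\txT_2)+\textup{p}_{\eta}^{-1}\txT_3$ as in Theorem~\ref{thm:main_bin}.

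Second, I sharpen the cost by using only as many equations as Step~4 needs. The final linearized system must contain $\check m_2'' := (a+1)N_2'' = N_2''+aN_2''$ equations. I then trace this count backwards through $\calF_1\to\calF_2\to\calF_2'\to\calF_2''$, using the relations recorded in Section~\ref{ssec:algorithmNonbinary}, which hold identically in the binary setting since only $n_1$, $n_2$ and $\gamma$ change:
\begin{itemize}
\item $m_2''=m_2'-\eta_{\fix}$ and $N_2''=N_2'-\eta_{\fix}$, which give $\check m_2' = N_2'+aN_2''$;
\item $m_2'=m_1'-\tau$, which gives $\check m_1' = N_2'+aN_2''+\tau$;
\item $m_1'=m_1-\gamma-\delta_1$ and $N_2'=N_2-\gamma-\delta_2$, which give $\check m_1 = N_2+aN_2''+\tau+\Delta_1-\Delta_2$, using $\delta_j=\Delta_j-\zeta$.
\end{itemize}

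Third, I substitute these truncated counts into $\txT_1$, $\txT_2$, $\txT_3$:
\begin{itemize}
\item In $\txT_1$, the number of non-free equations constructed becomes $\check m_1-\Delta_1 = N_2+aN_2''+\tau-\Delta_2$. This gives the $(n_1+1)^2$ term. The $(n-k)(n-\eta_1)$ cost of the block-sum reduction, the $\textup{C}^{\textup{bin}}_{\textup{ge}}$ term and the $\Delta_1$ term are unchanged.
\item In $\txT_2$, the Gaussian elimination acts on a $(\check m_1-\Delta_1)\times(N_1-\Delta_1)$ matrix.
\item In $\txT_3$, the construction cost becomes $\check m_2'(k-\eta_1+1)\eta_2$, where $k-h-\eta_1+1$ is bounded by $k-\eta_1+1$ inside the $\calO$. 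The final elimination is on a $(a+1)N_2''\times N_2''$ matrix.
\end{itemize}

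The main obstacle is justifying that truncation is legitimate. Discarding equations at the $\calF_1$ stage must still leave, after preprocessing and after the second-stage substitution, $\check m_2''$ independent non-free equations in $\calF_2''$. The equations eliminated by setting $\bfx_{\bar\sfB_4}=\mathbf 0$ are exactly the $\tau$ products $x_{i_1}(\cdots)$ with $i_1\in\bar\sfB_4$. I would therefore retain all equations of that form, together with enough others. The accounting $(m_1-\Delta_1)-(m_2-\Delta_2)=\tau$ from Step~3 then ensures the count propagates exactly, in the equality case $b_{i_1,i_2}=1$. The filtering guarantee for incorrect guesses comes from the overdetermined-system analysis of Section~\ref{ssec:overdetermined} with excess $aN_2''$. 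This is the same step as in the proof of Corollary~\ref{cor: nonamenow}, so I would simply mirror it.
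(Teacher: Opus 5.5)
Your proposal is correct and matches the paper's proof. You substitute $\xi(N_2'')=aN_2''$, back-propagate $\check{m}_2''=(a+1)N_2''$ through $\check{m}_2'$, $\check{m}_1'$, $\check{m}_1$ exactly as the paper does, and plug the truncated row counts into $\txT_1,\txT_2,\txT_3$; your remark $k-h-\eta_1+1\le k-\eta_1+1$ is what reconciles $\check{\txT}_3$ with Theorem~\ref{thm:main_bin}.

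One correction to your extra justification of truncation, a step the paper simply asserts. You cannot deliberately ``retain all equations of that form,'' because $\bar{\sfB}_4$ changes across the inner trials that share the truncated, preprocessed $\calF_1$. You also do not need to: $\tau$ is a worst-case allowance, since only the $\tau$ products $x_{i_1}(\bfa_{j_2}^\sft\bfx-t_{j_2})$ with $i_1\in\sfB_{i,4}$, $j_2\in\sfB_{i,2}$ vanish, so any choice of $\check{m}_1$ rows works for every second-stage guess.
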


\begin{proof}
To obtain $\check{m}_2'' := (a+1)N_2''$ equations
we use only as many equations as necessary from the previous steps. 
Thus, it is sufficient to use $\check{m}_1, \check{m}_2'$ equations in $\calF_1, \calF_2'$, respectively,
where
\begin{align*}
\check{m}_1 
&= \check{m}_1' + \gamma + \delta_1
= N_{2} + a N_{2}'' + \tau + \Delta_{1} - \Delta_{2},\\
\check{m}_1' 
&= \check{m}_2' + \tau
= N_{2}' + a N_{2}'' + \tau,\quad
\check{m}_2' 
= \check{m}_2'' + \eta_{\fix}
= N_{2}' + aN_{2}''.
\end{align*}
Thus, with these choices, the claimed complexity can be obtained. 
\qed
\end{proof}

\subsection{Asymptotic consequences}\label{ssec:asymptoticBinaryMain}
The binary case admits the analogous asymptotic picture; full proofs and the rounded guessing analysis are deferred to \ref{ssec:asymptoticBinary}. Let $R=k/n$ and $\rho=h/n$. If $R\le\rho$, the syndrome equations together with the $h$ block-sum equations already remove all asymptotic degrees of freedom. For $R>\rho$, no zero guessing is needed in the polynomial-time region, and the sufficient condition is
\begin{align}\label{eq:binary-poly-main}
(R-\rho)^2\rho n+(R-\rho)\rho\le 1-\rho.
\end{align}
Here the remaining number of variables is $(R-\rho)n$, the number of linearized monomials is $\frac{(R-\rho)^2n^2+(R-\rho)n}{2}$, and the number of quadratic equations is $\frac{n}{2\rho}(1-\rho)$. Consequently, the coarse polynomial-time condition is $(R-\rho)^2\rho\le n^{-1}$.

For $R>\rho$, choosing $\eta_1=0$, $\eta=k-h$, and $\eta_{\rm pre}=0$ removes all variables after the balanced zero-guessing step and gives
\begin{align}\label{eq:binary-upper-main}
\mathrm{poly}(n)\cdot 2^{\rho n\Lambda_\beta(R-\rho)+O(1)}
\end{align}
operations over $\mathbb F_2$. Under the continuous approximation, the dominant exponential term simplifies to
$2^{\rho n\log_2(1/(1-R+\rho))}$.
These statements are sufficient for comparing the asymptotic regimes; the concrete estimates in the applications use the exact finite-parameter formulas rather than these approximations.

For completeness, the continuous bound immediately yields the following useful regimes. If $R$ and $\rho$ are constants, the exponent is $\rho n\log_2(1/(1-R+\rho))$. If $R$ is constant and $\rho=o(1)$, it becomes $\rho n\log_2(1/(1-R))+o(\rho n)$. If $R=o(1)$ and $\rho=o(R)$, then
\begin{align*}
\log_2 \txT_{\rm total}
\le \frac{\rho nR}{\ln 2}+o(\rho nR)+\mathrm{polylog}(n),
\end{align*}
whereas for $\rho=\lambda R$ with a fixed $0<\lambda<1$,
\begin{align*}
\log_2 \txT_{\rm total}
\le \frac{\rho n(1-\lambda)R}{\ln 2}+o(\rho nR)+\mathrm{polylog}(n).
\end{align*}
These expansions are obtained from $-\log_2(1-x)=x/\ln 2+O(x^2)$ and make explicit how the exponent decreases when the regular weight scales below the code rate.

\section{Concrete security reassessment of existing RSD parameters}\label{sec:application}

We now turn to the main concrete consequence of the refined attack: a reassessment of RSD instances used in existing cryptographic constructions~\cite{CCS:BCGI18,C:KPRR25,C:HOSS18,CCS:YWLZW20,SP:WYKW21,AC:BCCGJ24,AC:MMST25,CCS:SLYYWW25}. Rather than viewing the following results merely as a benchmark of a new solver, we use the complexity bounds from Sections~\ref{sec:algorithmNonbinary} and~\ref{sec:algorithmBinary} to determine whether previously selected RSD parameters still meet their intended security targets.

For each RSD instance, an exhaustive search is carried out to determine $(\eta_{1}, \eta, \eta_\fix)$ attaining the minimum complexity. Throughout this search, condition $N_{2}'' q^{-\xi(N_{2}'')} \ll p_{\eta}$ is imposed so that, when zero guessing is incorrect, the resulting system is inconsistent with high probability. For most instances over non-binary fields, this condition is satisfied by taking $\xi(N_2'') = c_{\textup{excess}} = 1$, while for most binary instances it is satisfied by taking $\xi(N_2'') = a N_2''$, $a=\tfrac{1}{2}$. However, some parameter sets require a different choice of $\xi(N)$ in order to satisfy this condition. In Table~\ref{tbl:SummaryResult}, the symbol $\wr$ denotes complexities computed over non-binary fields with $c_{\textup{excess}}=2$, whereas the symbols $\dagger$ and $\ddagger$ denote complexities computed over binary fields with $a=\tfrac{3}{2}$ and $a=\tfrac{5}{2}$, respectively. It should also be noted that some RSD instances do not satisfy $n/h \in \mathbb{Z}$. In such cases, we assume that the last $n - h \floor{n/h}$ coordinates are zero, as done in several previous works~\cite{EC:BriOyg23,EC:LWYY24,C:EssSan24,EPRINT:KimLee24,AC:WWYLYZ25}. The complexity estimates and corresponding parameters are summarized in Table~\ref{tbl:SummaryResult}. All values reported in Table~\ref{tbl:SummaryResult} can be reproduced using the supplementary materials~\footnote{\url{https://github.com/ho-chang/RSD_ReEstimate}}.

\subsubsection{PCG/COT/VOLE}
Pseudorandom Correlation Generators provide a general framework for generating large amounts of correlated randomness from short seeds, and have found numerous applications in cryptography. Several PCG-based constructions are designed based on the hardness of the regular-noise Learning Parity with Noise problem,
which naturally gives rise to RSD.

The first table in Table~\ref{tbl:SummaryResult} shows parameter sets, proposed in \cite{C:KPRR25}, intended to provide 128-bit security for stationary syndrome decoding problem that consists of multiple RSD instances sharing a supporter. 
The analysis shows that none of these instances achieves the claimed security. 
More precisely, four non-binary parameter sets exhibit a security loss of 23--32 bits, while four binary parameter sets exhibit a security loss of 30--51 bits.

\cite{AC:MMST25} presents PCGs for multiparty Beaver triples
based on both the primal and dual variants of LPN problem with regular noise,
which give rise to the RSD problem.
All of these RSD instances are evaluated under the proposed attack, 
and the resulting complexity estimates show that 
none of the proposed parameter sets achieves the claimed 128 bits of security, 
with security losses of 41--63 bits.

Sun et al.~\cite{CCS:SLYYWW25} introduce a committed VOLE protocol 
whose security is based on two dual LPN problems or on both dual and primal LPN problems. 
Three pairs of RSD instances are identified in the conference version~\cite{CCS:SLYYWW25} 
and the full version~\cite{eprint:SLYYWW25}, respectively.
Since it is sufficient to analyze only one of the two RSD parameters constituting each pair, 
a total of five RSD parameters is subject to analysis; 
three from \cite{CCS:SLYYWW25} and two from \cite{eprint:SLYYWW25}.
The analysis shows that none of these instances achieves 
the claimed 128-bit security level against the proposed attack.
More precisely, all five instances exhibit a security loss of 37--66 bits.

Ferret~\cite{CCS:YWLZW20} is a correlated OT extension protocol and its security is based on the LPN problem with regular noise, 
which naturally gives rise to the RSD problem.
The results show that 
none of these parameter sets achieves the claimed 128-bit security against the proposed attack.
It should be noted that one of the two instances had already been shown by previous attacks to fall below the claimed security level; for this instance, the proposed attack further improves the best previously known complexity by 28 bits (see Table~\ref{tbl:ComparingResult2}).
For the other instance, the proposed attack also reduces the complexity below the claimed 128-bit security level, corresponding to a security loss of 17 bits.

Moreover, the second and third tables in Table~\ref{tbl:SummaryResult}, 
introduced in~\cite{CCS:BCGI18} and later revisited in~\cite{EC:LWYY24}, have been widely adopted as benchmark instances in subsequent cryptanalytic studies~\cite{EC:BriOyg23,EPRINT:KimLee24,C:EssSan24,AC:WWYLYZ25}.
The result shows that the proposed attack improves upon the best previously known attacks by up to 29 bits in the non-binary case and up to 25 bits in the binary case (see Table~\ref{tbl:ComparingResult1}).
For the refined 128-bit parameter sets from \cite{EC:LWYY24},
none of these instances achieves the claimed security level against the proposed attack, 
with security losses of 27--47 bits over non-binary fields and 17--48 bits over the binary field (third table in Table~\ref{tbl:SummaryResult}).


\begin{table}[t]
    \centering
    \scriptsize
    \caption{Complexity estimates for the parameter sets from~\cite{CCS:BCGI18,EC:LWYY24}.}
    \label{tbl:ComparingResult1}
        \begin{tabular}{
            >{\centering}p{2.3cm} |
            >{\centering}p{0.7cm} |
            >{\centering}p{1.2cm} | >{\centering}p{1.2cm} | >{\centering}p{1.2cm} | >{\centering}p{1.2cm} | >{\centering}p{1.2cm} | 
            >{\centering}p{1.2cm}
        }
        \hline
        $(n,k,h)$ & $q$ 
        & \makecell{AGB\\\cite{EC:BriOyg23}} 
        & \makecell{ISD\\\cite{EC:LWYY24}} 
        & \makecell{R-ISD\\\cite{C:EssSan24}} 
        & \makecell{RP\\\cite{EPRINT:KimLee24}} 
        & \makecell{HAGB\\\cite{AC:WWYLYZ25}} 
        & Ours \tabularnewline\hline\hline
        $(2^{22}, 64770, 4788)$ & \multirow{7}{*}{$2^{128}$} & 110 & 152 & 173 & 151 & 89.27 & \textbf{86.70} \tabularnewline 
        $(2^{20},32771, 2467)$  & & 131 & 155 & 173 & 148 & 114.27 & \textbf{84.50} \tabularnewline 
        $(2^{18},15336, 1312)$  & & 133 & 153 & 160 & 147 & 119.03 & \textbf{93.45} \tabularnewline 
        $(2^{16},7391 , 667 )$  & & 150 & 151 & 160 & 142 & 131.94 & \textbf{110.45} \tabularnewline 
        $(2^{14},3482 , 338 )$  & & 150 & 150 & 153 & 139 & 133.15 & \textbf{115.96} \tabularnewline 
        $(2^{12},1589 , 172 )$  & & 150 & 155 & 148 & 140 & 132.60 & \textbf{120.21} \tabularnewline 
        $(2^{10},652  , 106 )$  & & 179 & 194 & 173 & 159 & 158.21 & \textbf{150.32} \tabularnewline\hline 
        $(2^{22},64770, 2735)$  & \multirow{7}{*}{$2^{128}$} & 104 & 108 & 126 & 105 & 86.51 & \textbf{86.51} \tabularnewline 
        $(2^{20},32771, 1419)$  & & 102 & 107 & 122 & 103 & 83.91 & \textbf{80.86} \tabularnewline 
        $(2^{18},15336, 760 )$  & & 107 & 105 & 117 & 100 & 90.95 & \textbf{74.68} \tabularnewline 
        $(2^{16},7391 , 389 )$  & & 111 & 103 & 113 & 97 & 96.39 & \textbf{73.97} \tabularnewline 
        $(2^{14},3482 , 198 )$  & & 110 & 101 & 107 & 94 & 95.62 & \textbf{77.83} \tabularnewline 
        $(2^{12},1589 , 98  )$  & & 107 & 100 & 101 & 89 & 93.08 & \textbf{80.06} \tabularnewline 
        $(2^{10},652  , 57  )$  & & 111 & 111 & 106 & 98 & 97.61 & \textbf{89.31} \tabularnewline\hline 
        $(2^{22}, 64770, 4788)$ & \multirow{7}{*}{$2$} & 103 & 135 & 140 & 143 & 87.04 & \textbf{85.07} \tabularnewline 
        $(2^{20},32771, 2467)$  & & 125 & 135 & 138 & 139 & 105.44 & \textbf{79.53} \tabularnewline 
        $(2^{18},15336, 1312)$  & & 122 & 129 & 126 & 137 & 108.27 & \textbf{83.73} \tabularnewline 
        $(2^{16},7391 , 667 )$  & & 139 & 125 & 126 & 130 & 119.67 & \textbf{99.78} \tabularnewline 
        $(2^{14},3482 , 338 )$  & & 138 & 122 & 118 & 125 & 118.53 & \textbf{103.64} \tabularnewline 
        $(2^{12},1589 , 172 )$  & & 135 & 121 & 109 & 120 & 113.69 & \textbf{104.25} \tabularnewline 
        $(2^{10},652  , 106 )$  & & 145 & 159 & \textbf{113} & 117 & 115.58 & 113.86 \tabularnewline\hline 
        $(2^{22},64770, 2735)$  & \multirow{7}{*}{$2$} & 103 & 99 & 103 & 102 & 86.92 & \textbf{83.10} \tabularnewline 
        $(2^{20},32771, 1419)$  & & 98 & 97 & 98 & 100 & 81.09 & \textbf{77.74} \tabularnewline 
        $(2^{18},15336, 760 )$  & & 104 & 92 & 93 & 97 & 86.74 & \textbf{71.64} \tabularnewline 
        $(2^{16},7391 , 389 )$  & & 108 & 87 & 90 & 92 & 90.90 & \textbf{70.35} \tabularnewline 
        $(2^{14},3482 , 198 )$  & & 106 & 83 & 84 & 88 & 88.97 & \textbf{73.56} \tabularnewline 
        $(2^{12},1589 , 98  )$  & & 103 & 80 & 78 & 84 & 82.16 & \textbf{74.75} \tabularnewline 
        $(2^{10},652  , 57  )$  & & 101 & 90 & \textbf{76} & 84 & 78.44 & 78.66 \tabularnewline\hline 
    \end{tabular}
\end{table}

\subsubsection{ZKP}
Wolverine~\cite{SP:WYKW21} is an 
zero-knowledge proof protocol for Boolean and arithmetic circuits.
Its offline phase is based on an sVOLE extension protocol 
whose security relies on the LPN problem with regular noise. 
Our results show that none of parameter sets proposed in Wolverine attains the claimed 128 bits of security.
It is worth noting that one of the two instances had already been shown by previous attacks to fall below the claimed security; for this instance, the proposed attack further improves the best previously known complexity by 28 bits (see Table~\ref{tbl:ComparingResult2}).
For the other instance, the proposed attack also reduces the claimed 128-bit security by 17 bits.

\subsubsection{MPC}
TinyKeys~\cite{C:HOSS18} is an MPC framework for the dishonest-majority setting that improves the efficiency of GMW and BMR style protocols, whose security relies on the RSD problem.
Among the parameter sets proposed in the full version \cite{JC:HOSS22} of TinyKeys, 
one RSD instance
is estimated to fall short of the claimed 128-bit security level by 9 bits under the proposed attack.

\subsubsection{Signature from MPC-in-the-Head}
Bui et al.~\cite{AC:BCCGJ24} 
introduce a post-quantum digital signature scheme based on the MPC-in-the-head paradigm.
The scheme relies on the regular syndrome decoding assumption, and its zero-knowledge protocol is designed to verify the correctness of an RSD instance within the MPC-in-the-head framework.
We evaluate the parameter set proposed in~\cite{AC:BCCGJ24}, which was intended to achieve the 128-bit security level.
According to Table~\ref{tbl:SummaryResult}, this parameter set is estimated to provide approximately 2 bits less security against the proposed attack than claimed.

\begin{table}[t]
    \centering
    \scriptsize
    \caption{Complexity estimates for the parameter sets from Wolverine~\cite{SP:WYKW21} and Ferret~\cite{CCS:YWLZW20} that were proposed to provide 128-bit security.}
    \label{tbl:ComparingResult2}
        \begin{tabular}{
                >{\centering}p{0.7cm} |
                >{\centering}p{3.0cm} |
                >{\centering}p{0.7cm} |
                >{\centering}p{1.1cm} | >{\centering}p{1.1cm} | >{\centering}p{1.1cm} | >{\centering}p{1.1cm} | >{\centering}p{1.1cm} | 
                >{\centering}p{1.2cm}
            }\hline
            {\tiny Source}
            &$(n,k,h)$ & $q$ & \makecell{AGB\\\cite{EC:BriOyg23}} & \makecell{ISD\\\cite{EC:LWYY24}} & \makecell{R-ISD\\\cite{C:EssSan24}} & \makecell{RP\\\cite{EPRINT:KimLee24}} & \makecell{HAGB\\\cite{AC:WWYLYZ25}} & Ours \tabularnewline\hline\hline
            \multirow{2}{*}{\!\cite{CCS:YWLZW20}}
            &$(609728,36288,1269)$ & \multirow{2}{*}{$2$} & 143 & 138 & 140 & 144 & 126.59 & \textbf{98.38}\tabularnewline
            &$(10805248,589760,1319)$ & & 164 & 152 & 157 & 155 & 149.80 & \textbf{110.23} \tabularnewline\hline
            \multirow{2}{*}{\!\cite{SP:WYKW21}}
            &$(642048,19870,2508)$ & \multirow{2}{*}{$2^{61}\!\!-\!\!1$} & 126 & 154 & 172 & 147 & 110.78 & \textbf{82.78}\tabularnewline
            &$(10805248,589760,1319)$ & & 165 & 160 & 177 & 155 & 149.48 & \textbf{110.23} \tabularnewline\hline
        \end{tabular}
\end{table}

\section{Conclusion}\label{sec:conclusion}
We revisited the concrete hardness of RSD and showed that substantial improvements are possible without introducing a fundamentally new solving paradigm. By jointly balancing zero guesses and syndrome-based eliminations across the regular blocks, refining the complexity accounting of the shared preprocessing, and sharpening the analysis of the final overdetermined system, we obtain significantly lower hybrid-algebraic complexities for many parameters used in the literature.

The concrete impact is substantial. On standard benchmark instances, the previous best attack complexity drops by up to 29 bits over non-binary fields and 25 bits over binary fields. Across the seven cryptographic applications analyzed here, at least one parameter set in every application falls below its claimed 128-bit target, with estimated losses of up to 66 bits. These results show that parameter selection for RSD-based constructions should account for the refined hybrid-algebraic complexity rather than relying on earlier estimates.

%% file: tex_files/Appendix.tex

\section{Experimental validations}

\subsection{Experimental evidence for heuristic assumptions
}\label{app:assumptions}

Our complexity analysis relies on two heuristic assumptions.
First, the submatrix $\bfH_{\bar{\sfB}_2}$ selected in Step~1 is assumed to be invertible with approximately the probability of a uniformly random square matrix over the underlying field.
Second, when the combined zero guess is incorrect, the final overdetermined system $\calF_2''$ is assumed to be inconsistent with high probability. 

We provide experimental evidence for the heuristic assumptions used in our complexity analysis and verify the algebraic recovery procedure of the refined attack.
Our experiments focus on RSD over the binary field $\mathbb{F}_2$.

This choice is deliberate. For the two heuristic assumptions considered here, the binary field provides a particularly informative test setting.
First, the probability that a random square matrix over $\mathbb{F}_2$ is invertible converges to approximately $0.2889$, whereas over a large field such as $\mathbb{F}_{2^{128}}$ it is essentially one.
Second, for a random overdetermined linear system with an excess of $\xi(N)$ equations, the expected number of solutions is bounded by $q^{-\xi(N)}$.
Hence, the predicted rejection of incorrect zero guesses becomes substantially stronger as the field size increases.
Over $\mathbb{F}_2$, both phenomena therefore remain nontrivial and can be observed empirically over many random instances.

Accordingly, our experiments are intended to test whether the qualitative behavior assumed in the analysis is already visible in this more informative binary setting, rather than to provide an exhaustive empirical validation over all finite fields.
The corresponding large-field estimates are based on the random-matrix and random-system analyses of Section~\ref{sssec:gauss}.

\paragraph{Invertibility in Step~1.}

For $n \in \{100,200,\ldots,1000\}$, we generate 10,000 random RSD instances for each $n$, and use the fixed ratios $k/n=0.2$ and $h/n=0.04$ as an interior operating point for the experiments.
These values lie away from the extreme parameter regimes appearing in Table~\ref{tbl:SummaryResult} and give a fixed regular block size $\beta=n/h=25$.
Keeping these ratios constant allows us to vary $n$ while preserving the same block structure and relative code/error rates, so that the experiments isolate the effect of increasing instance size.
The purpose of this choice is not to represent all parameter sets in Table~\ref{tbl:SummaryResult}, but to provide a stable and computationally manageable setting in which to examine the qualitative behavior of the heuristic assumptions.
The observed probability that $\bfH_{\bar{\sfB}_2}$ is invertible is approximately $0.289$ throughout the tested range, in agreement with the random-matrix prediction $p(\infty,2)\approx 0.2889$.
This supports the constant-factor treatment of this event used in our complexity estimates.

\paragraph{Rejection of incorrect zero guesses.}
We next test the random-system heuristic underlying the rejection of incorrect zero guesses.
As discussed in Section~\ref{ssec:linearalgebra}, when $m_2''-N_2''\geq \xi(N_2'')$, a random overdetermined system is expected to have at most $2^{-\xi(N_2'')}$ solutions on average.
We therefore measure the probability that $\calF_2''$ is inconsistent when constructed from an incorrect zero guess.

We perform two experiments, using 10,000 random instances for each parameter choice.
First, for $\operatorname{RSD}(100,20,2)$ with $(\eta_1,\eta,\eta_{\mathrm{pre}})=(9,9,6)$, we vary $a\in\{0.5,0.6,\ldots,3.0\}$ in $\xi(N)=aN$.
As shown in Fig~\ref{fig:AssumptionExp}~(a), the inconsistency probability increases with $a$ and approaches one as the excess number of equations increases.
Second, we fix $a=1$ and consider $\operatorname{RSD}(n,0.2n,0.04n)$ for $n\in\{50,75,\ldots,300\}$, using the optimized attack parameters for each instance size.
Fig~\ref{fig:AssumptionExp}~(b) shows that the inconsistency probability also increases with $n$ and is close to one for the larger tested instances.

These experiments are consistent with the random-system heuristic used in our complexity analysis.
They do not constitute a proof for the full parameter range considered in Table~\ref{tbl:SummaryResult}, but provide empirical evidence that incorrect zero guesses are rejected with high probability under the parameter choices used by the attack.

\begin{figure}[htpb]
    \centering
    \includegraphics[width=\textwidth]{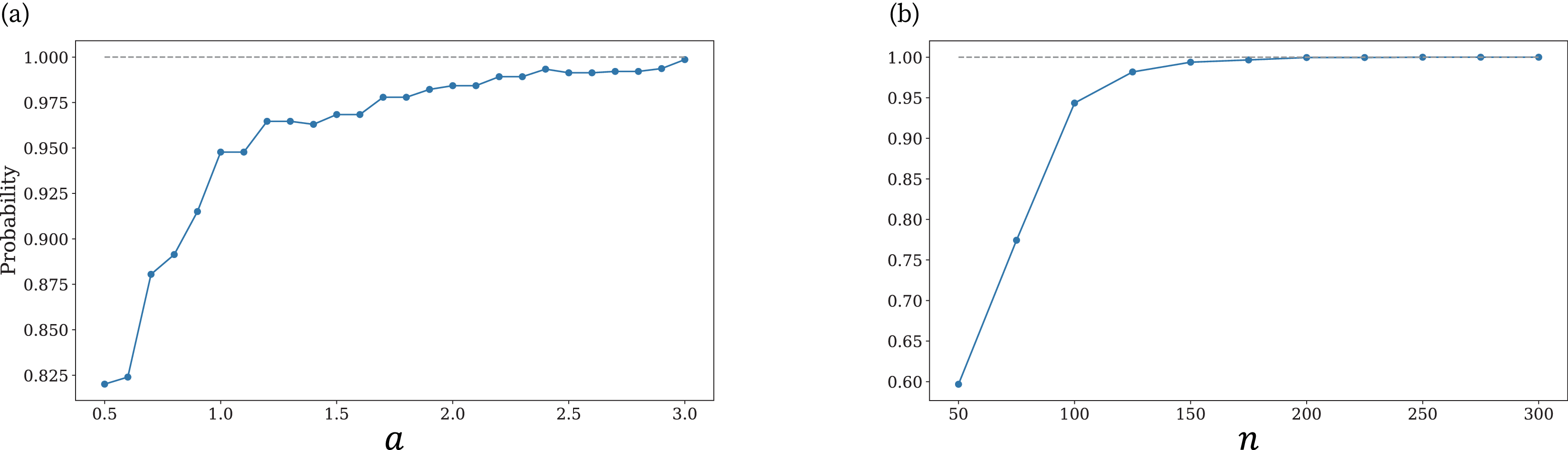}
	\label{fig:AssumptionExp}
	\caption{
        Empirical probability that the final system $\calF_2''$ is inconsistent under an incorrect zero guess, estimated from 10,000 random binary RSD instances per point. (a) $\operatorname{RSD}(100,20,2)$ with $(\eta_1,\eta,\eta_{\mathrm{pre}})=(9,9,6)$, varying $a$ in $\xi(N)=aN$.
        (b) $\operatorname{RSD}(n,0.2n,0.04n)$ for $n\in\{50,75,\ldots,300\}$ with $a=1$, using the optimized attack parameters.
	}
\end{figure}



\subsection{Recovery conditioned on correct zero guesses
}

We next verify that the algebraic stages of the refined attack recover a valid RSD solution when the zero guesses are correct.
To isolate this part of the attack from the potentially expensive search for correct guesses, we condition both guessing stages on a known solution $\mathbf{e}$, choosing $\bar{\sfB}_1$ and $\bar{\sfB}_4$ such that $\mathbf{e}_{\bar{\sfB}_1\cup\bar{\sfB}_4}=0$.
We then execute the remaining algebraic procedure, including syndrome-based elimination, preprocessing, linearization, and reconstruction, and verify the recovered vector against the original RSD instance.

We test 10 independently generated binary $\operatorname{RSD}(512,256,16)$ instances.
For these parameters, the estimated complexity of the full attack is approximately $2^{34.73}$ field operations with $(\eta_1,\eta,\eta_{\mathrm{pre}})=(224,225,14)$.
Under the conditioned correct-guess setting, the attack successfully recovers a valid solution in all 10 instances.
In each case, the recovered vector $\mathbf{x}$ satisfies $\bfH \mathbf{x}=\mathbf{s}$ and has exactly one nonzero coordinate in each regular block.

These experiments provide a direct sanity check of the algebraic recovery procedure under correct zero guesses.
They do not test the cost of finding such guesses, and therefore should not be interpreted as an empirical measurement of the total attack complexity.

\subsection{A toy example of the refined attack}

We illustrate the four stages of the refined attack described in Section~\ref{ssec:algorithmNonbinary}, incorporating the binary modifications from Section~\ref{ssec:difference} and the row budgets from Corollary~\ref{cor:binary}.
All computations are carried out over $\mathbb F_2$, so that $x_j^2=x_j$.
The example follows the sequence
\[
    \mathcal{F}
    \longrightarrow
    \mathcal{F}_1
    \longrightarrow
    \mathcal{F}_2
    \longrightarrow
    \mathcal{F}_2'
    \longrightarrow
    \mathcal{F}_2'',
\]
with the subsystem $\mathcal F_1'$ shared across the second-stage guesses.
For clarity, we display only the successful block partition and second guess.
In the recorded run, three earlier partitions were rejected because the corresponding linear coefficient matrices were singular.

\paragraph{The binary instance.}
Let $(n,k,h)=(18,10,3)$. The coordinates are divided into three blocks
\[
    \sfB_i=\{(i-1)\beta+1, \ldots, i \beta\}, \qquad i\in[h],
\]
which correspond to the three column groups below where $\beta := n/h = 6$.
{\setlength{\arraycolsep}{3pt}\small
\[
 [\,\bfH \mid \bfs \,]=
 \left[\begin{array}{@{}ccc|c@{}}
 \texttt{000101} & \texttt{100011} & \texttt{111001} & \texttt{0}\\
 \texttt{111100} & \texttt{000010} & \texttt{010111} & \texttt{0}\\
 \texttt{111011} & \texttt{011111} & \texttt{111101} & \texttt{1}\\
 \texttt{011111} & \texttt{101101} & \texttt{000001} & \texttt{0}\\
 \texttt{000100} & \texttt{010011} & \texttt{000000} & \texttt{1}\\
 \texttt{000110} & \texttt{000100} & \texttt{000101} & \texttt{1}\\
 \texttt{110110} & \texttt{110100} & \texttt{100101} & \texttt{0}\\
 \texttt{110111} & \texttt{100010} & \texttt{000010} & \texttt{1}
 \end{array}\right],\qquad
 \bfe=\texttt{000010}\,\texttt{000001}\,\texttt{100000}.
\]}%

The known solution satisfies $\operatorname{supp}(\bfe)=\{5,12,13\}$, with exactly one nonzero coordinate in each block.
We use $\bfe$ only to verify the final output; it is not used when choosing the guesses or solving the intermediate systems.

The initial polynomial system is
\begin{equation}\label{app:toy2:initial}
    \bfH \bfx = \bfs,\qquad
    \sum_{j\in \sfB_i}x_j=1,\qquad
    x_u x_v=0
    \quad(u<v,\ u,v\in \sfB_i),\quad i\in[h].
\end{equation}
In the binary setting, the block-sum equations provide additional linear constraints.
Together with the pairwise product equations, they enforce that each block contains exactly one nonzero coordinate.

\paragraph{Parameters and block partition.}
We choose
\[
    (\eta_1,\eta,\eta_{\fix})=(2,4,2),
    \qquad
    \eta_2=\eta-\eta_1=2.
\]
The balanced allocation is determined by
\begin{align*}
    (\omega,f_{\mathrm{lin}})&=(3,2),&
    (\mu_1,f_1)&=(0,2),&
    (\mu,f)&=(1,1),&
    (\nu,f_{\fix})&=(0,2).
\end{align*}

The successful block partition is shown below. The sets
$\bar \sfB_1,\bar \sfB_2$, and $\bar \sfB_3$ are fixed before preprocessing,
whereas $\bar \sfB_4$ and $\bar \sfB_5$ depend on the second-stage guess.

\begin{center}\small
    \setlength{\tabcolsep}{5pt}
    \begin{tabular}{@{}ccllll@{}}
        \toprule
        $i$ & $\sfB_{i,1}$ & $\sfB_{i,2}$ & $\sfB_{i,3}$ & $\sfB_{i,4}$ & $\sfB_{i,5}$\\
        \midrule
        1 & $\{4\}$ & $\{1,2,3,6\}$ & $\{5\}$ & $\varnothing$ & $\varnothing$\\
        2 & $\varnothing$ & $\{7,8,9,12\}$ & $\{10\}$ & $\{11\}$ & $\varnothing$\\
        3 & $\{18\}$ & $\{13,14,16\}$ & $\varnothing$ & $\{17\}$ & $\{15\}$\\
        \midrule
        $|\bar \sfB_j|$ & 2 & 11 & 2 & 2 & 1\\
        \bottomrule
    \end{tabular}
\end{center}

Thus,
\[
    \bar \sfB_1=\{4,18\},\qquad
    \bar \sfB_3=\{5,10\},
\]
while the successful second guess gives
\[
    \bar \sfB_4=\{11,17\},
    \qquad
    \bar \sfB_5=\{15\}.
\]

The corresponding binary monomial counts are
\begin{align*}
    n_1&=k-h-\eta_1=5,
    &N_1&=\textstyle\binom{n_1}{2}+n_1=15,\\
    n_2&=k-h-\eta=3,
    &N_2&=\textstyle\binom{n_2}{2}+n_2=6.
\end{align*}
The numbers of quadratic equations before row selection are
\begin{align*}
    m_1
    &=\sum_{i=1}^{h}\textstyle\binom{\beta-|\sfB_{i,1}|}{2}
      =10+15+10=35,\\
    m_2
    &=\sum_{i=1}^{h}
      \textstyle\binom{\beta-|\sfB_{i,1}|-|\sfB_{i,4}|}{2}
      =10+10+6=26.
\end{align*}

After the first guess, there are $\Delta_1=\binom12+\binom22+\binom22=2$ free monomials, whereas $\Delta_2=0$ after the second guess. Since no two positions in $\bar \sfB_3$ belong to the same block, we have $\delta_1=2$ and $\delta_2=0$.
Hence,
\begin{align*}
    \gamma&=\textstyle\binom{\eta_{\fix}}{2}=1,
    &\tau&=\sum_{i=1}^{h}|\sfB_{i,2}|\,|\sfB_{i,4}|=4+3=7,\\
    N_2'&=N_2-\gamma-\delta_2=5,
    &N_2''&=N_2'-\eta_{\fix}=3.
\end{align*}

The corresponding untruncated row counts used in the complexity analysis are
\[
    m_2'=m_2-\gamma-\delta_2=25,
    \qquad
    m_2''=m_2'-\eta_{\fix}=23.
\]
These quantities should not be confused with the numbers of rows actually
retained in this toy example.

\paragraph{Choosing the row budgets.}
For this small instance, we use a single excess equation, $\xi(N_2'')=1$.
This corresponds to the integer choice $\xi(N)=\lfloor \tfrac{1}{2}N\rfloor$ at $N=3$.

The backward row budgets are therefore
\begin{align*}
    \check m_2''&=N_2''+\xi(N_2'')=4,\\
    \check m_2'&=\check m_2''+\eta_{\fix}=6,\\
    \check m_1'&=\check m_2'+\tau=13,\\
    \check m_1&=\check m_1'+\gamma+\delta_1=16.
\end{align*}
Thus, only $\check m_1-\Delta_1=14$ non-free equations are selected for preprocessing.
The parameter constraints are satisfied:
\[
    \eta+\eta_{\fix}=6\le k-h=7,
\]
and
\[
    \left\lceil
    \frac{n-k+h+\eta+\eta_{\fix}}{h}
    \right\rceil
    =6\le\beta.
\]

\paragraph{Step 1. First stage of zero guessing.}
We first set $\mathbf x_{\bar \sfB_1}=0$.
Combining the syndrome equations with the three block-sum equations yields eleven linear equations in the sixteen remaining coordinates; the three additional equations arise from the binary block-sum constraints.

Let $L$ be the resulting $11\times16$ coefficient matrix and let
\[
    \mathbf b=(\bfs^{\mathsf T},1,1,1)^{\mathsf T}.
\]
For the successful partition, the selected submatrix $\bfL_{\bar \sfB_2}$ is invertible.
Writing $\bfQ = \bfL_{\bar \sfB_2}^{-1}$, we obtain
\begin{align*}
    \mathbf x_{\bar \sfB_2}
    =\bfQ\mathbf b+
    \bfQ \bfL_{[n]\setminus(\bar \sfB_1\cup\bar \sfB_2)}
    \mathbf x_{[n]\setminus(\bar \sfB_1\cup\bar \sfB_2)}.
\end{align*}
This is precisely the binary affine substitution used in Step~1 of the general attack.


Substituting these affine expressions into the pairwise product equations of~\eqref{app:toy2:initial} gives the system $\mathcal F_1$ in the five remaining variables
\[
    (x_5,x_{10},x_{11},x_{15},x_{17}).
\]
The two free equations are
\[
    x_{10}x_{11}=0,
    \qquad
    x_{15}x_{17}=0.
\]
Using the linearization notation of the main text, write $\mathcal F_1$ as
\[
    \bfM_1\mathbf z_1=\mathbf d_1,
\]
where $\mathbf z_1$ contains $N_1=15$ linear and square-free quadratic monomials.

We order the two free monomials first and eliminate their columns, obtaining
\[
    \hat{\bfM}_1=
    \begin{pmatrix}
        \bfI_{\Delta_1}&\bfO\\
        \bfO&\hat{\bfM}_1''
    \end{pmatrix},
    \qquad
    \Delta_1=2.
\]
The lower subsystem has $m_1-\Delta_1=33$ equations and $N_1-\Delta_1=13$ monomial columns.
We select 14$(=\check{m}_{1} - \Delta_{1})$ equations according to the row budget above.
As an example, suppose that Step~1 yields the following system:
{\small\setlength{\arraycolsep}{3pt}
\[
\begin{array}{@{}lll@{}}
x_{17} + x_{10} (x_{15} + x_{17}) = 0, &
x_{11} x_{17} = 0, &
x_{10}(x_{15} + x_{17}) = 0,\\
x_{5}(x_{11} + x_{15} + x_{17}) = 0&
x_{11}(x_{15} + x_{17}) = 0, &
x_{10} (1 + x_{17}) = 0, \\
x_{5}(1 \!+\! x_{15}) \!+\! x_{10}(1 \!+\! x_{15}) \!+\! x_{11} \!+\! x_{15} \!=\! 1, &
x_{5}x_{11} = 0, &
x_{5}x_{10} = 0,\\
x_{10}(1+x_{15} + x_{17}) + x_{11}x_{17} = 0, &
x_{17}(1+x_{5} + x_{10}) = 0, &
x_{17}(x_{10} + x_{11}) = 0\\
x_{15}(1 + x_{5} + x_{10}) = 0, &
x_{10}(x_{5} + x_{15} + x_{17}) = 0
\end{array}
\]}%

\paragraph{Step 2. Preprocessing via Gaussian elimination.}
Fixing $\bar \sfB_3$, define
\[
    \mathbf x_{\mathrm{pre}}
    =(x_5,x_{10})^{\mathsf T},
    \qquad
    \mathbf x_{\mathrm{var}}
    =(x_{11},x_{15},x_{17})^{\mathsf T}.
\]
The only quadratic monomial involving only preprocessing variables is $x_5x_{10}$.
Gaussian elimination isolates this monomial in an equation that also contains the monomial $x_{5}x_{10}$.
The remaining 13$(=\check{m}_{1}')$ equations form the selected shared subsystem $\mathcal F_1'$, with 12$(=N_{1} - \gamma - \delta_{1})$ monomial columns.
Each equation has the form
\begin{equation}\label{app:toy2:shared}
    \mathbf c_l^{\mathsf T}\mathbf x_{\mathrm{pre}}
    +\mathbf x_{\mathrm{pre}}^{\mathsf T}\bfD_l\mathbf x_{\mathrm{var}}
    +\mathbf x_{\mathrm{var}}^{\mathsf T}\bfE_l\mathbf x_{\mathrm{var}}
    +\mathbf e_l^{\mathsf T}\mathbf x_{\mathrm{var}}
    =d_l'.
\end{equation}
After Step~2, we obtain the following system:
{\small\setlength{\arraycolsep}{3pt}
\[
\begin{array}{@{}lll@{}}
x_{17} + x_{10} (x_{15} + x_{17}) = 0, &
x_{11} x_{17} = 0, &
x_{10}(x_{15} + x_{17}) = 0,\\
x_{5}(x_{11} + x_{15} + x_{17}) = 0&
x_{11}(x_{15} + x_{17}) = 0, &
x_{10} (1 + x_{17}) = 0, \\
x_{5}(1 \!+\! x_{15}) \!+\! x_{10}(1 \!+\! x_{15}) \!+\! x_{11} \!+\! x_{15} \!=\! 1, &
x_{5}x_{11} = 0, &
x_{17}(x_{10} + x_{11}) = 0\\
x_{10} x_{15} + x_{11}x_{17} = 0, &
x_{17}(1+x_{5} + x_{10}) = 0, \\
x_{15}(1 + x_{5} + x_{10}) = 0, &
x_{10}(x_{15} + x_{17}) = 0
\end{array}
\]}
Note that, in the above system, there are at most $|\mathbf{x}_{\rm pre}|$ equations satisfying $\bfc_l \neq 0$.
This is achieved during preprocessing.


\paragraph{Step 3. Second stage of zero guessing.}
For the successful second guess, choose
\[
    \bar \sfB_4=\{11,17\}
\]
and set $x_{11}=x_{17}=0$.
The remaining variable vector is
\[
    \mathbf x_{\mathrm{var}}'
    =\mathbf x_{\bar \sfB_5}
    =(x_{15}).
\]
Substitution into $\mathcal F_1$ gives $\mathcal F_2$, while substitution into the selected subsystem $\mathcal F_1'$ gives $\mathcal F_2'$.


In the chosen row basis, 5 of the 13 selected equations become identically zero, leaving eight nonzero equations involving $N_2'=5$ monomials.
Note that the structural quantity $\tau=7$ appearing in the backward row budget is not the number of visibly vanishing rows in this particular row-reduced representation.
Moreover, the eight displayed rows need not be linearly independent; repeated equations are retained.
After Step~3, we obtain the following system:
\begin{align*}
	x_5+x_{15}+x_5x_{15}&=1, &x_{10}&=0,\\
	x_{10}x_{15}&=0,
	&x_5x_{15}&=0,\\
	x_{15}+x_5x_{15}+x_{10}x_{15}&=0
\end{align*}
Note that, in this example, the monomial $x_{10}x_{15} = 0$ remains in four other equations.

\paragraph{Step 4. Solving the final system by linearization.}
Since the previous step yields $\check{m}_{2}''$ equations containing no linear terms in $\mathbf{x}_{\rm pre}$, the remaining subsystem $\mathcal{F}_{2}''$ consists of 4 equations and has $N_{2}''=3$ monomial ccolumns.
Using the monomial ordering of the main text, let
\[
    \mathbf z_2''
    =(x_5x_{15},\,x_{10}x_{15},\,x_{15})^{\mathsf T}.
\]
There is no additional quadratic monomial supported entirely on $\bar \sfB_5$, since $x_{15}^2=x_{15}$ over $\mathbb F_2$.
Retaining $\check m_2''=4$ equations gives
\begin{equation}\label{app:toy2:final}
    \bfM_2''\mathbf z_2''=\mathbf d_2'',
    \qquad
    \bfM_2''=
    \begin{pmatrix}
        1&0&0\\
        0&1&0\\
        0&1&0\\
        1&1&1
    \end{pmatrix},
    \qquad
    \mathbf d_2''=
    \begin{pmatrix}
        0\\0\\0\\0
    \end{pmatrix}.
\end{equation}
The coefficient matrix has rank $3=N_2''$, and the corresponding augmented matrix reduces to
\[
    \left[
    \begin{array}{ccc|c}
        1&0&0&0\\
        0&1&0&0\\
        0&0&1&0\\
        0&0&0&0
    \end{array}
    \right].
\]
Hence,
\[
    \mathbf z_2''=(0,0,0)^{\mathsf T}
\]
is the unique linearized solution.
Substituting this solution into the two recovery equations yields $x_5=1$ and $x_{10}=0$.
The recovered variables satisfy the required mixed-monomial identities, as well as the quadratic equation set aside during Step~2.
Finally, applying the affine substitutions obtained in Step~1 gives
\[
    \mathbf x
    =\texttt{000010}\,\texttt{000001}\,\texttt{100000}
    = \bfe.
\]
Direct verification confirms that
\[
    \bfH\mathbf x=\bfs
    \qquad\text{and}\qquad
    |\mathbf x_{\sfB_i}|=1
    \quad\text{for every }i\in[h].
\]
The final system also has the prescribed excess $\check m_2''-N_2''=1$.

This example illustrates the complete recovery procedure for a successful guess.
It is intended only to demonstrate the mechanics of the refined attack and does not, by itself, establish an overwhelming rejection probability for incorrect guesses.
In particular, consistency of a selected linearized subsystem must not be regarded as a substitute for verifying the solution against the original RSD instance.

\paragraph{System sizes.}
For reference, the following table summarizes the dimensions of the main linearized systems encountered above.
The column counts refer only to coefficient columns and should be distinguished from the untruncated equation counts $m_1,m_2,m_2'$, and $m_2''$ used in the complexity analysis.

\begin{center}\small
\setlength{\tabcolsep}{4pt}
\begin{tabular}{@{}clcc@{}}
\toprule
Step & System or operation & \multicolumn{2}{c}{Rows $\times$ columns}\\
\midrule
1 & Linear system after the first guess & $11\times16$ & $(n-k+h) \times (n-\eta_{1})$\\
1 & Non-free part of $\mathcal F_1$ & $33\times13$ & $(m_{1} - \Delta_{1}) \times (N_{1} - \Delta_{1} + 1)$\\
1 & Selected shared of $\mathcal F_1$ & $14\times13$ & $(\check{m}_{1} - \Delta_{1}) \times (N_{1} - \Delta_{1} + 1)$\\
2 & After preprocessing with selected $\calF_{1}$ & $13\times12$ & $(\check{m}_{1} - \gamma - \delta_{1}) \times (N_{1} - \gamma - \delta_{1} + 1)$\\
3 & $\mathcal F_2'$ & $8\times5$ & $(\check{m}_{2}' + \epsilon) \times N_{2}'$\\
4 & $\calF_2''$ & $4\times3$ & $\check{m}_{2}'' \times N_{2}'' $\\
\bottomrule
\end{tabular}
\end{center}
Note that, in the size of $\mathcal{F}_{2}'$, $\epsilon$ denotes the number of equations that remain after the second zero-guessing step due to the row-budget constraint.
In Step~3, at most $\tau$ equations can remain nonvanishing; therefore, we have $\epsilon \le \tau$.

\section{Asymptotic complexity} 

\subsection{Non-binary fields}\label{ssec:asymptoticNonbianry}

We now analyze the asymptotic complexity of the refined attack over non-binary fields. 
Our analysis follows the same asymptotic viewpoint as
in \cite{{AC:WWYLYZ25}}, but we keep the exact binomial term in the equation count.

Let $R=\frac{k}{n}, \rho = \frac{h}{n}$ and $\beta =\frac{n}{h}=\frac{1}{\rho}$.
We count the number of field operations over $\mathbb{F}_q$. 
Here, we focus on the large-field setting of Corollary~\ref{cor: nonamenow}, where
$\xi(N_2'')=c_{\rm excess}$ for some constant $c_{\rm excess}\in\mathbb{N}$.

For the asymptotic upper bound, it is sufficient to consider the simple choice $\eta_1=0, \eta_{\rm pre}=0$ and $\eta=rn$ for some $0\le r\le R$. 
This means that all zero guesses are made in the second guessing stage. 
This choice is not necessarily optimal for concrete parameters, but it gives a clean asymptotic upper bound. 
With this choice, the
number of remaining variables is $ n_2=k-\eta=(R-r)n$, and hence the number of linearized monomials is 
$N_2=\frac{n_2(n_2+3)}{2}
= \frac{(R-r)^2n^2+3(R-r)n}{2}$.
On the other hand, by the definition of $m_2$ in Section~\ref{ssec:complexityNonbinary}, after $\eta=rn$
balanced zero guesses we have, ignoring only the integer rounding among blocks,
\begin{align*}\textstyle
m_2 = h\binom{\beta-r/\rho}{2}
= \frac{n}{2\rho}(1-r)(1-r-\rho).
\end{align*}
The exact rounded expression is
\begin{align*}\textstyle
m_2 = f\binom{\beta-\mu-1}{2} +(h-f)\binom{\beta-\mu}{2},
\qquad
\mu = \left\lfloor \frac{\eta}{h}\right\rfloor,
\quad
f = \eta\bmod h.
\end{align*}
The continuous expression above is used only to make the asymptotic dependence
on $R$ and $\rho$ explicit.

A key point is that the preprocessing terms cancel in the final
overdeterminedness condition. Indeed, by the definitions in Section~\ref{ssec:complexityNonbinary},
\begin{align*}
m_2' = m_2-\gamma-\delta_2,\,\,
N_2' = N_2-\gamma-\delta_2
\quad\textup{and}\quad
m_2'' = m_2'-\eta_{\fix},\,\,
N_2'' = N_2'-\eta_{\fix}.
\end{align*}
Therefore, it holds $m_2''-N_2'' = m_2-N_2.$
Thus the condition $m_2''-N_2''\ge c_{\rm excess}$ is implied by
\begin{align*}\textstyle
\frac{n}{2\rho}(1-r)(1-r-\rho)
- \frac{(R-r)^2n^2+3(R-r)n}{2}
\ge c_{\rm excess}.
\end{align*}
Equivalently,
\begin{align}\label{eq:nonbinary-asymp-condition}\textstyle
(R-r)^2\rho n
+ 3(R-r)\rho
+ \frac{2c_{\rm excess}\rho}{n}
\le (1-r)(1-r-\rho).
\end{align}

We also record the asymptotic form of the guessing probability. 
Let $c=\left\lfloor r/\rho\right\rfloor$ and $\theta=r/\rho-c$. 
Then the exact rounded exponent of the balanced zero guessing is
$p_\eta^{-1}
= 2^{\rho n\Lambda_\beta(r)+O(1)}$
where
\begin{align}\label{eq:rounded-exponent}\textstyle
\Lambda_\beta(r)
= \theta\log_2\frac{1}{1-(c+1)/\beta}
+(1-\theta)\log_2\frac{1}{1-c/\beta}.    
\end{align}
If the integer rounding is ignored, or if $r\beta$ is an integer, then this
simplifies to
\begin{align}\label{eq:continuous-exponent}
p_\eta = (1-r)^h,
\quad
p_\eta^{-1} = 2^{\rho n\log_2(1/(1-r))}.
\end{align}


\begin{theorem}
\label{thm:nonbinary-asymp-poly}
Let $R = k/n$ and $\rho = h/n$. The refined attack solves
$\mathrm{RSD}(n,k,h)$ over non-binary fields in polynomial time if
\begin{equation}\label{eq:nonbinary-poly-cond}
R^2\rho n + 3R\rho \le 1-\rho
\end{equation}
equivalent to $\frac{n}{2\rho}(1-\rho)\ge \frac{R^2n^2+3Rn}{2}.$
In particular, this condition can be written as $R^2\rho \le \frac{1-\rho-o(1)}{n}.$
Following the coarse asymptotic convention, this is written as $R^2\rho \le n^{-1}.
$
\end{theorem}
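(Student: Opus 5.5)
The plan is to instantiate Corollary~\ref{cor: nonamenow} with no zero guessing: $\eta_1=\eta=0$, hence $\eta_2=0$ and $\textup{p}_{\eta_1}=\textup{p}_\eta=1$. We also take $\eta_{\fix}=0$, so that $\gamma=\zeta=0$ and $\delta_2=\Delta_2$. With these choices the total cost reduces to $\check{\txT}_1+\check{\txT}_2+\check{\txT}_3$, and each term is a Gaussian elimination or construction cost on matrices whose dimensions are at most $\textup{poly}(n)$ (at most $n^2$ rows and columns). The time bound is therefore immediately polynomial, and the real content of the proof is that the admissibility conditions of the corollary follow from \eqref{eq:nonbinary-poly-cond}.

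For the admissibility check I will verify the three constraints in turn. The constraint $\eta+\eta_{\fix}=0\le k$ is trivial. The block-size constraint becomes $\lceil (n-k)/h\rceil\le\beta$, which holds because $n-k\le n=h\beta$. For the excess condition I will use the cancellation noted above: since $m_2''-N_2''=m_2-N_2$, it suffices to have $m_2-N_2\ge c_{\textup{excess}}$. With $\eta=0$ we get $\mu=f=0$, so $m_2=h\binom{\beta}{2}=\frac{n}{2\rho}(1-\rho)$ exactly, with no rounding issue. Also $N_2=\frac{k(k+3)}{2}=\frac{R^2n^2+3Rn}{2}$. Multiplying $m_2\ge N_2$ by $2\rho/n$ gives exactly $R^2\rho n+3R\rho\le 1-\rho$, which is the claimed equivalence.

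The main obstacle is the constant excess $c_{\textup{excess}}$, since the displayed condition only gives $m_2\ge N_2$. I would handle it in one of two ways. The first is to absorb it: the term $2c_{\textup{excess}}\rho/n$ from \eqref{eq:nonbinary-asymp-condition} at $r=0$ is $o(1)$, and this is exactly what the restatement $R^2\rho\le(1-\rho-o(1))/n$ accounts for. The second is to use that the quantities are integers, so a slack of a constant number of equations can be obtained, for example by verifying candidates against the original instance, which handles a constant-dimensional solution space at polynomial cost. Finally, the coarse form $R^2\rho\le(1-\rho-o(1))/n$ follows by dividing the inequality by $n$ and noting that $3R\rho/n$ and the excess term are lower order, with $1-\rho$ bounded by $1$.
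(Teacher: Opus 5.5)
Your core computation is the paper's: with $\eta_1=\eta=\eta_{\fix}=0$ you get $\textup{p}_{\eta_1}=\textup{p}_\eta=1$, $m_2=h\binom{\beta}{2}=\frac{n}{2\rho}(1-\rho)$ and $N_2=\frac{R^2n^2+3Rn}{2}$, and $m_2\ge N_2$ is equivalent to \eqref{eq:nonbinary-poly-cond}. The paper phrases the cancellation as removing the $\Delta_2$ free equations together with their free monomials, which is your identity $m_2''-N_2''=m_2-N_2$ at $\eta_{\fix}=0$.

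The gap is in the step you call the main obstacle. That obstacle is created by routing through Corollary~\ref{cor: nonamenow}, and neither of your fixes proves the statement as written.
\begin{itemize}
\item Absorbing $2c_{\textup{excess}}\rho/n$ into an $o(1)$ gives polynomial time only under a slightly stronger hypothesis. It does not give it under $R^2\rho n+3R\rho\le 1-\rho$ itself, which is the main claim; the paper's $o(1)$ rewriting is only a loose restatement.
\item The integrality argument is false. The hypothesis allows $m_2=N_2$ exactly, and then integrality gives zero slack.
\item Enumerating ``a constant-dimensional solution space'' is not justified. Nothing in your argument bounds the kernel dimension $d$ (the number of excess equations does not control it), and a $d$-dimensional affine space has $q^d$ points, which is prohibitive for large $q$ such as $2^{128}$.
\end{itemize}

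The missing observation is what the excess is for. The condition $m_2''-N_2''\ge\xi(N_2'')$ is imposed in Step~4 and in the discussion of overdetermined systems so that an \emph{incorrect} zero guess yields an inconsistent linearized system. With $\eta_1=\eta=0$ there are no guesses, so there is nothing to reject.

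The linearized system is consistent, because the true regular solution satisfies it. Under the standing full-column-rank heuristic, $m_2-\Delta_2\ge N_2-\Delta_2$ already determines that solution uniquely, and it is then checked against the instance. Equivalently, instantiate Theorem~\ref{thm:main} with $\xi\equiv 0$ rather than the corollary. With that replacement, your proof coincides with the paper's.
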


\begin{proof}
Take $\eta_1=\eta=\eta_{\fix}=0.$
Then no zero guessing is performed, and hence $p_{\eta_1}=p_\eta=1.$
With this choice, after eliminating $n-k$ variables by the linear equations,
the resulting quadratic system has $m_2 = h\binom{\beta}{2}
= \frac{n}{2\rho}(1-\rho)$
equations, and the number of remaining variables is $n_2 = k = Rn.$
Therefore, the number of linearized monomials is
$N_2 = \frac{n_2(n_2+3)}{2}
= \frac{R^2n^2+3Rn}{2}.$

Among the $m_2$ equations, exactly $\Delta_2$ are free equations, and the corresponding $\Delta_2$ monomials are free monomials. Removing them yields a linearized system with
$m_2-\Delta_2$ equations and $N_2-\Delta_2$ monomials. Hence it is sufficient that
$m_2-\Delta_2 \ge N_2-\Delta_2$,
which is equivalent to $m_2 \ge N_2.$
Substituting the above expressions for $m_2$ and $N_2$, we obtain
\begin{align*}\textstyle
\frac{n}{2\rho}(1-\rho)\ge \frac{R^2n^2+3Rn}{2},
\end{align*}
equivalently, $R^2\rho n + 3R\rho \le 1-\rho$.
%
Under this condition, the resulting linearized system can be solved in
polynomial time. Since all remaining steps consist of Gaussian eliminations and
substitutions on matrices of polynomial size, the total complexity is polynomial
in $n$. \qed
\end{proof}

\begin{theorem}
\label{thm:nonbinary-asymp-upper}
Assume that $k \le n-h$. Then, for the choice $\eta_1=0$, $\eta=k$, and
$\eta_{\fix}=0$, the refined attack solves $\mathrm{RSD}(n,k,h)$ over
non-binary fields in time
$$
\mathrm{poly}(n)\cdot 2^{\rho n\Lambda_\beta(R)+O(1)}
$$
field operations.
\end{theorem}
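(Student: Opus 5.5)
We specialize Theorem~\ref{thm:main} (or Corollary~\ref{cor: nonamenow}) to $\eta_1=0$, $\eta=k$, $\eta_{\fix}=0$, and check three things: the admissibility conditions, that every per-trial cost is polynomial, and that the repetition factor is $\textup{p}_k^{-1}$.

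\textbf{Admissibility.} The condition $\eta+\eta_{\fix}=k\le k$ is immediate. The block-size condition reads $\lceil (n-k+k)/h\rceil=\lceil n/h\rceil=\beta\le\beta$, so it also holds; the hypothesis $k\le n-h$ is not needed here. The syndrome eliminations and zero guesses are then distributed as in Section~\ref{ssec:blockNonbinary}.

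\textbf{The last block position is free.} By the balanced allocation and the computation of $\max_i\sum_{j=1}^4|\sfB_{i,j}|$, each block has at most $\beta$ removed positions. The hypothesis $k\le n-h$ is where we need care. We would prefer that each block keep at least one non-guessed position, so the guess is not trivially wrong. To arrange this, we should guess only among the $\beta-|\sfB_{i,2}|$ non-eliminated positions. The conditional computation in the proof of Theorem~\ref{thm:main} shows the success probability still equals $\prod_i(1-|\sfB_{i,1}\cup\sfB_{i,4}|/\beta)=\textup{p}_k$. Since $\eta=k\le n-h$, the total guesses per block are at most $\lceil k/h\rceil\le\beta-1$, so every factor is positive.

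\textbf{Final system is trivial.} With $\eta=k$ we have $n_2=k-\eta=0$, hence $N_2=0$. Also $\gamma=\delta_2=0$ because $\eta_{\fix}=0$, so $N_2''=0$, and $m_2''-N_2''=m_2\ge0$. The excess condition $m_2''-N_2''\ge c_{\textup{excess}}$ would require $m_2\ge c_{\textup{excess}}$. We expect this to be the one delicate point. The cleanest route is to note that with zero unknowns, no linearization is needed at all. After the guess, $\bfx_{\bar{\sfB}_2}$ is determined by \eqref{eqn:linear} as $\bfQ\bfs$. Verifying this candidate against the original instance costs $\calO(n^2)$, which replaces the overdeterminedness filtering. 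So we would argue directly, following the proof of Theorem~\ref{thm:main}, rather than cite the condition verbatim.

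\textbf{Costs.} $\txT_1=\calO(\textup{C}_{\textup{ge}}(n-k,n)+\Delta_1+(m_1-\Delta_1)(k+1)^2)$ is polynomial, since $m_1\le h\binom{\beta}{2}\le n^2$. $\txT_2$ is a Gaussian elimination on matrices of size at most $n^2\times n^2$, hence $\textup{poly}(n)$. Because $\eta_1=0$, we have $\textup{p}_{\eta_1}=1$, so $\txT_1+\txT_2$ is paid once. $\txT_3=\calO(m_2'(k+1)k+1)$ is polynomial. The total is therefore $\textup{poly}(n)\cdot\textup{p}_k^{-1}$.

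\textbf{Rewriting $\textup{p}_k^{-1}$.} Write $c=\lfloor k/h\rfloor=\lfloor R/\rho\rfloor$ and $\theta=R/\rho-c$, so that $k\bmod h=\theta h=\theta\rho n$. By definition,
\[
-\log_2\textup{p}_k=\theta\rho n\log_2\tfrac{1}{1-(c+1)/\beta}+(1-\theta)\rho n\log_2\tfrac{1}{1-c/\beta}=\rho n\Lambda_\beta(R),
\]
using \eqref{eq:rounded-exponent} with $r=R$. The $O(1)$ term absorbs integrality effects, such as $k\bmod h$ being an integer and possible non-integrality of $\beta$ handled by zero-padding. Combining the pieces gives $\textup{poly}(n)\cdot2^{\rho n\Lambda_\beta(R)+O(1)}$.
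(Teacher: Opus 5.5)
Your proposal is correct and follows the paper's proof essentially step for step: $k\le n-h$ gives $\lceil k/h\rceil\le\beta-1$, so $\textup{p}_k>0$; the final system is empty, so each trial is a polynomial reduction plus a direct verification, which bypasses the excess condition exactly as the paper does; and $-\log_2\textup{p}_k=\rho n\Lambda_\beta(R)$ follows by evaluating the definition. One bookkeeping slip: $\txT_1+\txT_2$ is not paid only once here, because once $\bar{\sfB}_2$ is fixed the second-stage set $\bar{\sfB}_4$ is forced to be all of $[n]\setminus\bar{\sfB}_2$, so each new guess needs a fresh $\bar{\sfB}_2$ and a fresh Gaussian elimination — but that cost is polynomial, so your bound $\mathrm{poly}(n)\cdot\textup{p}_k^{-1}$ is unaffected.
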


\begin{proof}
Take $\eta_1=0$, $\eta=k$, and $\eta_{\fix}=0$, equivalently $r=R$. A regular error vector has exactly one nonzero entry in each block of size $\beta$. Hence each block contains exactly $\beta-1$ zero entries, and the whole vector contains exactly $h(\beta-1)=n-h$ zero entries. 
Therefore, the choice $\eta=k$ can be correct only if $k \le n-h.$
Under balanced zero guessing, the $k$ guessed-zero positions are distributed so that each block receives either $\lfloor k/h\rfloor$ or $\lceil k/h\rceil$ guesses. 
Since $k \le h(\beta-1)$,
no block is guessed entirely to zero. In particular, the success probability $p_k$ is positive.

With this choice, $n_2 = k-\eta = 0$ and therefore $N_2 = \frac{n_2(n_2+3)}{2}=0.$
Hence the final linearized system is empty. 
For each fixed zero-guess pattern,
all remaining coordinates are recovered from the linear equations obtained in the reduction step, 
and the resulting candidate is checked against the original
RSD instance. 
Each trial therefore costs only polynomial time.

By \eqref{eq:rounded-exponent}, specialized to $r=R$, the inverse success
probability of the balanced zero guessing is
$$
p_k^{-1}=2^{\rho n\Lambda_\beta(R)+O(1)}.
$$
Multiplying this by the polynomial per-trial cost gives the desired result.
\qed
\end{proof}

Under the continuous approximation in which the blockwise rounding is ignored,
the condition $k \le n-h$ becomes $R \le 1-\rho,$
and the guessing probability satisfies $p_k=(1-R)^h$.
Hence the above bound simplifies to
\begin{align*}
\mathrm{poly}(n)\cdot 2^{\rho n\log_2(1/(1-R))}.    
\end{align*}

\subsection{Binary fields}\label{ssec:asymptoticBinary}

\begin{theorem}\label{thm:binary-asymp-polynomial}
Let $R=k/n$ and $\rho=h/n$. The refined attack solves binary
$\mathrm{RSD}(n,k,h)$ in polynomial time if
$R\le \rho$ or
\begin{equation}\label{eq:binary-poly-exact}
(R-\rho)^2\rho n+(R-\rho)\rho \le 1-\rho.
\end{equation}
\end{theorem}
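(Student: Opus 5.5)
We mirror the proof of Theorem~\ref{thm:nonbinary-asymp-poly}, replacing the non-binary counts with the binary ones from Section~\ref{ssec:complexityBinary}. Throughout we take $\eta_1=\eta=\eta_{\fix}=0$, so no zero guessing is performed, $p_{\eta_1}=p_\eta=1$, and the total cost reduces to a constant number of Gaussian eliminations and substitutions on matrices of polynomial size. The proof splits into two cases: $R\le\rho$ and $R>\rho$.

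\emph{Case $R\le\rho$, i.e.\ $k\le h$.} In the binary setting we have $n-k+h$ linear equations: the syndrome equations plus the $h$ block-sum equations $\sum_{j\in\sfB_i}x_j=1$. Since $n-k+h\ge n$, Gaussian elimination on this $(n-k+h)\times n$ system (cost $\textup{C}^{\textup{bin}}_{\textup{ge}}$, polynomial) leaves at most $\max(0,k-h)=0$ free variables, up to the rank-deficiency caveat treated as a constant factor in the Remark of Step~1. Then $\bfx$ is determined directly and only needs to be checked against the RSD instance. The quadratic equations are not needed. The one subtlety is that $n_2=k-h$ would be negative, so this case must be handled separately rather than through the monomial count.

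\emph{Case $R>\rho$.} Here $n_2=k-h=(R-\rho)n$ variables remain after eliminating $n-k+h$ variables, and the binary monomial count is $N_2=n_2(n_2+1)/2=\frac{(R-\rho)^2n^2+(R-\rho)n}{2}$. With no guesses, $m_2=h\binom{\beta}{2}=\frac{n}{2\rho}(1-\rho)$. As in the non-binary proof, the $\Delta_2$ free equations and free monomials cancel, and the preprocessing terms $\gamma,\delta_2,\eta_{\fix}$ vanish, since $\eta_{\fix}=0$ and $m_2''-N_2''=m_2-N_2$ anyway. So linearization succeeds once $m_2\ge N_2$, up to the excess $\xi$. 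Multiplying by $2\rho/n$ turns $m_2\ge N_2$ into $(1-\rho)\ge(R-\rho)^2\rho n+(R-\rho)\rho$, which is exactly \eqref{eq:binary-poly-exact}. The remaining steps are Gaussian elimination on a polynomial-size matrix plus back-substitution through \eqref{eqn:linear}, so the total is polynomial.

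\emph{Main obstacle.} The delicate point is the excess requirement $m_2''-N_2''\ge\xi(N_2'')$ from Theorem~\ref{thm:main_bin}, with $\xi=aN_2''$ in the binary setting. Condition \eqref{eq:binary-poly-exact} only gives $m_2\ge N_2$. I would resolve this by noting that $\xi$ only serves to reject wrong guesses. Since there are no guesses here, the true solution always satisfies the linearized system. Instead of relying on $\xi$, we would argue, as in the non-binary polynomial-time theorem, that the linearized system determines the solution: any candidate is verified against the original instance, and if the solution space is not a singleton we absorb the slack into lower-order terms. The statement should be read in the same coarse convention as Theorem~\ref{thm:nonbinary-asymp-poly}, i.e.\ up to a constant-factor strengthening of the inequality. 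I would state explicitly that this matches the level of rigor of the non-binary counterpart.
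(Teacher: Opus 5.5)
Your proposal follows the paper's own argument essentially verbatim. It splits into $R\le\rho$, where $k-h\le 0$ leaves no degrees of freedom after the $n-k+h$ linear equations, and $R>\rho$, where you take $\eta_1=\eta=\eta_{\fix}=0$, cancel the $\Delta_2$ free equations and monomials, and reduce $m_2\ge N_2$ to exactly the stated inequality. The paper never raises your worry about the excess $\xi(N_2'')$; it simply treats $m_2\ge N_2$ as sufficient, so no ``constant-factor strengthening'' is needed to match its level of rigor. One minor slip: with $\eta_{\fix}=0$ one has $\delta_2=\Delta_2$, not $0$, but this is harmless because it cancels from both counts.
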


\begin{proof}
If $R\le \rho$, then $k-h\le 0$. Hence the syndrome equation together with the $h$ block-sum equations already leaves no asymptotic degree of freedom, and the instance can be handled in polynomial time.

Assume now that $R>\rho$, and take
$\eta_1=\eta=\eta_{\rm pre}=0.$
Then no zero guessing is performed, and hence
$p_{\eta_1}=p_\eta=1.$
After eliminating the $n-k+h$ linear equations, the number of remaining variables is
$n_2=k-h=(R-\rho)n.$
Therefore, the number of linearized monomials is
$N_2=\frac{n_2(n_2+1)}{2}
=\frac{(R-\rho)^2n^2+(R-\rho)n}{2}$.
On the other hand, the number of quadratic equations is
$m_2=h\binom{\beta}{2} =\frac{n}{2\rho}(1-\rho).$

Among these, exactly $\Delta_2$ equations are free equations, and the corresponding $\Delta_2$ monomials are free monomials. Removing them yields a linearized system with
$m_2-\Delta_2$ equations and $N_2-\Delta_2$
monomials. Hence it is sufficient that
$m_2-\Delta_2 \ge N_2-\Delta_2,$
which is equivalent to $m_2 \ge N_2.$
Substituting the above expressions gives
\begin{align*}\textstyle
\frac{n}{2\rho}(1-\rho)
\ge
\frac{(R-\rho)^2n^2+(R-\rho)n}{2},
\end{align*}
equivalently,
$(R-\rho)^2\rho n+(R-\rho)\rho \le 1-\rho.$

Under this condition, the resulting linearized system can be solved in
polynomial time. Since all remaining steps consist of Gaussian eliminations and
substitutions on matrices of polynomial size, the total complexity is polynomial
in $n$. \qed
\end{proof}

Under the continuous approximation, \eqref{eq:binary-poly-exact} becomes
$
(R-\rho)^2\rho \le \frac{1-\rho-o(1)}{n},
$
and, in the coarser convention, this is written as
$
(R-\rho)^2\rho \le n^{-1}.
$

\begin{theorem}\label{thm:binary-asymp-upper}
Let $R=k/n$ and $\rho=h/n$, and assume $R>\rho$. Then, for the choice
$\eta_1=0$, $\eta=k-h$, and $\eta_{\rm fix}=0$, the refined attack solves
binary $\mathrm{RSD}(n,k,h)$ in time
\begin{align*}
\mathrm{poly}(n)\cdot 2^{\rho n\Lambda_\beta(R-\rho)+O(1)}
\end{align*}
operations over $\mathbb{F}_2$.
\end{theorem}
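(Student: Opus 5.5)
The plan follows the proof of Theorem~\ref{thm:nonbinary-asymp-upper} essentially verbatim, with two binary-specific changes. First, the number of free variables after linear elimination is $k-h$ rather than $k$. Second, the guessing exponent is evaluated at $R-\rho$ instead of $R$.

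\textbf{Step 1: the parameter choice empties the final system.} Take $\eta_1=0$, $\eta=k-h$, $\eta_{\fix}=0$. In the binary setting the attack eliminates $n-k+h$ variables, using the syndrome equations together with the $h$ block-sum equations $\sum_{j\in\sfB_i}x_j=1$. Hence
\[
n_2=k-h-\eta=0 \quad\text{and}\quad N_2=\tfrac{n_2(n_2+1)}{2}=0 .
\]
The final linearized system is therefore empty. For each zero-guess pattern, all coordinates in $\bar{\sfB}_2$ are recovered directly from the affine substitution~\eqref{eqn:linear}, and the candidate is then checked against the instance. Each trial costs one Gaussian elimination of size $(n-k)\times(n-h)$, the block-sum reduction of Theorem~\ref{thm:main_bin}, and a linear-time verification, so it is polynomial in $n$.

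\textbf{Step 2: the blocks have room for all eliminations and guesses.} I must check the block-size condition of Theorem~\ref{thm:main_bin}:
\[
\ceil{(n-k+h+\eta+\eta_{\fix})/h}=\ceil{n/h}=\beta .
\]
This holds with equality, so every block is fully occupied by eliminated and guessed positions. I must also check that the guess has positive success probability. The balanced allocation puts $\floor{(k-h)/h}$ or $\ceil{(k-h)/h}$ zero guesses in each block. Since $k\le n$, we get $(k-h)/h\le\beta-1$, so no block is guessed entirely zero and $p_{k-h}>0$. The assumption $R>\rho$ guarantees $\eta=k-h>0$ is a legitimate guess count; otherwise Theorem~\ref{thm:binary-asymp-polynomial} already applies.

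\textbf{Step 3: the guessing exponent.} The success probability is $p_{k-h}$ as defined in Theorem~\ref{thm:main_bin}. Writing $\eta=rn$ with $r=R-\rho$, formula~\eqref{eq:rounded-exponent} gives $p_{k-h}^{-1}=2^{\rho n\Lambda_\beta(R-\rho)+O(1)}$. Multiplying by the polynomial per-trial cost (with $p_{\eta_1}=1$, so $\txT_1+\txT_2$ is incurred only once) yields the claimed bound.

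\textbf{Main obstacle.} The delicate point is a degenerate corner of Step~2. When $(k-h)/h$ is close to $\beta-1$, a block may receive $\beta-1$ guesses, leaving its single nonzero coordinate among the eliminated variables. This does not break correctness, since the event probability $1-|\sfB_{i,4}|/\beta$ is still positive, but it means $\Lambda_\beta$ can blow up. I would handle this simply by noting that the bound remains valid, if weak, whenever $p_{k-h}>0$.
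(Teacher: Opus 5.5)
Your proposal is correct and follows essentially the same route as the paper's proof: the final system is empty since $n_2=k-h-\eta=0$, $p_{k-h}>0$ follows from $(k-h)/h\le\beta-1$, and the exponent is read off from \eqref{eq:rounded-exponent} at $r=R-\rho$; your explicit block-size check is a small addition that the paper leaves implicit. Two harmless slips: first, here $\bar{\sfB}_4$ is forced to equal $[n]\setminus\bar{\sfB}_2$, so repetitions must redraw $\bar{\sfB}_2$ and redo the elimination (your Step~1 costing already charges this, so drop the remark that $\txT_1+\txT_2$ is paid only once); second, $\Lambda_\beta$ cannot blow up, because $c=\beta-1$ forces $\theta=0$ and hence $\Lambda_\beta\le\log_2\beta$.
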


\begin{proof}
Take
$\eta_1=0,\eta=k-h,$ and $ \eta_{\rm pre}=0.$
Under balanced zero guessing, the $k-h$ guessed-zero positions are distributed so that each block receives either $\lfloor (k-h)/h\rfloor$ or $\lceil (k-h)/h\rceil$ guesses. Since
$k-h \le n-h = h(\beta-1),$
no block is guessed entirely to zero. In particular, the success probability $p_{k-h}$ is positive.

With this choice, $n_2=k-h-\eta=0,$
and therefore $N_2=\frac{n_2(n_2+1)}{2}=0.$
Hence the final linearized system is empty. For each fixed zero-guess pattern, all remaining coordinates are recovered from the linear equations obtained in the reduction step, and the resulting candidate is checked against the original
RSD instance. Each trial therefore costs only polynomial time.

By \eqref{eq:rounded-exponent}, specialized to $r=R-\rho$, the inverse success probability of the balanced zero guessing is
$p_{k-h}^{-1}=2^{\rho n\Lambda_\beta(R-\rho)+O(1)}$.
Multiplying this by the polynomial per-trial cost gives
$
\mathrm{poly}(n)\cdot 2^{\rho n\Lambda_\beta(R-\rho)+O(1)}.
$ \qed
\end{proof}

Under the continuous approximation, the guessing probability satisfies
$p_{k-h}=(1-R+\rho)^h.$
Hence the above bound simplifies to
\begin{align*}
\mathrm{poly}(n)\cdot 2^{\rho n\log_2(1/(1-R+\rho))}.
\end{align*}

\begin{corollary}
Let $\txT_{\rm total}$ denote the complexity of the refined attack.
Under the continuous approximation used in the proof of
Theorem~\ref{thm:binary-asymp-upper}, the following asymptotic upper bounds
hold. 

\begin{enumerate}
    \item If $R$ and $\rho$ are constants, then
    $
    \txT_{\rm total}
    \le
    \mathrm{poly}(n)\cdot
    2^{\rho n\log_2(1/(1-R+\rho))}.
    $

    \item If $R$ is a constant and $\rho=o(1)$, then
    $
    \txT_{\rm total}
    \le
    \mathrm{poly}(n)\cdot
    2^{\rho n\log_2(1/(1-R))+o(\rho n)}.
    $
    In particular, in coarse $O(\cdot)$ notation,
    $
    \txT_{\rm total}=O\left(2^{\rho n\log_2(1/(1-R))}\right).
    $

    \item If $R=o(1)$ and $\rho=o(R)$, then
    $
    \log_2 \txT_{\rm total}
    \le
    \frac{\rho nR}{\ln 2}
    +o(\rho nR)
    +\mathrm{polylog}(n),
    $
    and hence
    $
    \txT_{\rm total}
    =
    O\left(2^{\rho nR/\ln 2}\right).
    $

    \item If $R=o(1)$ and $\rho=\lambda R$ for a constant $0<\lambda<1$, then
    $
    \log_2 T_{\rm total}
    \le
    \frac{\rho n(1-\lambda)R}{\ln 2}
    +o(\rho nR)
    +\mathrm{polylog}(n),
    $
    and hence
    $
    T_{\rm total}
    =
    O\left(2^{\rho n(1-\lambda)R/\ln 2}\right).
    $
\end{enumerate}
\end{corollary}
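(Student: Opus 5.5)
The plan is to derive all four items from the continuous bound of Theorem~\ref{thm:binary-asymp-upper}, namely $\txT_{\rm total}\le \mathrm{poly}(n)\cdot 2^{\rho n\,g(R,\rho)}$ with $g(R,\rho):=\log_2\frac{1}{1-R+\rho}=-\log_2(1-(R-\rho))$. Here $\mathrm{poly}(n)$ counts the Gaussian eliminations and substitutions per trial, contributing $O(\log n)=\mathrm{polylog}(n)$ to $\log_2\txT_{\rm total}$. Each item then reduces to an asymptotic estimate of $\rho n\, g(R,\rho)$ in the stated regime. The standing hypothesis $R>\rho$ is automatic in items~3 and~4, where $\rho=o(R)$ or $\rho=\lambda R$ with $\lambda<1$; in items~1 and~2 I would state it as assumed from the theorem, with $R<1$ so that the logarithm is finite.

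Item~1 is Theorem~\ref{thm:binary-asymp-upper} verbatim under the continuous approximation, so nothing needs to be done. For item~2, $R$ is fixed in $(0,1)$ and $\rho\to0$. The function $x\mapsto-\log_2(1-x)$ is continuously differentiable near $x=R$, so by the mean value theorem $g(R,\rho)=\log_2\frac{1}{1-R}+O(\rho)$. Hence $\rho n\, g=\rho n\log_2\frac1{1-R}+O(\rho^2 n)$, and $O(\rho^2n)=o(\rho n)$. The coarse form then follows by absorbing the polynomial factor and the $o(\rho n)$ term in the usual loose convention.

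For items~3 and~4, set $x:=R-\rho$, so $x\to0$. I would use the expansion $-\log_2(1-x)=x/\ln2+O(x^2)$, which holds uniformly for $x\le1/2$, say. In item~3, $\rho=o(R)$ gives $x=R(1-o(1))$, so
\[
\rho n\, g=\frac{\rho n R}{\ln 2}-\frac{\rho n\cdot o(R)}{\ln 2}+O(\rho n R^2).
\]
Both error terms are $o(\rho nR)$ since $R=o(1)$. Adding the $\mathrm{polylog}(n)$ contribution gives the stated bound. In item~4, $x=(1-\lambda)R$ exactly, so $\rho n\, g=\frac{\rho n(1-\lambda)R}{\ln2}+O(\rho nR^2)$, and again the error is $o(\rho nR)$.

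The only delicate point is the passage to the final $O(2^{\cdots})$ statements, since the $\mathrm{polylog}(n)$ and $o(\rho nR)$ terms are not literally $O(1)$ in the exponent. I would handle this by stating explicitly, as the corollary's ``coarse $O(\cdot)$ notation'' indicates, that these displays are understood up to subexponential factors. The rigorous content lies in the $\log_2\txT_{\rm total}$ inequalities, which I would present as the main claims. A secondary check is that $R>\rho$ and $R<1$ keep the logarithm finite, which holds in all four regimes.
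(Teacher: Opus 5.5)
Your proposal is correct and follows essentially the same route as the paper: both take the continuous bound $\mathrm{poly}(n)\cdot 2^{\rho n\log_2(1/(1-R+\rho))}$ from Theorem~\ref{thm:binary-asymp-upper} and derive items~2--4 from first-order expansions of $-\log_2(1-x)$, either around $x=R$ or as $x=R-\rho\to 0$. Your additional remarks make explicit several points the paper's proof leaves implicit: the uniformity of the $O(x^2)$ remainder, the bookkeeping showing the errors are $o(\rho nR)$, and the observation that the final $O(2^{\cdots})$ forms hold only up to subexponential factors.
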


\begin{proof}
By above, the leading exponential term under the
continuous approximation is
$
2^{\rho n\log_2(1/(1-R+\rho))}.
$
The first case is immediate.

If $R$ is constant and $\rho=o(1)$, then
$\log_2\frac{1}{1-R+\rho}
= \log_2\frac{1}{1-R}
+O(\rho),
$
which gives the second case.

If $R=o(1)$ and $\rho=o(R)$, then
$\log_2\frac{1}{1-R+\rho} =
\frac{R-\rho}{\ln 2} +O(R^2)
=
\frac{R}{\ln 2}
+o(R),
$
which gives the third case.

Finally, if $\rho=\lambda R$ with $0<\lambda<1$, then
$
\log_2\frac{1}{1-R+\rho}
=
\log_2\frac{1}{1-(1-\lambda)R}
=
\frac{(1-\lambda)R}{\ln 2}
+O(R^2),
$
which proves the fourth case.
\qed
\end{proof}